\documentclass[12pt]{amsart}

\usepackage[utf8]{inputenc}
\usepackage{cite}

\usepackage[usenames,dvipsnames]{xcolor}
\usepackage{pgfplots}
\usepackage{tikz}
\usepackage{tkz-euclide}

\usepackage{geometry}
\geometry{verbose,
tmargin=2cm,bmargin=2cm,lmargin=2cm,rmargin=2cm,
headheight=1cm,headsep=0.5cm,footskip=0cm,
nomarginpar 
}

\usepackage{thmtools}
\usepackage{thm-restate}
\usepackage{amsmath,amssymb}
\usepackage{thm-restate}
\usepackage{enumerate}
\usepackage{framed}
\usepackage{geometry}
\usepackage[normalem]{ulem}
\usepackage[foot]{amsaddr}
\usepackage{comment}

\setlength{\parindent}{0in}
\setlength{\parskip}{1ex}

\usepackage[unicode=true,bookmarks=false,
 breaklinks=false,pdfborder={0 0 1},backref=false,colorlinks=false,bookmarksdepth=3]
 {hyperref}

\hypersetup{colorlinks=true, linkcolor=Maroon, citecolor=OliveGreen, filecolor=magenta, urlcolor=Blue}

\newcommand{\eq}[1]{\hyperref[eq:#1]{(\ref*{eq:#1})}}
\renewcommand{\sec}[1]{\hyperref[sec:#1]{Section~\ref*{sec:#1}}}
\newcommand{\app}[1]{\hyperref[app:#1]{Appendix~\ref*{app:#1}}}
\newcommand{\theo}[1]{\hyperref[thm:#1]{Theorem~\ref*{thm:#1}}}
\newcommand{\algo}[1]{\hyperref[alg:#1]{Algorithm~\ref*{alg:#1}}}
\newcommand{\lemm}[1]{\hyperref[lem:#1]{Lemma~\ref*{lem:#1}}}
\newcommand{\defin}[1]{\hyperref[defn:#1]{Definition~\ref*{defn:#1}}}
\newcommand{\corr}[1]{\hyperref[cor:#1]{Corollary~\ref*{cor:#1}}}
\newcommand{\fig}[1]{\hyperref[fig:#1]{Figure~\ref*{fig:#1}}}
\newcommand{\propos}[1]{\hyperref[prop:#1]{Proposition~\ref*{prop:#1}}}
\newcommand{\rema}[1]{\hyperref[rem:#1]{Remark~\ref*{rem:#1}}}

%

\newtheorem{thm}{Theorem}[section]
\newtheorem{lem}[thm]{Lemma}
\newtheorem{prop}[thm]{Proposition}
\newtheorem{cor}[thm]{Corollary}
\newtheorem{prob}[thm]{Problem}
\newtheorem{conj}[thm]{Conjecture}
\theoremstyle{definition}
\newtheorem{dfn}[thm]{Definition}
\theoremstyle{remark}

\usepackage{algpseudocode}
\algnewcommand{\A}{\textbf{and}\space}
\algnewcommand{\Or}{\textbf{or}\space}
\algnewcommand{\Xor}{\textbf{xor}\space}
\algnewcommand{\Not}{\textbf{not}\space}

\algnewcommand{\Fixed}{\textbf{Fixed input:}\space}
\algnewcommand{\Input}{\textbf{Input:}\space}
\algnewcommand{\Offline}{\textbf{Offline:}\space}
\algnewcommand{\Online}{\textbf{Online:}\space}
\algnewcommand{\Output}{\textbf{Output:}\space}
\algnewcommand{\Spec}{\textbf{Specification:}\space}

\makeatletter
\let\OldStatex\Statex
\renewcommand{\Statex}[1][3]{%
  \setlength\@tempdima{\algorithmicindent}%
  \OldStatex\hskip\dimexpr#1\@tempdima\relax}
\makeatother

\newcommand{\fxinput}{\Statex[-1] \Fixed }
\newcommand{\inp}{\Statex[-1] \Input }
\newcommand{\out}{\Statex[-1] \Output }
\newcommand{\offline}{\Statex[-1] \Offline }
\newcommand{\online}{\Statex[-1] \Online }
\newcommand{\spec}{\Statex[-1] \Spec }
\algnewcommand{\assert}{\textbf{assert}\space}

\makeatother

\author{Vadym Kliuchnikov$^{1}$\email{vadym@microsoft.com}}
\author{Alex Bocharov$^{1}$\email{alexei.bocharov@microsoft.com}}
\author{Martin Roetteler$^{1}$\email{martinro@microsoft.com}}
\author{Jon Yard$^{1}$\email{jonyard@microsoft.com}}
\address{$^1$ Quantum Architectures and Computation Group, Microsoft Research, Redmond, WA, USA}


\newcommand{\teol}[0]{\tabularnewline \hline}

\def\z{\mathbb{Z}}

\def\c{\mathbb{C}}
\def\q{\mathbb{Q}}
\def\f{\mathbb{F}}
\def\r{\mathbb{R}}


\global\long\def\rg#1#2{#1,\ldots,#2}
\global\long\def\trg#1#2{#1\cdot\ldots\cdot#2}
\global\long\def\prg#1#2{#1+\ldots+#2}

\def\Re{\mathrm{Re}}
\def\Im{\mathrm{Im}}


\global\long\def\zf{\mathbb{Z}_{F}} 
\global\long\def\p{\mathfrak{p}} 

\global\long\def\i{\boldsymbol{i}}
\global\long\def\j{\boldsymbol{j}}
\global\long\def\k{\boldsymbol{k}}

\global\long\def\vp{\varphi}
\global\long\def\ve{\varepsilon}
\global\long\def\s{\sigma}
\global\long\def\vecs{\boldsymbol{\sigma}}
\global\long\def\mo{\mathcal{M}}


\global\long\def\UG#1#2#3#4{\left(\begin{array}{cc}
 #1  &  #2\\
#3  &  #4 
\end{array}\right)}


\def\rnd#1{\mathopen{}\left\lceil #1\right\rfloor\mathclose{}}
\def\ip#1{\mathopen{}\left\langle #1\right\rangle\mathclose{}}
\def\nrm#1{\mathopen{}\left\Vert #1\right\Vert\mathclose{}}
\def\floor#1{\mathopen{}\left\lfloor #1\right\rfloor\mathclose{}}
\def\ceil#1{\mathopen{}\left\lceil #1\right\rceil\mathclose{}}
\def\at#1{\mathopen{}\left(#1\right)\mathclose{}}
\def\appr#1{\mathopen{}\left(#1\right)'\mathclose{}}
\def\of#1{\mathopen{}\left[#1\right]\mathclose{}}
\def\set#1{\mathopen{}\left\{  #1\right\}\mathclose{}}

\def\ket#1{\mathopen{}\left|#1\right\rangle\mathclose{}}
\def\bra#1{\mathopen{}\left\langle #1\right|\mathclose{}}
\def\abss#1{\mathopen{}\left|#1\right|^{2}\mathclose{}}
\def\abs#1{\mathopen{}\left|#1\right|\mathclose{}}

\def\UG#1#2#3#4{{\at{\begin{array}{cc} #1 & #2 \\ #3 & #4 \end{array} }}}

\DeclareMathOperator{\nrd}{nrd}
\DeclareMathOperator{\atan}{atan}
\DeclareMathOperator{\tr}{Tr}
\DeclareMathOperator{\trd}{trd}

\DeclareMathOperator{\re}{Re}
\DeclareMathOperator{\im}{Im}

\DeclareMathOperator{\crt}{Circuit}

\DeclareMathOperator{\Cand}{Cand}
\DeclareMathOperator{\sln}{Sln}
\DeclareMathOperator{\term}{Term}
\DeclareMathOperator{\pterm}{PolyTerm}



\newcommand{\nc}{\newcommand}

\DeclareMathOperator{\disc}{disc}

\def\GL{\mathrm{GL}}

\makeatletter
\let\O\@undefined
\def\O{\mathrm{O}}

\def\SU{\mathrm{SU}}

\def\pmat#1{{\begin{pmatrix}#1\end{pmatrix}}}

\nc{\CA}{\mathcal{A}} \nc{\CB}{\mathcal{B}} \nc{\CC}{\mathcal{C}}
\nc{\CD}{\mathcal{D}} \nc{\CE}{\mathcal{E}} \nc{\CF}{\mathcal{F}}
\nc{\CG}{\mathcal{G}} \nc{\CH}{\mathcal{H}} \nc{\CI}{\mathcal{I}}
\nc{\CJ}{\mathcal{J}} \nc{\CK}{\mathcal{K}} \nc{\CL}{\mathcal{L}}
\nc{\CM}{\mathcal{M}} \nc{\CN}{\mathcal{N}} \nc{\CO}{\mathcal{O}}
\nc{\CP}{\mathcal{P}} \nc{\CQ}{\mathcal{Q}} \nc{\CR}{\mathcal{R}} 
\nc{\CS}{\mathcal{S}} \nc{\CT}{\mathcal{T}} \nc{\CU}{\mathcal{U}} 
\nc{\CV}{\mathcal{V}} \nc{\CW}{\mathcal{W}} \nc{\CX}{\mathcal{X}} 
\nc{\CY}{\mathcal{Y}} \nc{\CZ}{\mathcal{Z}}

\nc{\bA}{\mathbb{A}} \nc{\bB}{\mathbb{B}} \nc{\bC}{\mathbb{C}}
\nc{\bD}{\mathbb{D}} \nc{\bE}{\mathbb{E}} \nc{\bF}{\mathbb{F}}
\nc{\bG}{\mathbb{G}} \nc{\bH}{\mathbb{H}} \nc{\bI}{\mathbb{I}}
\nc{\bJ}{\mathbb{J}} \nc{\bK}{\mathbb{K}} \nc{\bL}{\mathbb{L}}
\nc{\bM}{\mathbb{M}} \nc{\bN}{\mathbb{N}} \nc{\bO}{\mathbb{O}}
\nc{\bP}{\mathbb{P}} \nc{\bQ}{\mathbb{Q}} \nc{\bR}{\mathbb{R}} 
\nc{\bS}{\mathbb{S}} \nc{\bT}{\mathbb{T}} \nc{\bU}{\mathbb{U}} 
\nc{\bV}{\mathbb{V}} \nc{\bW}{\mathbb{W}} \nc{\bX}{\mathbb{X}}
\nc{\bZ}{\mathbb{Z}}

\nc{\BA}{\mathbf{A}} \nc{\BB}{\mathbf{B}} \nc{\BC}{\mathbf{C}}
\nc{\BD}{\mathbf{D}} \nc{\BE}{\mathbf{E}} \nc{\BF}{\mathbf{F}}
\nc{\BG}{\mathbf{G}} \nc{\BH}{\mathbf{H}} \nc{\BI}{\mathbf{I}}
\nc{\BJ}{\mathbf{J}} \nc{\BK}{\mathbf{K}} \nc{\BL}{\mathbf{L}}
\nc{\BM}{\mathbf{M}} \nc{\BN}{\mathbf{N}} \nc{\BO}{\mathbf{O}}
\nc{\BP}{\mathbf{P}} \nc{\BQ}{\mathbf{Q}} \nc{\BR}{\mathbf{R}} 
\nc{\BS}{\mathbf{S}} \nc{\BT}{\mathbf{T}} \nc{\BU}{\mathbf{U}} 
\nc{\BV}{\mathbf{V}} \nc{\BW}{\mathbf{W}} \nc{\BX}{\mathbf{X}} 
\nc{\BY}{\mathbf{Y}} \nc{\BZ}{\mathbf{Z}}


\nc{\msA}{\mathscr{A}} \nc{\msB}{\mathscr{B}} \nc{\msC}{\mathscr{C}}
\nc{\msD}{\mathscr{D}} \nc{\msE}{\mathscr{E}} \nc{\msF}{\mathscr{F}}
\nc{\msG}{\mathscr{G}} \nc{\msH}{\mathscr{H}} \nc{\msI}{\mathscr{I}}
\nc{\msJ}{\mathscr{J}} \nc{\msK}{\mathscr{K}} \nc{\msL}{\mathscr{L}}
\nc{\msM}{\mathscr{M}} \nc{\msN}{\mathscr{N}} \nc{\msO}{\mathscr{O}}
\nc{\msP}{\mathscr{P}} \nc{\msQ}{\mathscr{Q}} \nc{\msR}{\mathscr{R}} 
\nc{\msS}{\mathscr{S}} \nc{\msT}{\mathscr{T}} \nc{\msU}{\mathscr{U}} 
\nc{\msV}{\mathscr{V}} \nc{\msX}{\mathscr{X}} \nc{\msW}{\mathscr{W}} 
\nc{\msY}{\mathscr{Y}} \nc{\msZ}{\mathscr{Z}}


\nc{\mfa}{{\mathfrak a}} \nc{\mfb}{{\mathfrak b}} \nc{\mfc}{{\mathfrak c}}
\nc{\mfd}{{\mathfrak d}} \nc{\mfe}{{\mathfrak e}} \nc{\mff}{{\mathfrak f}}
\nc{\mfg}{{\mathfrak g}} \nc{\mfh}{{\mathfrak h}} \nc{\mfi}{{\mathfrak i}}
\nc{\mfj}{{\mathfrak j}} \nc{\mfk}{{\mathfrak k}} \nc{\mfl}{{\mathfrak l}}
\nc{\mfm}{{\mathfrak m}} \nc{\mfn}{{\mathfrak n}} \nc{\mfo}{{\mathfrak o}}
\nc{\mfp}{{\mathfrak p}} \nc{\mfq}{{\mathfrak q}} \nc{\mfr}{{\mathfrak r}}
\nc{\mfs}{{\mathfrak s}} \nc{\mft}{{\mathfrak t}} \nc{\mfu}{{\mathfrak u}}
\nc{\mfv}{{\mathfrak v}} \nc{\mfw}{{\mathfrak w}} \nc{\mfx}{{\mathfrak x}}
\nc{\mfy}{{\mathfrak y}} \nc{\mfz}{{\mathfrak z}}

\nc{\mfA}{{\mathfrak A}} \nc{\mfB}{{\mathfrak B}} \nc{\mfC}{{\mathfrak C}}
\nc{\mfD}{{\mathfrak D}} \nc{\mfE}{{\mathfrak E}} \nc{\mfF}{{\mathfrak F}}
\nc{\mfG}{{\mathfrak G}} \nc{\mfH}{{\mathfrak H}} \nc{\mfI}{{\mathfrak I}}
\nc{\mfJ}{{\mathfrak J}} \nc{\mfK}{{\mathfrak K}} \nc{\mfL}{{\mathfrak L}}
\nc{\mfM}{{\mathfrak M}} \nc{\mfN}{{\mathfrak N}} \nc{\mfO}{{\mathfrak O}}
\nc{\mfP}{{\mathfrak P}} \nc{\mfQ}{{\mathfrak Q}} \nc{\mfR}{{\mathfrak R}}
\nc{\mfS}{{\mathfrak S}} \nc{\mfT}{{\mathfrak T}} \nc{\mfU}{{\mathfrak U}}
\nc{\mfV}{{\mathfrak V}} \nc{\mfW}{{\mathfrak W}} \nc{\mfX}{{\mathfrak X}}
\nc{\mfY}{{\mathfrak Y}} \nc{\mfZ}{{\mathfrak Z}}

\def\emph#1{{\bf #1}}
\nc{\proj}[1]{\ket{#1}\bra{#1}}
\nc{\braket}[2]{{\langle #1 | #2 \rangle}}

\def\sig{\sigma}

\def\norm#1{ {|\hspace{-.022in}|#1|\hspace{-.022in}|} }

\nc{\smfrac}[2]{\mbox{$\frac{#1}{#2}$}}

\def\bsone{{\boldsymbol{1}}}

\def\mmin{\mathrm{min}}

\def\poly{\mathrm{poly}}

\nc\mypar[1]{{\textbf{#1.}}}
\nc\nix[1]{{}}
\def\span{\mathrm{span}} 

\begin{document}

\title{A Framework for Approximating Qubit Unitaries}

\begin{abstract}
We present an algorithm for efficiently approximating of qubit unitaries over gate sets derived from totally definite quaternion algebras. It achieves $\ve$-approximations using circuits of length $O\at{\log\at{1/\ve}}$, which is asymptotically optimal. The algorithm achieves the same quality of approximation as previously-known algorithms for Clifford+T [arXiv:1212.6253],  V-basis [arXiv:1303.1411] and Clifford+$\pi/12$ [arXiv:1409.3552], running on average in time polynomial in $O\at{\log\at{1/\ve}}$ (conditional on a number-theoretic conjecture). Ours is the first such algorithm that works for a wide range of gate sets and provides insight into what should constitute a ``good'' gate set for a fault-tolerant quantum computer.
\end{abstract}

\maketitle




\section{Introduction}

Each time we build a new computing device, we ask the question: what problems can it solve? We wonder the same thing about the quantum computers we will build. When addressing such questions, we usually start with a crude analysis, asking how resources like time, memory, cost and the size of the computer scale with the problem size. In particular, how do these is how these resources depend on the particular gate set supported by a quantum computer? The  algorithm of Solovay and Kitaev~\cite{KSV:2002,DN:2005} shows that any two universal gate are equally good from the perspective of polynomially-scaling resources. However, once we can build a small quantum computer we will be asking more refined questions: How large of a problem can we solve on it?  How can we compile our algorithms in the most resource-efficient way possible?

Typically, a circuit implementing a quantum algorithm uses a large number of gates, or local unitaries.  Each local unitary must be compiled into the gate set supported by a target fault-tolerant quantum computer.  Whereas the unitary groups are uncountable, most promising 
quantum computer architectures known today (topological or based on error correcting codes) natively support only a {\it finite} set of unitary gates.
The problem of optimal compilation into circuits over such a gate set can be naturally formulated as that of approximation in such groups. 


In this paper, we focus on the problem of compiling circuits for single-qubit unitaries, i.e.\ that of approximation by finitely-generated subgroups of $\SU(2)$.  Let us start with a systematic description of the latter problem.
Let $\CG \subset \SU(2)$ be a finite set of $2\times 2$ unitary matrices, or \emph{gates}.
Given 
an arbitrary unitary
$U \in \SU(2)$, we want to express it in terms of unitaries from $\CG$. In most cases $U$ can not be expressed exactly using elements of $\CG$ and must be therefore approximated.
For a selected absolute precision $\ve$,  our task is to find a sequence of gates    $g_1,\dotsc,g_N \in \CG$ (usually called a circuit over $\CG$) such that $\nrm{U-g_N \cdots g_1 } \le \ve$. If we can approximate any unitary over the gate set $\CG$ (or in other words, if $\CG$ generates a dense subgroup $\ip{\CG}$ of $\SU(2)$), we call $\CG$ a \emph{universal}.
Given that each unitary can be so approximated, we may then ask for the shortest, or least costly, such circuit. 
A volume argument shows that there exist unitaries $U$ requiring circuits of length at least $C\log\at{1/\ve}$, where $C$ is a constant that depends on the gate set $\CG$. A natural question to ask is whether there is a matching upper bound, i.e.\ whether we can approximate any unitary using a circuit of length $O\at{\log\at{1/\ve}}$. To answer this question, one must employ non-trivial mathematical ideas~\cite{B1,H1,S1,S2}. For example, it was recently shown~\cite{B1} that such approximations exist if the unitaries in $\CG$ have entries that are algebraic numbers. All known gate sets associated to fault-tolerant quantum computing architectures have this property.

Unfortunately, the result \cite{B1} is non constructive. 
Furthermore, there is no obvious way to make it constructive that would realistically work for even moderately small precision target $\varepsilon$.
The result of \cite{B1} implies that brute-force search can yield approximations saturating the  lower bound. In practice, however, the precision of approximation achievable with brute force search is limited to $10^{-4}$ or $10^{-5}$. Ideally, we would like to have an algorithm that finds an $\ve$-approximation of a given unitary with a circuit of length $O\at{\log\at{1/\ve}}$ and, furthermore, we would like the algorithm to run in $O(\poly(\log\at{1/\ve})$ time.

Recently such algorithms were found for several gate sets such as Clifford+T~\cite{S,RS}, the V-basis~\cite{BGS}, Clifford+$R_z\at{\pi/6}$~\cite{BRS:2015b} and the braiding of Fibonacci anyons~\cite{KBS}. The question why it is possible to construct such an algorithm for these gate sets and what general properties such gate set should has been an outstanding challenge in the field. In this paper we for the first time present a general mathematical framework that enables the development of efficient approximation algorithms for entire families of gate sets, instead of for specific examples. We develop such an algorithm in the general setting for gate sets derived from totally definite quaternion algebras. We also implemented the algorithm and show results of applying it to a wide range of gate sets, including Clifford+$\sqrt{T}$. We also reproduce results from~\cite{S,BGS,BRS:2015b}. As noted in~\cite{KY1}, the Fibonacci gate set is related to indefinite quaternion algebras; we will analyze this case elsewhere as it requires more work. Our algorithm does not allow us to find absolutely optimal circuits given factoring oracle like this is done in~\cite{RS} for Clifford+T gate set. Achieving this is an interesting topic for future research.

The proof that our algorithm terminates and runs on average in polynomial time relies on a number-theoretic conjecture that generalizes and refines similar conjectures appearing in~\cite{S,BGS,BRS:2015b}. 
The mathematics behind conjectures of this type were recently studied in~\cite{Sarnak:2015} for Clifford+T, V-basis and some other gate sets. Results of our numerical experiments provide indirect evidence that some of results in~\cite{Sarnak:2015} can be true for a wider range of gate sets. This is related to the ``Golden Gates" introduced in~\cite{Sarnak:2015}. We discuss this in more detail in Section~\ref{sec:conjecture}. Next we state the problem of unitary approximation more formally and provide a high level overview of our approximation framework.


\subsection{Ancillae free approximation} Formally the problem of ancillae free approximation for single qubit gate sets can be stated as follows:



\begin{prob}[unitary approximation problem in two dimensions, UAP]\label{prob:uap}
Given
\begin{enumerate}  
\item finite universal unitary gate set $\CG \subset \mathrm{SU}(2)$
\item cost function $c:\CG\rightarrow \r^{+}$ ($c:\CG\rightarrow\set{1}$ corresponds to circuit length)
\item distance function $\rho$ on the set of unitaries
\item cost bound function $\mathrm{cost}_{\max}: \r^{+} \rightarrow \r^{+}$
\item target unitary $U$ from $U_{targ} \subset \mathrm{SU}(2)$
\item target precision $\ve$
\end{enumerate}
Find $\rg{g_1}{g_N}$ from $G$ such that $\rho\at{\trg{g_1}{g_N},U} \le \ve$ and $\sum_{k=1}^{N} c\at{g_k} \le \mathrm{cost}_{\max} \at{\ve} $.
\end{prob}
We say that the algorithm solves UAP in polynomial time, if it solves Problem \ref{prob:uap} for arbitrary unitaries $U$ from $U_{targ}$ and its runtime is polynomial in $\log(1/\ve)$. We also allow to spend arbitrary time on precomputation based on (1) -- (4) and store arbitrary amount of results of the precomputation. The set $U_{targ}$ can be equal to $\mathrm{SU}(2)$ or some its subset. For example, it can be the set of all unitaries $e^{i \vp Z}$ for $Z$ being Pauli $Z$ matrix and $\vp$ being arbitrary real number.

The hardness of solving UAP and the existence of the solution to it depends on the choice of cost bound function $\mathrm{cost}_{\max}$. We summarize known algorithms for solving UAP in Table~\ref{tab:algo}. In practice, for target precisions $10^{-10}$ to $10^{-30}$, the overhead from using the Solovay Kitaev algorithm can be between one to three orders of magnitude\cite{KMM1,KBS}. On the other hand, the methods based on brute force search find the best possible solution, but are frequently limited to precisions $10^{-5}$ or even less because their runtime and required memory scale exponentially with $\log(1/\ve)$ \cite{Fowler:2011,KMM1}. Methods \cite{RS,S,BGS} together with the one that is the focus of this paper (see also Table~\ref{tab:algo}, lines 3-8) are based on exact synthesis algorithm and produce results that has cost bound similar to brute force search but have polynomial runtime subject to some number theoretic conjectures~(similar to Conjecture~\ref{cnj:main}). In the next subsection we will explain the idea of this methods in more detail.

\begin{table}
{\small
\begin{centering}

\begin{tabular}{ l c c c c c }
\hline
\hline
\\[-2ex]
$\quad$ Algorithm & Gate set  & Cost     & Cost bound & Set of target & runtime as  \tabularnewline
          &   $\CG$        & function & function   & unitaries     & function of \tabularnewline
          &           & $c:\CG\rightarrow \r^{+}$ &  $\mathrm{cost}_{\max}$          & $U_{targ}$              & $\log(1/\ve)$ \tabularnewline
\\[-2ex]
\hline \hline\\[-2ex]
1. Solovay-Kitaev  &    any        & $c\at{g} = 1$ & $ O\at{\log^{3.97}\at{1/\ve}}$ & SU(2) & polynomial \tabularnewline
\\[-2ex]
 $\quad$ \cite{DN:2005,KSV:2002} & universal & & & & \tabularnewline
\\[-2ex]
\hline
\\[-2ex]
2. brute-force    & unitaries with &                &                            &  &    \tabularnewline
$\quad$ search \cite{Fowler:2011}       & algebraic entries  & $c\at{g} = 1$ &     $ O\at{\log(1/\ve)}$ &  SU(2) & exponential      \tabularnewline
\\[-2ex]
\hline
\\[-2ex]
3. \cite{S,RS,KMM3}        & Clifford+T     & $c($T$)=1$, &  $ O\at{\log(1/\ve)}$ & $e^{iZ\vp}$ & polynomial \tabularnewline
          &                & $c($Clifford$)=0$ & & & \tabularnewline
\\[-2ex]
\hline
\\[-2ex]
4. \cite{BGS,BBG:2014}        & V-basis     & $c(g)=1$ &  $ O\at{\log(1/\ve)}$ & $e^{iZ\vp}$ & polynomial \tabularnewline
\\[-2ex]
\hline
\\[-2ex]
5.  \cite{KBS}       & Fibonacci     & $c(g)=1$ &  $ O\at{\log(1/\ve)}$ & $e^{iZ\vp},e^{iX\vp}$ & polynomial \tabularnewline
\\[-2ex]
\hline
\\[-2ex]
6. \cite{BRS:2015b}        & Clifford+$e^{i\pi Z/12}$     & $c(e^{i\pi Z/12})=1$, &   $ O\at{\log(1/\ve)}$ & $e^{iZ\vp}$ & polynomial \tabularnewline
          &                & $c($Clifford$)=0$ & & & \tabularnewline
\\[-2ex]
\hline
\\[-2ex]
7. this paper  & totally definite     & $c(g)=1,$ &  $ O\at{\log(1/\ve)}$ & $e^{iZ\vp}$ & polynomial \tabularnewline
 & quaternion algebras & $c_{\mathrm{canonical}}$ & &   & \tabularnewline
\\[-2ex]
 \hline \hline
\end{tabular}
\end{centering}
\caption{\label{tab:algo} Known algorithms for solving the unitary approximation problem (UAP, Problem~\ref{prob:uap}). The distance function used is $\nrm{U-V}$, where $\nrm{U} = \frac{1}{2}\sqrt{\tr\at{UU^{\dagger}}}$
. Cost function $c_{\mathrm{canonical}}$ is defined in Section~\ref{sec:es} and also discussed in more details in Section~\ref{sec:examples}. }
}
\end{table}


\subsection{Approximation methods based on exact synthesis} \label{sec:esmethods}

Algorithms that solve UAP~(Problem~\ref{prob:uap}) and that are based on the exact synthesis algorithm (see Table~\ref{tab:algo}, lines 3-8) can be described by the flow chart on Figure~\ref{fig:online-flow}. The focus of this paper is such an algorithm that works for gate sets described by an arbitrary {\em totally definite} quaternion algebra and generalizes algorithms from \cite{S,BGS,BRS:2015b} (see also Table~\ref{tab:algo}, lines 3-6). We leave the analog of the algorithm for indefinite quaternion algebras for future work. In this subsection we first discuss what does it mean for the gate set to be described by totally definite quaternion algebra and next look at the flow on Figure~\ref{fig:online-flow} in more details.

We refer to Section~\ref{sec:basics} for definitions and more details discussion of mathematical objects discussed below. The aim of this part to explain connections between them and to the algorithm presented in this paper on a high level. We say that the gate set $\CG$ is described by quaternion algebra if the following list of objects can be specified and related to the gate set.

\begin{restatable}{dfn}{qgc}
\label{dfn:quaternion-spec} A quaternion gate set specification is a tuple $\ip{F,\s,a,b,\mo,S}$ where:
\begin{itemize} 
\item $F$ is a totally real number field and  $\s$ is an embedding of $F$ into $\r$
\item $a,b$ are elements of $F$ that define the quaternion algebra $\at{\frac{a,b}{F}}$ over $F$
\item $\mo$ is a maximal order of $\at{\frac{a,b}{F}}$
\item $S=\set{\p_1,\ldots,\p_M}$ is a set of prime ideals of $F$
\end{itemize}
\end{restatable}
Using the embedding $\s$ any quaternion $q$ from the quaternion algebra can be mapped to a special unitary $U_q \in SU(2)$. We discuss the precise construction of this map in Subsection~\ref{sec:quat-and-unitaries}. This map has the following important properties:
\[
 U_{q_1 q_2 } = U_{q_1} U_{q_2} , U^{\dagger}_{q}=U_{q^\ast}
\]
where $q^{\ast}$ is the conjugate of $q$. Let us also define the following closed under multiplication set
\[
\mo_S = \set{ q \in \mo : \nrd\at{q}\z_F = \trg{\p_1^{L_1}}{\p_M^{L_M}}, L_k \in \z,\,L_k \ge 0  },
\]
and call $\at{L_1,\ldots,L_M}$ the \emph{cost vector} of $q$. We will discuss the meaning of a cost vector in more details further in this section and also in Section~\ref{sec:es}. Above $\nrd\at{q}$ is the reduced norm of quaternion and $\z_F$ is a ring of integers of number field $F$. The set $\mo_S$ is closed under multiplication because $\mo$ is closed under multiplication and $\nrd\at{q_1 q_2} = \nrd\at{q_1} \cdot \nrd\at{q_2}$.

A simplified set of conditions that must hold for the  gate set to be described by the quaternion gate set specification is:
\begin{enumerate}
\item There must exist subset $\CG_Q$ of $\mo_S$ such that $\CG=\set{U_q : q \in \CG_{\mo,S} }$.
\item The group generated by $\CG$ must be equal to group $\set{U_q : q \in \mo_S }$.
\end{enumerate}
Condition (1) implies that group generated by elements from $G$ is a subgroup of $\set{U_q : q \in \mo_S }$. Condition (2) can be checked using the framework developed in the recent paper~\cite{KY1} given set $\CG_{\mo,S}$. We will give a brief overview of results in~\cite{KY1} in Subsection~\ref{sec:qfes-review}. One way of checking the condition (2) is to first compute a finite set of quaternions $\CG_{\mo,S}^{\star}$ such that every element of $\mo_S$ can be written as a product of elements of $\CG_{\mo,S}^{\star}$  and a scalar (using algorithms from~\cite{KY1}). We will say that $\CG_{\mo,S}^{\star}$ is a \emph{set of canonical generators} of $\mo_S$. Second, for each $q$ from $\CG_Q^{\star}$ find a representation of $U_q$ as a product of elements of $\CG$. For all $q$ from $\CG_{\mo,S}^{\star}$ we then can define:
\begin{equation}
\crt\at{q} = (U_1, \ldots, U_n) , \text{ where } U_q = U_1 \ldots U_n , U_k \in \CG. \label{eq:circuit}
\end{equation}
One natural way of defining the cost of elements of $\CG_{\mo,S}^{\star}$ is
\[
c\at{q} = \sum_{k=1}^n c\at{U_k}.
\]
For the other cost function definitions related to the cost vector of the quaternion see Section~\ref{sec:es}. To summarize we give the following definition:
\begin{restatable}{dfn}{qag}
\label{dfn:qag} We say that a gate set $\CG$ is described by quaternion algebra if the following data is defined:
\begin{enumerate}
\item A quaternion gate set specification $\ip{F,\s,a,b,\mo,S}$,
\item A set set of canonical generators $\CG_{\mo,S}^{\star}$ of $\mo_S$,
\item A map $\crt$ as described by equation (\ref{eq:circuit}).
\end{enumerate}
\end{restatable}

In Section~\ref{sec:examples} we will give examples of the gate sets described by quaternion algebras (including Clifford$+T$ and $V$ basis) and will explicitly specify (1)--(3) for each example. Now we have enough background to discuss the flow of our algorithm in more details.

Steps 2 and 3 in Figure~\ref{fig:online-flow} are the crucial in the algorithm. For simplicity we focus on the case when set $S$ consists of only one prime ideal. Our input to Step 2 is $L_1,\vp,\ve$. The output of Step 2 is a quaternion $q$ from $\mo_S$ such that its cost is equal $L_1$ and $U_q$ is within distance $\ve$ from $R_z\at{\vp}$. The fact that our target gate set $\CG$ is described by quaternion algebra (Definition~\ref{dfn:qag}) immediately implies that $U_q$ can be expressed as a circuit over $\CG$. Steps 3 and 4 construct such a circuit. In Step 3 we express $q$ as a product $q_1\ldots q_n$ of elements of $\CG_{\mo,S}^{\star}$ and a scalar from $F$ using exact synthesis algorithm described in~\cite{KY1}. In Step 4 we find a circuit for $U_q$ over gate set $\CG$ as concatenation of circuits for each $q_k$.

Let us now discuss Step 1 in more detail. To give some intuition we start with the Clifford+T gate set example~(analyzed in more details in Section~\ref{sec:examples}). In this case set $S$ contains precisely one prime ideal $\p_1$ and $L_1$ is greater or equal to the T-count of the resulting circuit. To ensure that approximation step succeeds the input to the algorithm must satisfy inequality
\[
L_1\log\at{N\at{\p_1}} \ge 4\log(1/\ve) + C_{\min} \text{, where } N\at{\p_1} \text{ is the norm of } \p_1.
\]
This reproduces result in~\cite{S} that the $T$-count scales as $4 \log_2(1/\varepsilon) + C_{\min}$  because $N(\p_1)=2$ holds for the Clifford$+T$ case. The bound also saturates the lower bound proved in~\cite{S} up to an additive constant. In this simple case, in Step 1 of Figure~\ref{fig:online-flow} we just assign $L_1 = \ceil{ (4\log(1/\ve) + C_{\min})/\log\at{N\at{\p_1}} }$. In our algorithm we precompute constant $C_{\min}$ based on quaternion gate set specification.

More generally, the cost vector $\at{L_1,\ldots,L_M}$ that we input to Step 2 must satisfy the following inequality
\[
L_1\log\at{N\at{\p_1}} + \ldots + L_M\log\at{N\at{\p_M}} \ge 4\log(1/\ve) + C_{\min}.
\]
The length of the circuit output by our algorithm is proportional to $L_1 + \ldots + L_M$ and therefore proportional to $\log(1/\ve)$ which is up to multiplicative factor is the best possible. Cost optimality up to an additive constant is more subtle and is dependent on the choice of cost function and the gate set. We will discuss this more in Section~\ref{sec:examples}.

\begin{figure}[ht!]
\rule[0.5ex]{1\columnwidth}{1pt}

\begin{centering}

\usetikzlibrary{matrix}
\usetikzlibrary{shapes} %
\usetikzlibrary{shapes.geometric} %
\usetikzlibrary{shapes.multipart} %

\begin{tikzpicture}[
	every text node part/.style={align=left},
	block/.style = {rounded rectangle,fill=yellow!20!white,draw=black,align=center, inner sep=0.5em} 
]

\fill[white] (-7,-5) rectangle (7,5);
\matrix(M)[  column sep=1cm, row sep=0.8cm,
  matrix of nodes,
  nodes=block ]{
 
{}   \\
1. Cost optimizer  \\
2. Quaternion Approximation \\
3. Exact synthesis algorithm \\
4. Circuit rewriter \\
\\
\\
{} \\
};

\fill[white] (M-1-1) circle (0.3cm);
\draw[->] (M-1-1) -- (M-2-1) node[midway,right] () {target angle $\varphi$, target pecision $\varepsilon$};
\draw[->] (M-2-1) -- (M-3-1) node[midway,right] () {Cost vectors, $\varphi,\varepsilon$};
\draw[->] (M-3-1) -- (M-4-1) node[midway,right] () {Quaternions};
\draw[->] (M-4-1) -- (M-5-1) node[midway,right] () {Products of quaternions };
\fill[white] (M-8-1) circle (0.3cm) node [above] (end) {""};
\draw[->] (M-5-1) -- (end) node[midway,right] () {Circuits over \\ the gate set $G$  such that \\ corresponding unitaries $U$ \\ satisfy: $d(R_z(\varphi),U)\le\varepsilon$ };

\end{tikzpicture}
\par\end{centering}

\rule[0.5ex]{1\columnwidth}{1pt}

\caption{\label{fig:online-flow} High level flow of the approximation algorithm. }
\end{figure}
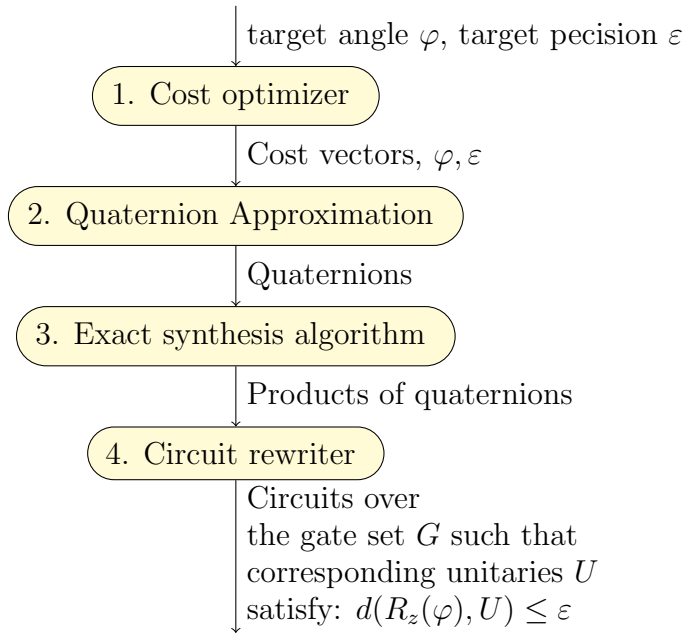

In some cases we might not have very fine control of the cost of output circuit using cost vector. In the worst case cost vector will allows us to control the cost of the output circuit up to multiplicative factor. In this situation we can use the following strategy to improve the cost of output
\begin{itemize}
\item In Step 1 make a request for several cost vectors
\item In Step 2 for each cost vector request several quaternions that achieve required quality of approximation
\item In Step 4 try to optimize final circuits using rewriting rules for gate set $\CG$ and pick the ones with minimal cost. Rewriting rules for the gate set $\CG$ can be obtained using exact synthesis framework~\cite{KY1}.
\end{itemize}

More formally the problem we are solving on Step 2 in Fig.~\ref{fig:online-flow} is the following: 

\begin{restatable}{prob}{qap}
(Quaternion approximation problem, QAP). \label{prob:qap}
Given
\begin{enumerate}  
\item A quaternion gate set specification $\ip{F,\s,a,b,\mo,S = \set{\p_1,\ldots,\p_M} }$,
\item target angle $\vp$ 
\item target precision $\ve$
\item target cost vector $\at{L_1,\ldots,L_M}$ satisfying 
\[
 L_1 \log N \at{\p_1} + \ldots + L_M \log N \at{\p_M} - 4\log\at{1/\ve} \in \of{C_{\min},C_{\max}},
\] where constants $C_{\min},C_{\max}$ depend only on the quaternion gate set specification.
\end{enumerate}
Find $q$ from the generalized Lipschitz order (see Section~\ref{sec:lip-order}) in quaternion algebra $\at{\frac{a,b}{F}}$ such that $\nrd\at{q}\z_F = \p_1^{L_1} \ldots \p_M^{L_M}$ and $\nrm{R_z\at{\vp} - U_q} \le \ve$, where $\nrm{U} = \frac{1}{2}\sqrt{\tr\at{UU^{\dagger}}}$.
\end{restatable}

The polynomial time algorithm (in $\log\at{1/\ve}$) for QAP gives us polynomial time algorithm for solving unitary approximation problem for gate sets that can be described by totally quaternion algebra. The circuit for $U_q$ can be found in time polynomial in $L_1,\ldots,L_M$ using exact synthesis algorithm from~\cite{KY1}. The cost of the resulting circuit is a linear function in $L_1,\ldots,L_M$ and therefore we can solve UAP with cost bound function that is in $\Theta\at{\log\at{1/\ve}}$. Next we go through the basic definitions needed to describe our solution to QAP. 


\section{Basic results and definitions}\label{sec:basics}

\subsection{Using quaternions to represent unitaries}\label{sec:quat-and-unitaries}

Let $F$ be a totally real number field of degree $d$. Let $\s_{1},\ldots,\s_{d}$
be embeddings of $F$ into $\r$. Let $\zf$ be a ring of integers
of $F$. Let $a,b$ be two totally negative elements of $\zf$ . In
other words for all $k=\rg 1d$ we have $\s_{k}\at a<0,\s_{k}\at b<0$.
We next consider a quaternion algebra $Q=\at{\frac{a,b}{F}}$ given by 
\[
Q=\set{a_{0}+a_{1}\i+a_{2}\j+a_{3}\k\,:\,a_{0},a_{1},a_{2},a_{3}\in F},
\]
where $\i^{2}=a$, $\j^{2}=b$ and $\k=\i\j=-\j\i$. The fact that
$a,b$ are totally negative implies that $Q$ is totally definite
quaternion algebra. The conjugate of a quaternion $q=a_{0}+a_{1}\i+a_{2}\j+a_{3}\k$
is defined as $q^{\ast}=a_{0}-a_{1}\i-a_{2}\j-a_{3}\k$. The reduced norm $\nrd$
and reduced trace $\trd$ are defined as 
\[
\nrd\at q=qq^{\ast},\; \trd\at q=q+q^{\ast}
\]
Let $\s=\s_{1}$ be a fixed embedding that we will use to construct
unitaries out of quaternions. We first define a homomorphism from quaternion
algebra into the algebra of complex $2\times2$ matrices as follows:
\begin{eqnarray}
h\at q&=&h\at{a_{0}+a_{1}\i+a_{2}\j+a_{3}\k}\nonumber \\ &=&\s\at{a_{0}}I+\s\at{a_{1}}\sqrt{\s\at a}Z+\s\at{a_{2}}\sqrt{\s\at b}Y+\s\at{a_{3}}\sqrt{-\s\at b\s\at a}X\label{eq:hom-q-to-m}
\end{eqnarray}
Here $I,X,Y,Z$ are the four Pauli matrices. Note that $h$ has additional nice properties:
\[
\det\at{h\at q}=\sigma\at{\nrd\at q},\:\tr\at{h\at q}=\sigma\at{\trd\at q}.
\]

To construct special unitaries out of quaternions we use the following
mapping: 
\begin{equation}
U\at q=\frac{h\at q}{\sqrt{\det\at{h\at q}}}=\frac{h\at q}{\sqrt{\s\at{\nrd\at q}}}\label{eq:q-to-u}
\end{equation}
Note that for any non-zero $\alpha$ from $F$ we have the following
\[
U\at{\alpha q}=\frac{\s\at{\alpha}}{\abs{\s\at{\alpha}}}U\at q=\pm U\at q.
\]

Let us now study the structure of image of $U\at q$ in more details.
We will express this structure using a number field that can be embedded
into $Q$. Let $K=F\at{\sqrt{a}}$ be a totally imaginary extension
of $F$ of degree $2$. Such number fields $K$ are called CM fields \cite{Washington:82}. This
is ensured by the condition that $\sigma_{k}\at a<0,k=\rg 1d.$ Let $\beta$
be an element of $K$ such that $\beta^{2}-a=0$. The degree of the
field $K$ is $2d$ and there are $2d$ embeddings of $K$ into $\c$.
Each element of $K$ can be represented as $a_{0}+a_{1}\beta$ where
$a_{0}$ and $a_{1}$ are from $F$. We can define $2d$ embeddings
of $K$ into $\c$ as following:
\begin{eqnarray*}
\s_{k,+}\at{a_{0}+a_{1}\beta} & = & \s_{k}\at{a_{0}}+i\s_{k}\at{a_{1}}\sqrt{\abs{\s_{k}\at a}}\\
\s_{k,-}\at{a_{0}+a_{1}\beta} & = & \s_{k}\at{a_{0}}-i\s_{k}\at{a_{1}}\sqrt{\abs{\s_{k}\at a}}
\end{eqnarray*}
Futher we will use notation $\s$ for $\s_{1,+}$ which is in agreement
with $\s=\s_{1}$ for elements of $F$ because for $a_{0}$ from $F$
we have $\s_{k,\pm}\at{a_{0}}=\s_{k}\at{a_{0}}$. 

Each element of the quaternion algebra 
\[
q=a_{0}+a_{1}\i+a_{2}\j+a_{3}\k=\at{a_{0}+a_{1}\i}+\at{a_{2}+a_{3}\i}\j
\]
 can be mapped to two elements of $K$ in the following way: 
\begin{eqnarray}
e_{1}\at q & = & a_{0}+\beta a_{1},\:e_{2}\at q=a_{2}+a_{3}\beta.\label{eq:nf-projection}
\end{eqnarray}
Conversely, the map $e_{1}^{-1}$ describes an embedding of $K$ into quaternion
algebra $Q$. Note that now homomorphism $h\at q$ can be written as:
\begin{equation}
h\at q=\UG{\s\at{e_{1}\at q}}{\s\at{e_{2}\at q}\sqrt{\abs{\s\at b}}}{-\s\at{e_{2}\at q}^{\ast}\sqrt{\abs{\s\at b}}}{\s\at{e_{1}\at q}^{\ast}}.\label{eq:unitary-CM-field}
\end{equation}
Using this notation we also have that: 
\[
\s\at{\nrd\at q}=\abs{\s\at{e_{1}\at q}}^{2}+\abs{\sigma\at b}\abs{\s\at{e_{2}\at q}}^{2}.
\]
Or in other words, in terms of relative norm $N_{K/F}$ we have: 
\[
\nrd\at q=N_{K/F}\at{e_{1}\at q}-bN_{K/F}\at{e_{2}\at q}.
\]
For any CM field we can define an automorphism $^{\ast}:K\rightarrow K$
which is called complex conjugation and which has the following properties: 
\begin{eqnarray*}
\s_{k,\pm}\at{\at{a_{0}+a_{1}\beta}^{\ast}} & = & \s_{k,\pm}\at{a_{0}+a_{1}\beta}^{\ast},\\
\at{a_{0}+a_{1}\beta}^{\ast} & = & a_{0}-a_{1}\beta.
\end{eqnarray*}
Using it we can express our relative norm $N_{K/F}\at x=xx^{\ast}$
and see that $\s_{k}\at{N_{K/F}\at x}=\abs{\s_{k,\pm}\at{a_{0}+a_{1}\beta}}^{2}.$
In addition we have that 
\[
e_{1}\at q^{\ast}=e_{1}\at{q^{\ast}}, \; e_{1}^{-1}\at q^{\ast}=e_{1}^{-1}\at{q^{\ast}}.
\]

\subsection{Distance to \texorpdfstring{$R_{z}$}{Rz} rotations } \label{sec:distance}

The distance functions that we use for unitaries are 
\begin{equation}
\rho\at{U,V}=\sqrt{1-\frac{\abs{\tr\at{UV^{\dagger}}}}{2}},\,d_{2}\at{U,V}=\frac{1}{2}\sqrt{\tr\at{\at{U-V}\at{U-V}^{\dagger}}}.\label{eq:pdist}
 \end{equation}
Our notation for $R_{z}$ is the following: 
\begin{eqnarray}
R_{z}\at{\vp} & =e^{-\frac{i\vp}{2}Z}= & \UG{e^{-i\vp/2}}00{e^{i\vp/2}}.
\end{eqnarray}


Let us now further analyze the distance between $R_z\at{\vp}$ and unitary $U_q$ for a given quaternion $q$: 
\[
\rho\at{R_{z}\at{\vp},U_q} = \rho\at{R_{z}\at{\vp},\frac{h\at q}{R}}.
\]
where $R=\sqrt{\s\at{\nrd\at q}}$ and $h\at q$ is defined by Equation~\ref{eq:hom-q-to-m}.
We can further rewrite it as: 
\[
\rho\at{R_{z}\at{\vp},\frac{1}{R}\UG z{-w^{\ast}}w{z^{\ast}}},
\]
where $z=\s\at{e_{1}\at q}$, $w=-\s\at{e_{1}\at q}^{\ast}\sqrt{\abs{\sigma\at b}}$
and $\abss x+\abss y=R^2$. Now we are going to solve the inequalities
: 
\begin{equation}
\rho\at{R_{z}\at{\vp},\frac{1}{R}\UG z{-w^{\ast}}w{z^{\ast}}}\le\ve\label{eq:Rz-dist-ineq}
\end{equation}

\begin{equation}
d_{2}\at{R_{z}\at{\vp},\frac{1}{R}\UG z{-w^{\ast}}w{z^{\ast}}}\le \ve\label{eq:Rz-dist2-ineq}
\end{equation}
Inequalities above do not constrain $w$. Introducing $z_0=R(1-\ve^2)e^{-i\vp/2}$ inequality (\ref{eq:Rz-dist-ineq}) simplifies to the following two inequalities:
\[
	\re\at{\at{z-z_0 }e^{i\vp/2}} \ge 0 \text{ or } \re\at{\at{z+z_0}e^{i\vp/2}} \le 0,
\]
and inequality (\ref{eq:Rz-dist2-ineq}) simplifies to 
\[
	\re\at{\at{z-z_0}e^{i\vp/2}} \ge 0.
\]
In additon, the fact that $U_q$ is a unitary matrix implies that $\abs{z}\le R$. See Figure~\ref{fig:inequality} for the visualization of the solution regions. We summarize result of the section in the following proposition: 
\begin{prop} \label{prop:distance} Let $q=e_1^{-1}\at{z_1}+e_2^{-1}\at{z_2}$ and let $\s_{1,+}\at{z_1}$ belong to the following set
\[
\set{ z \in \c : \re\at{\at{z-z_0}e^{i\vp/2}}\ge 0, \abs{z} \le R } 
\]
(blue region on Fig.~\ref{fig:inequality}) where $z_0=R(1-\ve^2)e^{-i\vp/2}$ and $R=\sqrt{\sigma_1\at{\nrd\at{q}}}$. Then $d_2\at{U_q,R_z\at{\vp}} \le \ve$ and $\rho\at{U_q,R_z\at{\vp}} \le \ve$. 
\end{prop}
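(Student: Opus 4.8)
The plan is to reduce both distance bounds to a single real inequality involving $\re\at{z\,e^{i\vp/2}}$ and then read the conclusion off the half-plane condition that defines the blue region. The key observation is that the overlap $\tr\at{U_q R_z\at\vp^{\dagger}}$ turns out to be real and to depend only on $z=\s\at{e_1\at q}$, so the off-diagonal data $w$ (equivalently $e_2\at q$) drops out entirely.

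First I would compute the overlap directly from the matrix forms. Writing $U_q=\frac1R\UG{z}{-w^{\ast}}{w}{z^{\ast}}$ and $R_z\at\vp^{\dagger}=\UG{e^{i\vp/2}}{0}{0}{e^{-i\vp/2}}$, the two diagonal entries of the product are $\tfrac1R z\,e^{i\vp/2}$ and its complex conjugate, so that
\[
\tr\at{U_q R_z\at{\vp}^{\dagger}} = \frac{2}{R}\,\re\at{z\,e^{i\vp/2}},
\]
which is manifestly real and independent of $w$. This is precisely why the region constrains only $z$.

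Next I would express both distances through this overlap. By definition $\rho\at{U_q,R_z\at\vp}^{2}=1-\tfrac12\abs{\tr\at{U_q R_z\at\vp^{\dagger}}}$, while expanding $\tr\at{\at{U_q-R_z}\at{U_q-R_z}^{\dagger}}$ and using that $U_q,R_z\at\vp\in\SU(2)$ (so $\tr\at{U_qU_q^{\dagger}}=\tr\at{R_zR_z^{\dagger}}=2$ and the two cross terms combine to $-2\re\at{\tr\at{U_q R_z\at\vp^{\dagger}}}$) gives
\[
d_2\at{U_q,R_z\at\vp}^{2}=1-\tfrac12\,\re\at{\tr\at{U_q R_z\at\vp^{\dagger}}}=1-\frac1R\,\re\at{z\,e^{i\vp/2}}.
\]
Since the overlap is real, one likewise gets $\rho^{2}=1-\frac1R\abs{\re\at{z\,e^{i\vp/2}}}$.

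Finally I would translate the region condition and conclude. Because $z_0=R\at{1-\ve^2}e^{-i\vp/2}$, the product $z_0 e^{i\vp/2}=R\at{1-\ve^2}$ is a nonnegative real, so the defining inequality $\re\at{\at{z-z_0}e^{i\vp/2}}\ge 0$ is exactly $\re\at{z\,e^{i\vp/2}}\ge R\at{1-\ve^2}$. Substituting into the $d_2$ identity gives $d_2^{2}\le 1-\at{1-\ve^2}=\ve^2$; and since this same inequality forces $\re\at{z\,e^{i\vp/2}}\ge R\at{1-\ve^2}\ge 0$, the absolute value in $\rho$ is redundant, so $\rho^{2}=d_2^{2}\le\ve^2$ and both bounds follow. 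I expect no genuine obstacle here, as the computation is essentially the bookkeeping already indicated in Section~\ref{sec:distance}; the only points requiring care are verifying that the overlap is real (which is what makes $\rho$ and $d_2$ coincide on this region rather than merely satisfy $\rho\le d_2$), and noting that the disk constraint $\abs{z}\le R$ in the region is automatic from unitarity, $\abs{z}^2+\abs{w}^2=R^2$, and hence plays no role in the estimate — only the half-plane condition is used.
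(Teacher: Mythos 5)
Your proposal is correct and follows essentially the same route as the paper: the paper's Section on distance to $R_z$ rotations derives exactly the identity $\tr\at{U_q R_z\at{\vp}^{\dagger}} = \frac{2}{R}\re\at{z e^{i\vp/2}}$ (observing that $w$ drops out), reduces $d_2\le\ve$ to the half-plane condition $\re\at{\at{z-z_0}e^{i\vp/2}}\ge 0$ and $\rho\le\ve$ to the union of two half-planes, and the proposition is just the summary of that computation. Your write-up merely supplies the algebraic details the paper leaves implicit, including the correct observations that the overlap is real and that the constraint $\abs{z}\le R$ is automatic from unitarity and unused in the bound.
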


\begin{figure}[t]
\rule[0.5ex]{1\columnwidth}{1pt}

\begin{centering}

\begin{tikzpicture}[scale=3,>=latex]
\pgfmathsetmacro{\textshift}{1}
\pgfmathsetmacro{\axessize}{1.4}
\pgfmathsetmacro{\xmax}{3}
\pgfmathsetmacro{\xmin}{-\xmax}
\pgfmathsetmacro{\D}{1.3} 
\pgfmathsetmacro{\eps}{0.5} 
\pgfmathsetmacro{\phi}{40} 
\pgfmathsetmacro{\cPh}{cos(-\phi/2)}
\pgfmathsetmacro{\sPh}{sin(-\phi/2)}
\pgfmathsetmacro{\SqrtD}{sqrt(\D)}
\pgfmathsetmacro{\SDE}{\SqrtD*(\eps*\eps-1)}
\pgfmathsetmacro{\DY}{sqrt(\D*\eps*\eps*(2-\eps*\eps)*\cPh*\cPh)}
\pgfmathsetmacro{\DX}{\DY*\sPh / \cPh}
\pgfmathsetmacro{\xf}{\SDE*\cPh}
\pgfmathsetmacro{\yf}{\SDE*\sPh}

\tkzDefPoint(0,0){O}
\tkzDefPoint(-\xf+\DX,-\yf-\DY){A}
\tkzDefPoint(-\xf-\DX,-\yf+\DY){B}
\tkzDefPoint( \xf+\DX, \yf-\DY){C}
\tkzDefPoint(\xf-\DX,\yf+\DY){D}
\tkzDefPoint(\xf,\yf){CC}
\tkzDefPoint(-\xf,-\yf){Z0}
\tkzDefPoint(\SqrtD * \cPh, \SqrtD * \sPh){R}

\clip(\xmin,-1.2) rectangle (\xmax,1.8);

\begin{scope}
\tkzClipCircle(O,R)
\fill [cyan!20!white] (A) -- (B) -- ($2*(B)$) -- ($2*(A)$);
\fill [magenta!20!white] (C) -- (D) -- ($2*(D)$) -- ($2*(C)$);
\end{scope}

\tkzDrawCircle(O,R)
\draw (A)--(B);
\draw (C)--(D);

\node[right] (pA) at (A) {$\, z_0 - ide^{-i\vp/2}$};
\node[right] (pB) at (B) {$\, z_0 + ide^{-i\vp/2}$};

\node[left] (pC) at (C)  {$\,- z_0 - ide^{-i\vp/2}$};
\node[left] (pD) at (D) {$\, -z_0 + ide^{-i\vp/2}$};
\node[right] (pC) at (CC)  {$\,- z_0$};
\node[right] (pC) at (Z0)  {$\, z_0$};

\node[right] (pD) at (R) {$\, R e^{-i\vp/2}$};

\draw[->] (O)--(R) node [midway, below, sloped] (tN) {$R$};
\draw[-] (0.5,0) arc (0:-\phi/2:0.5) node [midway, right ] (tN) {$\varphi/2$};
\draw[<->] ($-1*(R)$)--(CC) node [midway, below, sloped ] (tN) {$\ve^2 R$};

\draw[->,gray] (O)--(\axessize,0) node [left,below,black] (tN) {$\quad x=\Re(z)$};
\draw[->,gray] (O)--(0,\axessize) node [above,right,black] (tN) {$y=\Im(z)$};
\node () at (O) {} node [left] (tN) {$(0,0)$};

\node () at (1.7,1.1) {$\begin{array}{lcl}%
z_0 & = & R(1-\ve^2)e^{-i\vp/2} \\ %
d & = & R\ve\sqrt{2-\ve^2}\\ %
\end{array}$};

\tkzDrawPoints[color=black,fill=red,size=6](O,A,B,C,D,R,CC,Z0);

\end{tikzpicture}
\par\end{centering}

\rule[0.5ex]{1\columnwidth}{1pt}

\protect\caption{\label{fig:inequality}Colored regions correspond to complex numbers
$z$ that satisfy inequality~\ref{eq:Rz-dist-ineq} and $\abs{z}\le R$ given parameters
$\protect\vp,\protect\ve,R$. The blue region corresponds to complex numbers
$z$ that satisfy inequality~\ref{eq:Rz-dist2-ineq} and $\abs{z}\le R$. The vertical axis
is $\protect\im\protect\at z$ and the horizontal axis is $\protect\re\protect\at z$.}
\end{figure}
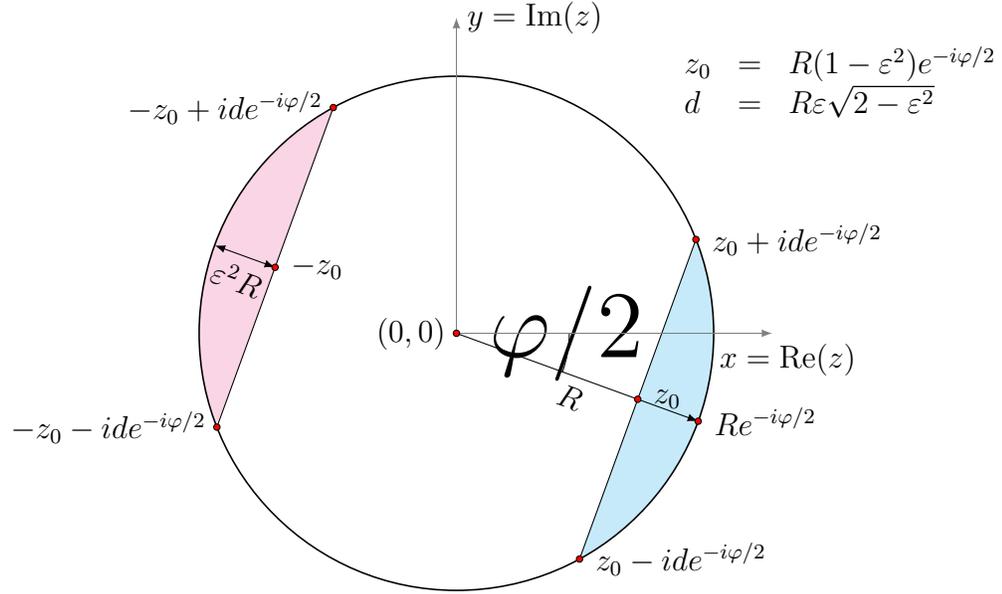



\subsection{Exact synthesis results for totally definite quaternion algebras} \label{sec:qfes-review}

\label{sec:es}

We recall several definitions about rings, ideals, orders, and quaternions that are required to study a special case of \cite{KY1}, namely the case of  totally definite quaternion algebras. We closely follow Section 2 of~\cite{KY1}. For further references to the literature and other facts used in the following we refer to the reference section of \cite{KY1}. Other good references on the topic include \cite{Vigneras:1980,Reiner:1975,Gille:2006,Lam:2005,KV}. 

Let $Q$ be a quaternion algebra over number field $F$ (defined in Section~\ref{sec:quat-and-unitaries}). A \emph{$\z_F$-lattice} $I$ is finitely generated $\z_F$-submodule of $Q$ such that $F\,I = Q$. In other words $I$ has a full rank in $Q$. An \emph{order} $\CO$ is a $\z_F$-lattice that is a subring of $Q$. An order is a \emph{maximal order} if it is not properly contained in any other order. There is a right and left order associated with any $\z_F$ lattice $I$ defined as 
\[
 \CO_R\at{I} = \set{ q \in Q : Iq \subset I }, \CO_L\at{I} = \set{ q \in Q : q I \subset I }.
\]
When we want to emphasize that $I$ has particular right and left order we call $I$ \emph{right-$\CO_R\at{I}$ fractional ideal} or \emph{left-$\CO_L\at{I}$ fractional ideal}. A fractional right-$\CO$ ideal is a \emph{normal ideal} if the order $\CO$ is a maximal order. Note that, order $\CO_R\at{I}$ is maximal if and only if $\CO_L\at{I}$ is maximal. All \emph{normal ideals} are invertible. A normal ideal $I$ is \emph{principal} if $I = q \CO_R{I}$ for some $q$ from $Q$. 

A normal ideal $I$ is two sided $\CO$-ideal if $\CO_R\at{I} = \CO_L\at{I} = \CO $. The principal two sided $\CO$-ideals form a subgroup of the group of all two sided $\CO$-ideals (under multiplication). The quotient of the group of all two sided $\CO$ ideals modulo principal two sided $\CO$-ideals is the \emph{two-sided ideal class group} of $\CO$. It is known that the two sided ideal class group of $\CO$ is always finite. The \emph{two sided class number} of $Q$ is the size of the two-sided ideal class group of any maximal order of $Q$. It known that the size of two-sided ideal class group is independent on the choice of maximal order $\CO$. Here we restrict our attention to the special case of results from \cite{KY1} for totally definite quaternion algebras with two sided class number $1$. All examples considered in this paper has this property.

Let $\disc\at{q_1,\ldots,q_4} = \det(\trd\at{q_i q_j}_{i,j =\rg{1}{4}})$. 
The \emph{discriminant} of an order $\CO$ is $\z_F$ ideal generated by the set
\[
 \set{ \disc\at{q_1,\ldots,q_4} : q_1, \ldots, q_4 \in   \CO  }.
\]
It turns out that the discriminant always is a square. Its square root is \emph{the reduced discriminant} denoted by $\disc\at{\CO}$. All maximal orders in quaternion algebra $q$ has the same discriminant. The reduced norm of $\z_F$ lattice $I$ is $\z_F$ ideal $\nrd\at{I}$ generated by $\set{ \nrd\at{q} : q \in I }$. 

The unit group $\CO^{\times}$ of $\CO$ is $\set{ q \in \CO : \nrd{q} \in \z^{\times}_F }$ where $\z^{\times}_F$ is a unit group of $\z_F$. For orders in totally definite quaternion algebras the quotient group $\CO^{\times} / \z^{\times}_F$ is always finite. The \emph{normalizer} of order $\CO$ is the set 
\[
\mathrm{Normalizer}\at{\CO} = \set{ q \in \CO : q \CO q^{-1} = \CO }. 
\]
which is a monoid under under multiplication. For totally definite quaternion algebras the quotient $\mathrm{Normalizer}\at{\mo} / \z_F$ (considered as a quotient of two monoids) is finite similarly to $\CO^{\times} / \z^{\times}_F$. 

We say that $\nrd\at{q}$ is \emph{supported} on the set $S$ of primes ideals of $\z_F$ if: 
\begin{equation} \label{eq:norm-val}
\nrd\at{q}\z_F = \prod_{\p \in S} \p^{v(q,\p)}
\end{equation} We also recall that map $T_2 : F \rightarrow \r^{+}$ is defined as:
\begin{equation}
T_2\at{x} = \sum_{k=1}^d \s^2_k \at{x}.
\end{equation}

We have now all definitions in place to state the special case of the one of main results of~\cite{KY1} (Theorem 3.18) for totally definite quaternion algebras with two sided class number 1.
\begin{thm} \label{thm:es:main}
Let $Q$ be a totally definite quaternion algebra over totally real number field $F$ with two sided class number one, let $\mo$ be a maximal order in $Q$, let $S = \at{\p_1,\ldots,\p_M}$ be a finite set of prime ideals of $\z_F$. There exists set $\mathrm{gen}_S\at{\mo}$  such that every quaternion $q$ from the set 
\[ 
\mo_S = \set{ q \in \mo : \nrd\at{q} \text{ is supported on S } } 
\]
can be written as the product $q_1 \ldots q_n q_{\mathrm{rem}} $ where $q_1, \ldots, q_n$ are from $\mathrm{gen}_S\at{\mo}$ and  $q_{\mathrm{rem}}$ is from $\mathrm{Normalizer}\at{\mo}$. If all ideals from $S$ do not divide $\disc\at{\mo}$ then $q_{\mathrm{rem}}$ is from $\mo^{\times} $. 

There exist algorithms for deciding if the set  $\mathrm{gen}_S\at{\mo}$ is finite and computing it if this is the case. There is also an algorithm for finding factorization $q_1 \ldots q_n q_{r}$ in time polynomial in $\log T_2\at{\nrd\at{q}}$. 
\end{thm}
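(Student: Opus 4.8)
The plan is to obtain this statement as the specialization of Theorem~3.18 of~\cite{KY1} to the case at hand, namely a totally definite quaternion algebra $Q$ whose maximal order $\mo$ has two sided class number one. First I would recall the general statement of~\cite{KY1} and match the notation, so that the generating set $\mathrm{gen}_S\at{\mo}$, the factorization into generators, and the remainder term all correspond to the objects produced there. The substance of the argument is then to verify that the two standing hypotheses — total definiteness and two sided class number one — collapse the general conclusion into the cleaner form stated here, and to extract the quantitative running time.

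Total definiteness enters through the reduced norm. Because $a,b$ are totally negative, $\s_k\at{\nrd\at q} = \abs{\s_k\at{e_1\at q}}^2 + \abs{\s_k\at b}\,\abs{\s_k\at{e_2\at q}}^2 > 0$ for every embedding $\s_k$, so $T_2\at{\nrd\at{\cdot}}$ is a positive definite quadratic form on the rank $4d$ lattice $\mo$. Consequently the elements of $\mo$ of bounded reduced norm form a finite set; in particular both $\mo^{\times}/\z_F^{\times}$ and $\mathrm{Normalizer}\at{\mo}/\z_F$ are finite. This finiteness is what guarantees that the remainder $q_{\mathrm{rem}}$ ranges over a fixed finite set and that, for each prime, the candidate generators of bounded reduced norm are finite in number.

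For the existence and factorization claim I would run a reduced norm descent. Since $\nrd\at q$ is supported on $S$, factor $\nrd\at q\,\z_F = \p_1^{L_1}\ldots\p_M^{L_M}$ and peel off on the right, one at a time, a generator of $\mathrm{gen}_S\at{\mo}$ whose reduced norm divides $\nrd\at q$, writing $q = q' g$ with $\nrd\at{q'}\z_F = \nrd\at q\z_F/\nrd\at g\z_F$. Here the two sided class number one hypothesis is used precisely to force the accumulated \emph{two sided} ideal contribution (the part that one sided generators cannot capture) to be principal, so that the bounded leftover lands in the finite set $\mathrm{Normalizer}\at{\mo}/\z_F$ and supplies $q_{\mathrm{rem}}$. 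The refinement to $q_{\mathrm{rem}}\in\mo^{\times}$ when no $\p_i$ divides $\disc\at{\mo}$ then follows because the non unit part of $\mathrm{Normalizer}\at{\mo}$ modulo scalars is generated by the Atkin--Lehner elements attached precisely to the ramified primes, i.e.\ those dividing $\disc\at{\mo}$; if $S$ avoids them, no such contribution can arise. For the running time, each step removes one prime factor of the reduced norm, so the number of steps is $N = L_1 + \ldots + L_M$, and since $N\log\mmin_i N\at{\p_i} \le \sum_i L_i \log N\at{\p_i} = \log\abs{\Nm_{F/\q}\at{\nrd\at q}} = O\at{\log T_2\at{\nrd\at q}}$ (using that $\nrd\at q$ is totally positive), we obtain $N = O\at{\log T_2\at{\nrd\at q}}$; each step is a bounded search over $\mathrm{gen}_S\at{\mo}$ together with quaternion and ideal arithmetic, giving the claimed polynomial runtime.

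The main obstacle is the decision of finiteness of $\mathrm{gen}_S\at{\mo}$ together with the correct bookkeeping of the remainder. Finiteness is a genuine condition, not a formality: a one sided prime norm ideal need not be principal even when the two sided class number is one, since the one sided class set can be nontrivial. The delicate point is therefore the interplay between the two sided class group — which, being trivial, pins $q_{\mathrm{rem}}$ into $\mathrm{Normalizer}\at{\mo}$ — and the one sided class set, which governs whether the $S$-supported elements are finitely generated as a monoid at all. I would reduce the finiteness test to the finite computation of examining, for each $\p_i\in S$, the finitely many right ideals of reduced norm $\p_i$ and their principality against the finite unit group, and deciding whether the primes of $S$ suffice to generate the relevant class set. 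The genuinely difficult structural and algorithmic content of both parts is carried out in~\cite{KY1}; the work here is the faithful translation and the verification that total definiteness and two sided class number one specialize that result to the statement above.
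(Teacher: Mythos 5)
Your proposal matches the paper's treatment: the paper does not prove this theorem but presents it as the specialization of Theorem~3.18 of~\cite{KY1} to totally definite quaternion algebras with two sided class number one, accompanied only by the remark that the factorization is found by trial division against $\mathrm{gen}_S\at{\mo}$, greedily reducing the valuations $v\at{q,\p}$ until an element of $\mathrm{Normalizer}\at{\mo}$ remains. The supporting details you supply (finiteness from total definiteness, the role of the two sided class group in pinning the remainder, the ramified-prime refinement, and the step count bounded by $\log T_2\at{\nrd\at{q}}$) are consistent with that citation and with the surrounding discussion in the paper.
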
 

To find the factorization we essentially do trial division of $q$ by elements of $\mathrm{gen}_S\at{\mo}$ and greedily reduce values $v\at{\p,q}$ in equation (\ref{eq:norm-val}) on each step. We perform trial-division step until we are left with an element of $\mathrm{Normalizer}\at{\mo}$. The map $U_q$ discussed in Section~\ref{sec:quat-and-unitaries} depends only on $q_r / F^{\times}$ up to a sign, therefore there is only finitely many possible unitaries $U_{q_r}$. The canonical gate set corresponding to $\mo_S$ is 
\[
\CG^{\star}_{\mo,S} = \mathrm{gen}_S\at{\mo} \cup  \mathrm{Normalizer}\at{\mo}/\z_F \cup \set{-1} .
\]
The main difficulty of the exact synthesis of quaternions and unitaries is computing $\mathrm{gen}_S\at{\mo}$ such that described simple trial division algorithm works. For specific examples illustrating the above definitions, our approximation and exact synthesis algorithms we refer the reader to Section~\ref{sec:examples}. 

The canonical cost function for $U_q$ can be defined using $v\at{q,\p}$ (see Equation~\ref{eq:norm-val}) as: 
\[
c_{\mathrm{canonical}}\at{U_q} = \min_{ q' \in \mo_S : U_q = U_{q'} } \mathrm{cost}\at{q'}\text{, where } \mathrm{cost}\at{q} = \sum_{\p \in S, \p \nmid \disc\at{\mo} } v\at{q,\p}
\]
As we will discuss in  more details in Section~\ref{sec:examples}, the canonical cost function corresponds to the $T$-count for Clifford+$T$ case and to the $V$-count for $V$-basis case. The cost vector $\at{\rg{L_1}{L_M}}$ of quaternion $q$ is equal to $\at{v\at{q,\p_1},\ldots,v\at{q,\p_M}}$. Given the cost vector it is always possible to upper bound $c_{\mathrm{canonical}}\at{U_q}$ as: 
\[
c_{\mathrm{canonical}}\at{U_q} \le \mathrm{cost}\at{q} = \sum_{k = \rg{1}{M} : \p_k \nmid \disc\at{\mo}} L_k.
\]
For the decomposition $q = q_1 \ldots q_n q_{r} $ described in the Theorem~\ref{thm:es:main} we have $\mathrm{cost}\at{q} = \sum_{k=1}^n \mathrm{cost}\at{q_k}$. This also implies that the length of the circuit corresponding to $U_q$ can be upper bounded by the function linear in cost vector $\at{\rg{L_1}{L_M}}$. 

\subsection{Generalized Lipschitz order}\label{sec:lip-order}

The Lipschitz order $L$ in the quaternion algebra $\at{\frac{-1,-1}{\q}}$
can be expressed in the following way: 
\[
L=e_{1}^{-1}\at{\z\of i}+e_{2}^{-1}\at{\z\of i}=\z+\z\i+\z\j+\z\k,
\]
where $\z\of i$ is a maximal order of $\q\at{\sqrt{-1}}$. For the definition of $e_1,e_2$ see eq.~(\ref{eq:nf-projection}) in Section~\ref{sec:quat-and-unitaries}. We generalize this construction to arbitrary totally definite quaternion algebra. 
Let $\z_{K}$ be a ring of integers of $K$. It is a two dimensional
$\z_{F}$ module, therefore it has a $\z_{F}$ pseudo basis and can
be written (in modified Hermite Normal Form, see Section 1.5, or
Corollary 2.2.9 in \cite{Coh2}) as: 
\[
\z_{K}=\z_{F}+\gamma I
\]
 where $I$ is integral $\z_{F}$ ideal and $\gamma$ an element of $K$
such that $1,\gamma$ is a $F$-basis of $K$. 


We define the generalized Lipschitz order as: 
\begin{eqnarray}
L & = & e_{1}^{-1}\at{\z_{K}}+e_{2}^{-1}\at{\z_{K}}\nonumber \\
L & = & \z_{F}+e_{1}^{-1}\at{\gamma}I+\z_{F}\j+e_{1}^{-1}\at{\gamma}\j I\label{eq:liphitz-order}
\end{eqnarray}


\subsection{Lattices} \label{sec:lattice}
In the paper we use lattices related a) to the ring of integers $\z_K$ of CM field $K$, b) to ideals in $\z_K$, and c) to the unit group of a the totally real subfield of $F$ of $K$. We briefly recall here the necessary definitions related to integer lattices. More detailed discussion of the definitions and related results can be found in \cite{MG:2002}. 

Let $B=\{b_1,\ldots,b_n\}$ be a set of linearly-independent vectors in $\bR^m$, where $m \geq n$. The discrete group $\CL(B) = B(\bZ^n) = \bZ \, b_1 + \ldots + \bZ \,b_n$ is called the \emph{integer lattice of rank $n$} with basis $B$. Let $\span(\CL) = \bR\CL$ be the real span of an $n$-dimensional lattice $\CL$ and write $\span(\CL(B)) = \span(B)$. A subset $\CF \subset \span(\CL)$ is called a \emph{fundamental domain} of the lattice $\CL$ if for every vector $t \in \span(\CL)$ there exists a unique lattice vector $v(t) \in \CL$ such that $t-v(t) \in \CF$. There are at least two different centrally-symmetric fundamental domains associated with each lattice basis. The \emph{centered fundamental parallelepiped} $\CC(B)$ associated to a lattice basis $B$ is given by the inequalities
\[\CC(B) = B[-\smfrac 12,\smfrac 12)^n = \{B\, x \, : \, -1/2 \leq x_k < 1/2, k=1,\ldots,n\}.\]
The second fundamental domain is defined in terms of \emph{Gram-Schmidt orthogonalization (GSO)} $B^* =[b_1^*,\ldots,b_n^*]$  of a lattice basis $B$: 
\begin{eqnarray} \label{eqnarr:GSO}
b_1^* &=& b_1 \\ \nonumber
b_i^* &=& b_i - \sum_{j < i} \mu_{i,j} b_j^*,
\end{eqnarray}
where the \emph{orthogonalization coefficients} $\mu_{i,j}$ are defined as
\[\mu_{i,j}= \langle b_i, b_j^* \rangle/\langle  b_j^*, b_j^* \rangle.\]
Note that GSO of a lattice basis is not necessary a basis of $\CL(B)$.  It is related to the original basis via
\[B = B^* \pmat{1  & &   \mu_{ji} \\  &  \ddots \\ 0 & & 1 }.\]

The \emph{centered orthogonalized fundamental parallelepiped} $\CC(B^*)$ associated to a lattice basis $B$ is given by the inequalities
\[\CC(B^*) = B^*[-\smfrac 12,\smfrac 12)^n =  \{B^*\, x \, : \, -1/2 \leq x_k < 1/2, k=1,\ldots,n\}\]
where $B^*$ is the Gram-Schmidt orthogonalization of $B$.

For every integer lattice of rank $n > 1$ there are infinitely many choices of bases. Indeed, for any transformation $G \in \GL_n(\bZ)$, the basis $BG$ spans the same set of vectors over $\bZ$ as the basis $B$. We use $\langle x, y \rangle$ for the standard Euclidean inner product of vectors $x,y \in \bR^m$, and $\nrm{x}$ for corresponding norm.  Reduced lattice bases obtained using the Lenstra-Lenstra-Lovasz (LLL)\cite{NS:2005,LLL:1982} or Hermite-Korkine-Zolotaroff (HKZ) reduction algorithms allow us to ensure that the sizes of the above fundamental domains are essentially independent of an initial choice of basis, depending on only on the lattice determinant. 

Elements of $\z_K$ correspond to $2d$ dimensional real vectors via map 
\[
\boldsymbol{\sigma} : \z_K \rightarrow \r^{2d}
\]
\[ 
z \mapsto \at{\re\at{\s_{1,+}\at{z}},\im\at{\s_{1,+}\at{z}},\ldots,\re\at{\s_{1,+}\at{z}},\im\at{\s_{1,+}\at{z}}}.
\]
The image $\CL$ of $\z_K$ under $\boldsymbol{\sigma}$ is a $2d$ dimensional integer lattice with associated bilinear form given by $\tr_{K/\q}\at{xy^{\ast}}$. Each integral basis of $\z_K$ corresponds to the basis of $\CL$. Similarly each $\z_K$ ideal has $\z$ basis. The images of $\z_K$ ideals under map $\boldsymbol{\sigma}$ correspond to a sublattices of $\CL$. Determiant of $\CL$ is equal to the discriminant of $\z_F$. 



\section{Approximation algorithm}


\subsection{High level description of the algorithm} \label{sec:high:level:desc}


\begin{figure}[!t]

\protect\caption{\label{fig:proc-approximate}High level description of approximation
algorithm}

\rule[0.5ex]{1\columnwidth}{1pt}

\begin{centering}

\begin{algorithmic}[1]
\fxinput $ F,\s,a,b,S = \set{\p_1,\ldots,\p_M} , P$
\Statex $F$ is a totally real number field of degree $d$, $a,b$ are totally negative elements of $\z_F$
\Statex $\p_1,\ldots,\p_M$ are prime ideals of $\z_F$, $P$ is a natural number 
\inp $\rg{L_1}{L_M},\vp,\ve $
\Statex $\rg{L_1}{L_M}$ are natural numbers,
\Statex $\varphi$ is a real number that defines $R_z\at{\vp}$ to be approximated,
\Statex real number $\ve$ is the required approximation quality
\Procedure{APPROXIMATE}{}
\offline
\State $C_{\min},C_{\max} \gets$ \Call{RANDOM-INTEGER-POINT}{$\z_K$} \Comment See Fig.~\ref{fig:proc-find-point}
\State \Call{IS-EASILY-SOLVABLE}{$F$,$\z_F$,$P$} \Comment See Fig.~\ref{fig:proc-is-easily-solvable}
\State \Return $C_{\min},C_{\max}$
\spec $C_{\min},C_{\max}$ are real numbers; contstants defining the quality of the approximation
\online
\State \assert $\prg{L_1 \log N\at{\p_1}}{L_M\log N\at{\p_M}} - 4\log(1/\ve) \in [C_{\min},C_{\max}]$ \label{ln:appr:assert-eps}
\State NORM-EXISTS$, r \gets$ \Call{SUITABLE-Q-NORM}{$\rg{L_1}{L_M},\ve$} \Comment See Fig.~\ref{fig:proc-suitable-q-norm}  \label{ln:appr:suitable-norm}
\State Solution-found $\gets$ FALSE
\While{\Not Solution-found }
\State $z_1 \gets $ \Call{RANDOM-INTEGER-POINT}{$\vp,\ve,r$} \label{ln:appr:find-pt} \Comment See Fig.~\ref{fig:proc-find-point}
\State $ e \gets \at{r - z_1 z_1 ^{\ast}} / \at{-b} $ \label{ln:appr:e}
\If{ $e \in \z_F$ and \Call{IS-EASILY-SOLVABLE}{$e$} } \Comment See Fig.~\ref{fig:proc-is-easily-solvable} \label{ln:appr:is-easy}
\State	Solution-found, $z_2 \gets $ \Call{FAST-SOLVE-NORM-EQ}{$e$} \Comment See Fig.~\ref{fig:proc-fast-solve} \label{ln:appr:norm-eq}
\EndIf
\EndWhile
\State \Return $q = e_1^{-1}\at{z_1}+e_2^{-1}\at{z_2}$ \Comment $\nrd\at{q}=r$  \label{ln:appr:q}
\EndProcedure
\out Quaternion $q$ from generalized Lipshitz order, such that 
\Statex $\rho\at{U_q,R_z\at{\vp}} \le \ve$, $\nrd\at{q}\z_F=\trg{\p_1^{L_1}}{\p_M^{L_M}}$

\end{algorithmic}
\par\end{centering}

\rule[0.5ex]{1\columnwidth}{1pt}

\end{figure}

In this section we first give a formal description of the core algorithm of this paper (solving Problem~\ref{prob:qap} and implementing Step~2 on Fig.~\ref{fig:online-flow}) and give a high level description of ideas behind it. This section is organized as following: each procedure presented in the section is accompanied with the Theorem or Proposition that proves its correctness and bounds on the runtime. In the proofs of the theorems we refer to variables defined in the pseudo-code of the corresponding procedures.

The goal of the rest of the paper is to prove the following theorem:
\begin{thm} \label{thm:main}There exist constants $C_{\min},C_{\max}$ and an algorithm (the online part of the procedure APPROXIMATE, Fig.~\ref{fig:proc-approximate}) that given
\begin{itemize}
\item quaternion gate set specification (see Definition~\ref{dfn:quaternion-spec}),
\item real numbers $\vp$ and $\ve \in (0,1/2)$,
\item cost vector $\at{L_1,\ldots,L_M}$,
\end{itemize}
such that
\begin{itemize}
\item narrow class of $\trg{\p_1^{L_1}}{\p_M^{L_M}}$ is trivial and
\item $L_1 \log\at{N\at{\p_1}} + \ldots +  L_M \log\at{N\at{\p_M}} - 4\log\at{1/\ve} \in [C_{\min},C_{\max}]$
\end{itemize}
finds quaternion $q$ from the generalized Lipschitz order such that $\nrd\at{q}\z_F=\trg{\p_1^{L_1}}{\p_M^{L_M}}$ and $d_2\at{U_q,R_z\at{\vp}}\le \ve $. Constants $C_{\min},C_{\max}$ depend only on the quaternion gate set specification and can be computed in advance~(by the offline part of procedure APPROXIMATE, Fig.~\ref{fig:proc-approximate}). The runtime of the algorithm is on average polynomial in $\log\at{1/\ve}$ under the Conjecture \ref{cnj:main}. 
\end{thm}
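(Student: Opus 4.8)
The plan is to prove \theo{main} by following the data flow $r \to z_1 \to e \to z_2 \to q$ of the online part of \textsc{APPROXIMATE} and establishing, for each arrow, both correctness and a polynomial per-step running time; the theorem then follows once the expected number of loop iterations is bounded. The organizing observation is that, by the identity $\nrd\at{q} = N_{K/F}\at{e_1\at{q}} - b\,N_{K/F}\at{e_2\at{q}}$ from \sec{quat-and-unitaries} together with \propos{distance}, producing a $q = e_1^{-1}\at{z_1} + e_2^{-1}\at{z_2}$ with the prescribed reduced norm that approximates $R_z\at{\vp}$ is equivalent to solving
\[
N_{K/F}\at{z_1} - b\,N_{K/F}\at{z_2} = r, \qquad z_1,z_2 \in \z_K,
\]
subject to $\s\at{z_1}=\s_{1,+}\at{z_1}$ landing in the blue region of \fig{inequality} with $R=\sqrt{\s_1\at{r}}$; membership of $q$ in the generalized Lipschitz order is then automatic from \eq{liphitz-order} since $z_1,z_2\in\z_K$.

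For correctness I would first argue that \textsc{SUITABLE-Q-NORM} returns a totally positive $r\in\z_F$ generating $\trg{\p_1^{L_1}}{\p_M^{L_M}}$; this is exactly where the hypothesis that the narrow class of $\trg{\p_1^{L_1}}{\p_M^{L_M}}$ is trivial enters, as it guarantees such a generator and fixes $R$. Next, \textsc{RANDOM-INTEGER-POINT} returns $z_1\in\z_K$ whose first embedding lies in the blue region and whose remaining embeddings are small enough that $e=\at{r-z_1z_1^{\ast}}/\at{-b}$ is totally positive --- the necessary condition for $e$ to be a relative norm from the CM field $K$. When moreover $e\in\z_F$ and $e$ is easily solvable, \textsc{FAST-SOLVE-NORM-EQ} returns $z_2\in\z_K$ with $N_{K/F}\at{z_2}=e$, whereupon a one-line computation gives $\nrd\at{q}=z_1z_1^{\ast}-b\,e=r$, so that $\nrd\at{q}\z_F$ is the target ideal and, by the region constraint and \propos{distance}, $d_2\at{U_q,R_z\at{\vp}}\le\ve$. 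Thus whenever the loop exits, the returned $q$ is valid.

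The running time then reduces entirely to the expected number of iterations of the while loop, since each iteration --- sampling $z_1$, forming $e$, the membership and easy-solvability tests, and the norm-equation solve --- is deterministic and polynomial in $\log\at{1/\ve}$ by the lemmas accompanying the individual subroutines. The expected iteration count is the reciprocal of the probability that a uniformly sampled admissible $z_1$ yields an easily solvable $e$, and controlling this requires two ingredients. First, a lattice-counting (Minkowski / Gauss-circle) estimate showing that the admissible region contains at least $\poly\at{1/\ve}$ points of $\bssig\at{\z_K}$; this is precisely what the lower constant $C_{\min}$ is chosen to guarantee, since it forces the region's area to exceed a fixed multiple of the relevant lattice covolume. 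Second, an arithmetic-density statement: the resulting values $e$ have absolute norm $N_{F/\q}\at{e}\approx N_{F/\q}\at{r}/N_{F/\q}\at{b}\approx\ve^{-4}$, and one needs a $1/\poly\at{\log\at{1/\ve}}$ fraction of them to be easily solvable.

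The main obstacle is this second ingredient. Heuristically the easily solvable $e$ behave like primes of $\z_F$ (splitting favorably in $K$) of norm $\approx\ve^{-4}$, whose density is $\sim 1/\log\at{\ve^{-4}}=1/\at{4\log\at{1/\ve}}$, which would give $O\at{\poly\at{\log\at{1/\ve}}}$ expected iterations; but proving such an equidistribution-with-splitting statement for the specific arithmetic sequence of norms produced by the sampler is beyond current analytic number theory, and it is exactly here that Conjecture~\ref{cnj:main} is invoked. I therefore expect the proof to decompose cleanly into (i) the deterministic per-iteration polynomial bounds from the subroutine lemmas, (ii) the $C_{\min}$-driven lattice-point count, and (iii) the conjectural density of easily solvable norms, with (iii) being the sole non-elementary and unavoidable input.
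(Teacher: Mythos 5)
Your proposal follows essentially the same route as the paper's proof: correctness via the chain $r \to z_1 \to e \to z_2 \to q$ (with the narrow-class hypothesis entering through \textsc{SUITABLE-Q-NORM}, total positivity of $e$ from the embedding bounds on $z_1$ and $r$, and \propos{distance} giving the approximation guarantee), per-iteration polynomial bounds from the subroutine theorems, and Conjecture~\ref{cnj:main} supplying the $\Omega\at{1/\log\at{1/\ve}}$ density of easily solvable right-hand sides that controls the expected loop count. The only nuance worth noting is that the paper's sampler is only approximately uniform (each candidate returned with probability at least $p_0/\abs{\Cand_{r,\vp,\ve}}$ per \theo{find-point}), which is what actually licenses the reciprocal-probability bound you invoke.
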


Before looking at details of the algorithm on Fig.~\ref{fig:proc-approximate} let us discuss the convention we use for all pseudo-code shown in the paper. All procedures have offline and online part. The offline part of all procedures have to be executed only once for given quaternion gate set specification. Its input we denote by words {\bf Fixed input}. For example, for procedure APPROXIMATE the fixed input consists of the most part of quaternion gate set specification~(see Definition~\ref{dfn:quaternion-spec}) and technical parameter $P$ related to the method for solving relative norm equation. The output of the offline part of the procedure we denote by term {\bf Specification}. For example, in the case of procedure APPROXIMATE, the output of the offline part is the additive constant defining the quality of approximation.

The online part of all procedures is executed for each instance of the approximation problem we are solving. The instance of the problem is defined by angle $\vp$, target precision $\ve$ and target cost vector $\at{\rg{L_1}{L_M}}$. These are precisely the inputs for the online part of procedure APPROXIMATE. The input to the online part of each procedure we denote by word {\bf Input}. An online part of each procedure uses results of computations done in offline part. Naturally, any offline part can not depend on the results of online computation. The output of online part of each procedure is denoted by word {\bf Output}. In our complexity analysis we are mainly concerned with the online part and show that the online part of procedure APPROXIMATE has polynomial runtime on average under a certain number theoretic conjecture. In Section~\ref{sec:examples} we provide the runtime of both online and offline parts for some examples and demonstrate that offline part is not prohibitively expensive for instances of the problem interesting for applications. Let us now discuss the online part of procedure APPROXIMATE in more details.

The algorithm shown on Fig.~\ref{fig:proc-approximate} finds a quaternion $q$ from the generalized Lipschitz order that has two following properties:
\begin{enumerate}
\item $d_2\at{U_q,R_z\at{\vp}}\le \ve,$
\item $\nrd\at{q}\z_F=\trg{\p_1^{L_1}}{\p_M^{L_M}}.$
\end{enumerate}
As we discussed in Section~\ref{sec:lip-order} each quaternion $q$ from the generalized Lipschitz order can be represented using two elements $z_1,z_2$ of the ring of integers $\z_K$ of the CM-field $K$ in the following way:
\[
	q = e_1^{-1}\at{z_1} + e_2^{-1}\at{z_2}.
\]
As discussed in Section~\ref{sec:distance}, the distance $d_2\at{U_q,R_z\at{\vp}}$ depends only on $z_1$. For this reason, in our algorithm we first pick $z_1$~(procedure RANDOM-INTEGER-POINT, line~\ref{ln:appr:find-pt} in Fig.~\ref{fig:proc-approximate}) such that condition (1) above is satisfied and then find $z_2$~(procedure FAST-SOLVE-NORM-EQ, line~\ref{ln:appr:norm-eq} in Fig.~\ref{fig:proc-approximate}) such that condition (2) is also satisfied.

Procedure FAST-SOLVE-NORM-EQ solves the relative norm equation $N_{K/F}\at{z_2} = z_2 z_2^{\ast} = e$ in the relative extension $K/F$
for a special class of right hand sides $e$. There are two challenges related to this procedure. First, the solution does not always exist for arbitrary right hand side $e$. Second, solving the arbitrary instance of the norm equation (for fixed extension $K/F$) can be as hard as factoring. We address both these challenges. First, we identify the necessary conditions on the right hand side $e$ of the equation to be solvable. Second, we identify the set of right hand sides $e$ for which the equation can be solved in probabilistic polynomial time~(using procedure IS-EASILY-SOLVABLE, line~\ref{ln:appr:is-easy}, in Fig.~\ref{fig:proc-approximate}) and attempt to solve the equation only for such right hand sides. The claim that our algorithm works in polynomial time is conditional on the conjecture discussed in Section~\ref{sec:conjecture} that the procedure IS-EASILY-SOLVABLE returns true with probability $O\at{1/\log\at{1/\ve}}$. For more details on this see Section~\ref{sec:relativeNormEqs}. We adopt ideas used to solve principal ideal problem in~\cite{KV} and show how to use LLL algorithm to solve the relative norm equation efficiently.

The necessary condition for the norm equation to be solvable is that for all embeddings $\s_k$ of $F$ into $\r$ it must be the case that $\s_k\at{e} > 0$ (for $k=1,\ldots,d$). Procedure RANDOM-INTEGER-POINT, line~\ref{ln:appr:find-pt} in Fig.~\ref{fig:proc-approximate} returns $z_1$ such that these conditions are satisfied together with condition $d_2\at{U_q,R_z\at{\vp}}\le \ve $. Recall, that we can associate an integer lattice $\CL$ with $\z_K$~(see Section~\ref{sec:lattice}). Essentially, procedure RANDOM-INTEGER-POINT samples lattice points from the following convex subset of $\r^{2d}$:
\begin{equation} \label{eq:target}
S_{r,\vp,\ve} = \set{ x \in \r^{2d} :  \re \at{\at{x_1 + i x_2 -z^{r,\vp,\ve}_0}e^{-i\vp/2} } \ge 0, \abss{x_{2k} + i x_{2k+1} } \le \s_k\at{r} }
\end{equation}
where $z^{r,\vp,\ve}_0 = \sqrt{\s_1\at{r}}\at{1-\ve^2}e^{-i\vp/2}$. See Fig.~\ref{fig:inequality} for the visualization of the projection of the set above. The geometry of the set $S_{r,\vp,\ve}$ is determined by the reduced norm $r$ of the quaternion output by procedure APPROXIMATE. We have a freedom in choosing $r$ up to a unit of $\z_F$. We use it to find $r$ such that we can sample lattice points from $S_{r,\phi,\ve}$ in time that is logarithmic in the volume of $S_{r,\vp,\ve}$. Such $r$ is found by procedure SUITABLE-Q-NORM~(line \ref{ln:appr:suitable-norm} in Fig.~\ref{fig:proc-approximate}).

Before proceeding to the proof of Theorem~\ref{thm:main} we state results that we will prove later in the paper and are using in our proof.

\begin{restatable}{thm}{findpoint}
\label{thm:find-point} There exist real numbers $p_0,M$ and vectors $R^{\min},R^{\max}$ from $\at{0,\infty}^d$~(computed in the offline part  of RANDOM-INTEGER-POINT procedure, Fig.~\ref{fig:proc-find-point}) such that for any real number $\vp$, real number $\ve \in (0,1/2)$ and totally positive $r$ from $\z_F$ such that:
\[
  \sqrt{\s_1\at{r}}\ve^2/4 \in [R_1^{\min},R_1^{\max}],\, \sqrt{\s_k\at{r}} \in [R_k^{\min},R_k^{\max}]\text{ for }k=2,\ldots,d
\]
there is an element $z$ from $\z_F$ such that $\vecs\at{z}$ is in $S_{r,\vp,\ve}$ (see Equation~(\ref{eq:target})). Procedure RANDOM-INTEGER-POINT runs in time polynomial in $log\at{1/\ve}$ and returns each element of the set  
\begin{equation} \label{eq:cand} \Cand_{r,\vp,\ve} = \set{ z \in \z_K : \vecs\at{z} \in S_{r,\vp,\ve} }
\end{equation}
with probability at least $p_0 / \abs{\Cand_{r,\vp,\ve}} $. The size of the set  $\Cand_{r,\vp,\ve}$ belongs to the interval
\[ 
\of{ 2\sqrt{4-\ve^2}/\ve, (4\sqrt{4-4\ve^2}/\ve + 2) M }
\]
\end{restatable}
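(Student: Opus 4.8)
The plan is to reduce the claim to a lattice–point count together with an exact uniform sampler on an explicitly understood convex body. First I would upright the first complex coordinate by setting $\zeta=\s\at{z}e^{i\vp/2}$ (recall $\s=\s_{1,+}$); by Proposition~\ref{prop:distance} the first–coordinate constraint defining $S_{r,\vp,\ve}$ becomes the axis–aligned circular segment $\set{\zeta:\ R(1-\ve^2)\le\re\at{\zeta},\ \abs{\zeta}\le R}$ with $R=\sqrt{\s_1\at{r}}$, of extent $h=R\ve^2$ in the $\re\at{\zeta}$ direction and $2R\ve\sqrt{2-\ve^2}$ in the $\im\at{\zeta}$ direction, while the coordinates $k=2,\dots,d$ are confined to disks of radius $\sqrt{\s_k\at{r}}$. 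Under the hypotheses $R\ve^2\in[4R_1^{\min},4R_1^{\max}]$ and $\sqrt{\s_k\at{r}}\in[R_k^{\min},R_k^{\max}]$ every extent except the $\im\at{\zeta}$–direction is bounded by offline constants, and $\im\at{\zeta}$ is the unique long direction, of length $\Theta\at{1/\ve}$; thus $S_{r,\vp,\ve}$ is, up to a $\vp$–dependent rotation acting only in the first plane, the product of one long interval with a bounded $(2d-1)$–dimensional body, of volume $\Theta\at{1/\ve}$.

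For the count I would sandwich the segment between an inscribed and a circumscribed axis–aligned rectangle and each disk between an inscribed and circumscribed square, obtaining boxes $\CB_{\mathrm{in}}\subseteq S_{r,\vp,\ve}\subseteq\CB_{\mathrm{out}}$; the chord–length formulas $2R\ve\sqrt{2-\ve^2}$ and $h=R\ve^2$ for the segment are what produce the factors $\sqrt{4-\ve^2}$ and $\sqrt{4-4\ve^2}$ in the stated interval. Using an LLL/HKZ–reduced basis $B$ of $\CL$ computed offline (recall $\det\CL=\disc\at{\z_F}$), and choosing the offline radii so that the bounded extents exceed the Gram–Schmidt cell diameter $\sum_i\norm{b_i^*}$, I would slice $\CB_{\mathrm{in}}$ into $\Theta\at{1/\ve}$ consecutive sub-boxes along the long axis, each wide enough in all $2d$ directions to contain a lattice translate of $\CC(B^*)$ and hence a point of $\CL$; this gives the existence of $z\in\z_K$ with $\vecs\at{z}\in S_{r,\vp,\ve}$ and shows $\abs{\Cand_{r,\vp,\ve}}=\Theta\at{1/\ve}$, while the matching upper bound follows from $\abs{\Cand_{r,\vp,\ve}}\le\vol\at{\CB_{\mathrm{out}}}/\det\CL$ plus a boundary term over the bounded cross-section (absorbed into $M$, with the ``$+2$'' accounting for the two ends of the long interval). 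The explicit endpoints of the interval come from the chord lengths above together with the same box comparison.

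For the sampler, note first that the naive idea of marching along a lattice vector aligned with the long axis is unavailable: since $\abs{N_{K/\q}\at{z}}\ge1$ for every nonzero $z\in\z_K$, no bounded lattice vector can be concentrated in the first embedding transverse to $\im\at{\zeta}$ for arbitrary $\vp$, so for $d\ge2$ the points of $\Cand_{r,\vp,\ve}$ are not collinear. Instead I would sample–round–reject: enlarge $\CB_{\mathrm{out}}$ by the cell diameter to a box $\CB$, draw a continuous point uniformly in $\CB$ (coordinatewise, in the uprighted frame), round it to $\CL$ by Babai's nearest–plane algorithm with respect to $B^*$, and accept iff the rounded point lies in $S_{r,\vp,\ve}$. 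Because the nearest–plane cells are translates of $\CC(B^*)$, tile space, and each have volume $\det\CL$, and because the margin guarantees that every $z\in\Cand_{r,\vp,\ve}$ has its whole cell inside $\CB$, the rounded point equals any such $z$ with probability exactly $\det\CL/\vol\at{\CB}$; conditioning on acceptance therefore returns a point distributed exactly uniformly on $\Cand_{r,\vp,\ve}$, which gives the bound $p_0/\abs{\Cand_{r,\vp,\ve}}$ with $p_0$ any constant $\le1$. The acceptance probability equals $\abs{\Cand_{r,\vp,\ve}}\det\CL/\vol\at{\CB}=\Theta\at{\vol\at{S_{r,\vp,\ve}}/\vol\at{\CB}}=\Omega\at{1}$ since the segment and the disks fill a constant fraction of their circumscribing boxes and the margin is lower order against the extents; hence the expected number of trials is $O\at{1}$ and each trial costs $\poly\at{\log\at{1/\ve}}$ (one rounding and one membership test carried out at $O\at{\log\at{1/\ve}}$ bits of precision).

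The step I expect to be the main obstacle is making existence and the constant acceptance bound hold uniformly in the arbitrary angle $\vp$: the body is only $\Theta\at{1}$ wide — comparable to the lattice spacing — in $2d-1$ of its $2d$ directions, so thinness rather than volume is the enemy and nothing may be precomputed that is tuned to $\vp$. This is exactly the role of the offline constants: $R_k^{\min}$ must be pinned against the reduced–basis Gram–Schmidt lengths so that the bounded cross-section always dominates a fundamental cell, and $R_1^{\min}$ must force the segment height $h=R\ve^2$ above the same scale. Verifying these inequalities simultaneously with the windows $[R_k^{\min},R_k^{\max}]$ assumed of $r$ — which the procedure SUITABLE-Q-NORM can realize using the $d-1$ units of $\z_F$ to balance the embeddings $k\ge2$, leaving the forced large radius $R\sim1/\ve^2$ in the first embedding — is the technical heart; the precise constants in the size interval are then a routine circular–segment computation.
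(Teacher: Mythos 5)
Your proposal is correct and reaches all the claims of the theorem, but the sampler is genuinely different from the paper's, so a comparison is worth recording. The paper draws a random slice index $N$ along the long ($\im$) axis, rounds the slice center $Z_c+N\Delta Z$ to the lattice with a Babai step, and then adds a random lattice vector from a \emph{precomputed finite set} $\mathrm{SHIFTS}$, chosen large enough that every candidate $z_0$ is reachable from its unique slice; rejection enforces membership in $S_{r,\vp,\ve}$ intersected with that slice, the per-point bound comes out as $p_0=1/\at{8\abs{\mathrm{SHIFTS}}}$, and $M=\abs{\mathrm{SHIFTS}}$ also furnishes the covering upper bound $\Cand_{r,\vp,\ve}\subset\mathrm{SHIFTS}+\set{\text{rounded slice centers}}$. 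You instead draw a continuous uniform point in a bounding box inflated by the cell diameter, round by nearest-plane, and reject; since every candidate's rounding cell lies wholly inside the box, the accepted output is exactly uniform, which gives the stronger constant $p_0=1$ and replaces the covering bound by the packing bound $\abs{\Cand_{r,\vp,\ve}}\le\vol\at{\CB_{\mathrm{out}}+\CC(B^{*})}/\det\CL$. What your route buys is a cleaner uniformity statement and no enumeration of $\mathrm{SHIFTS}$; what it costs is (i) a finite-precision discretization of the continuous draw --- routine, since a grid of constant spacing relative to the lattice perturbs each cell's probability by only a constant factor, but it must be stated --- and (ii) the correction that the inflation by the cell diameter is \emph{comparable to}, not lower order than, the $2d-1$ thin extents (each pinned near $R_k^{\min}$, which is itself of the order of the cell size), so it multiplies $\vol\at{\CB}$ by a constant rather than $1+o(1)$; the $\Omega\at{1}$ acceptance rate survives either way. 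Your existence and lower-bound argument (slicing an inscribed box into sub-boxes, each long enough in every direction to contain a translate of $\CC(B^{*})$) is the same in substance as the paper's inclusion $Z_c+N\Delta Z+\CC(B)\subset S_{r,\vp,\ve}$, and you correctly isolate the real difficulty: the body is only $\Theta\at{1}$ thick in $2d-1$ of its directions for an arbitrary $\vp$, so the offline calibration of $R^{\min}$ against the reduced basis, together with SUITABLE-Q-NORM placing $r$ in the prescribed windows, is what makes the whole scheme work.
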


The proof of Theorem~\ref{thm:find-point} and can be found in Section~\ref{sec:sampling} on Page~\pageref{sec:sampling}.

\begin{restatable}{thm}{suitablenorm}
\label{thm:suitable-norm} Given vector $R^{\min}$ from $\at{0,\infty}^d$, there exists constants $C_{\min},C_{\max}$ and vector $R^{\max}$~(computed by the offline part of SUITABLE-Q-NORM procedure, Fig.~\ref{fig:proc-suitable-q-norm}) such that for all non-negative integers $\rg{L_1}{L_M}$ and a positive real number $\ve \in (0,1)$ satisfying:
\[
 \prg{L_1 \log N\at{\p_1}}{L_M\log N\at{\p_M}} - 4\log(1/\ve) \in [C_{\min},C_{\max}]
\]
there is an algorithm that decides if narrow class number of ideal $\trg{\p_1^{L_1}}{\p_M^{L_M}}$ is trivial. If this is the case the algorithm outputs totally positive element $r$ of $\z_F$ such that $r \z_F = \trg{\p_1^{L_1}}{\p_M^{L_M}} $ and
\[
  \sqrt{\s_1\at{r}}\ve^2/4 \in [R_1^{\min},R_1^{\max}],\, \sqrt{\s_k\at{r}} \in [R_k^{\min},R_k^{\max}]\text{ for }k=2,\ldots,d
\]
The algorithm runs in polynomial time in $\log\at{1/\ve}$~(see online part of SUITABLE-Q-NORM procedure in Fig.~\ref{fig:proc-suitable-q-norm}). 
\end{restatable}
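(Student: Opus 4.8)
The plan is to recast the problem in the logarithmic-embedding geometry of $F$. Set $\mathfrak a=\p_1^{L_1}\cdots\p_M^{L_M}$ and $T=\sum_i L_i\log N(\p_i)$. Any generator $r$ of $\mathfrak a$ satisfies $\sum_{k=1}^d\log\s_k(r)=\log\Nm_{F/\q}(r)=T$, and the hypothesis on the $L_i$ is exactly $T-4\log(1/\ve)\in[C_{\min},C_{\max}]$. Taking logarithms turns the desired size constraints into the single requirement that $\mathrm{Log}(r)=(\log\s_1(r),\ldots,\log\s_d(r))$ lie in an axis-parallel box $B=\prod_{k=1}^d I_k$, where $I_1=4\log(1/\ve)+[2\log(4R_1^{\min}),2\log(4R_1^{\max})]$ and $I_k=[2\log R_k^{\min},2\log R_k^{\max}]$ for $k\ge2$. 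Since $r$ must be totally positive, $\mathfrak a$ must admit a totally positive generator, which is equivalent to triviality of its class in the narrow class group $\mathrm{Cl}^+(F)$; and since all totally positive generators of $\mathfrak a$ differ by totally positive units, whose log-vectors form a full-rank lattice $\Lambda$ in the hyperplane $H_0=\{x:\sum_k x_k=0\}$, the whole task reduces to finding a representative of the coset $\mathrm{Log}(r_0)+\Lambda$ inside $B$, where $r_0$ is any fixed totally positive generator.

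Deciding triviality and producing $r_0$ is the number-theoretic content, handled with offline precomputation. Offline I compute $\mathrm{Cl}^+(F)$, the classes $[\p_i]$, a $\z$-basis of the relation lattice $\{(m_i):\prod_i\p_i^{m_i}\text{ narrowly principal}\}$, and a fixed totally positive generator $g_j$ for each basis relation. Online, since $\mathrm{Cl}^+(F)$ is finite I reduce the exponents $L_i$ modulo the group exponent and test whether $\sum_i L_i[\p_i]=0$; if so, $(L_i)$ lies in the relation lattice, I write it in the precomputed basis as $\sum_j c_j(m_{j,i})$ by integer linear algebra, and set $r_0=\prod_j g_j^{c_j}$, which generates $\mathfrak a$. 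The coefficients $c_j$ are $O(\log(1/\ve))$, so $\mathrm{Log}(r_0)=\sum_j c_j\,\mathrm{Log}(g_j)$ has every coordinate of size $O(\log(1/\ve))$; consequently all intermediate algebraic integers have archimedean embeddings bounded by $\poly(1/\ve)$ and hence polynomial bit-size, keeping the construction polynomial in $\log(1/\ve)$.

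The final step is the geometry-of-numbers argument that $B$ always captures a coset point. As $\Lambda\subset H_0$ and $\mathrm{Log}(r_0)+\Lambda$ lives on the affine slice $\{x:\sum_k x_k=T\}$ parallel to $H_0$, it suffices that the convex slice $B\cap\{\sum_k x_k=T\}$ contain a fundamental domain of $\Lambda$. Projecting onto coordinates $2,\ldots,d$ (an affine isomorphism on the slice) via $\pi$, this becomes the condition that the $(d-1)$-dimensional box $\prod_{k\ge2}I_k$ contain a translate of a fundamental domain of $\pi(\Lambda)$; I force this in the offline phase by enlarging the free endpoints $R_k^{\max}$ ($k\ge2$), using a fundamental-domain / covering-radius bound for $\Lambda$ obtained from LLL- or HKZ-reduced fundamental units. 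The slice constraint then pins the first coordinate: once $\log\s_k(r)\in I_k$ for $k\ge2$, the value $\log\s_1(r)=T-\sum_{k\ge2}\log\s_k(r)$ automatically lies in $I_1$ provided $C_{\min}\ge 2\log(4R_1^{\min})+\sum_{k\ge2}2\log R_k^{\max}$ and $C_{\max}\le 2\log(4R_1^{\max})+\sum_{k\ge2}2\log R_k^{\min}$; I take $R_1^{\max}$ large enough that this window $[C_{\min},C_{\max}]$ is nonempty, and these $C_{\min},C_{\max},R^{\max}$ are exactly the offline output. Online, the required coset representative is found by Babai nearest-plane rounding on $\Lambda$ in polynomial time, yielding the totally positive unit $u$ with $r=r_0u$.

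The main obstacle I expect is not the lattice geometry itself but making the two offline guarantees compatible and uniform: a single $R^{\max}$ and a single $\ve$-independent window $[C_{\min},C_{\max}]$ must simultaneously make $\prod_{k\ge2}I_k$ wider than a fundamental domain of $\pi(\Lambda)$ and keep the induced first coordinate inside $I_1$ for every admissible $T$, with the box depending on $\ve$ only through the common $4\log(1/\ve)$ shift of $I_1$. A secondary but genuine point is verifying that the on-the-fly arithmetic producing $r_0$, $u$, and $r$ never exceeds $\poly(\log(1/\ve))$ bit-size, which is where the polynomial-time claim must be checked carefully.
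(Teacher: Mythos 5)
Your proposal is correct and follows essentially the same route as the paper's: triviality in the narrow class group is decided and a totally positive generator $r_0$ is produced from finite offline data (the paper tabulates totally positive generators of $\p_k^{N_k}$ and of all small-residue products $\p_1^{n_1}\cdots\p_M^{n_M}$ rather than using the relation lattice, which is equivalent), and the size constraints are then met by rounding $\mathrm{Log}(r_0)$ against the logarithmic unit lattice via nearest-plane/round-off, with $C_{\min},C_{\max},R^{\max}$ fixed offline precisely so that the target box contains a fundamental domain and the first coordinate is pinned by the norm condition $\sum_k\log\s_k(r)=\sum_i L_i\log N(\p_i)$. The paper's $ru^{2}$ device and its augmented unit-lattice basis (the extra column $2\log\delta_0\cdot\boldsymbol{1}$) play exactly the role of your totally-positive-unit lattice and slice projection, and the bit-size and runtime analyses coincide.
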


The informal discussion and the proof of the Theorem~\ref{thm:suitable-norm} can be found in Section~\ref{sec:snorm} on Page~\pageref{sec:snorm}.

\begin{restatable}{thm}{normeq}
\label{thm:norm-eq} Given totally positive element $e$ of $F$ there exists an algorithm for testing if the instance of integral relative norm equation in $K/F$
\[
 zz^{\ast} = e , z \in \z_K
\]
can be solved in polynomial time in $\log T_2\at{e}$ (procedure IS-EASILY-SOLVABLE, where $T_2\at{e} = \sum_{k=1}^d \s^2_k \at{e}$). If the test is passed, there exist another algorithm for deciding if the solution exists and finding it that runs in time polynomial in $\log T_2\at{e}$~(procedure FAST-SOLVE-NORM-EQ). Procedure IS-EASILY-SOLVABLE returns true for at least those cases when the ideal $e\z_F$ is prime.
\end{restatable}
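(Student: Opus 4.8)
The plan is to solve the relative norm equation $zz^{\ast}=e$ by the standard ideal-theoretic reduction, organised so that the single genuinely expensive sub-task---factoring the ideal $e\z_F$---is isolated into the test IS-EASILY-SOLVABLE, while everything downstream of a known factorisation runs in polynomial time using invariants of the \emph{fixed} fields $F$ and $K$ precomputed offline. If $zz^{\ast}=e$ then at the level of ideals $N_{K/F}\at{z\z_K}=e\z_F$, so a solution can exist only when $e\z_F$ is the relative norm of an integral $\z_K$-ideal. First I would factor $e\z_F=\prod_i\p_i^{a_i}$, determine for each $\p_i$ its splitting type in the quadratic extension $K/F$, assemble a candidate ideal $\mathfrak{A}\subset\z_K$ with $N_{K/F}\at{\mathfrak{A}}=e\z_F$ when one exists, and finally decide whether $\mathfrak{A}$ is principal and, if so, adjust a generator by a unit so that its relative norm equals $e$ exactly.

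For IS-EASILY-SOLVABLE, note that $\log\abs{N_{F/\q}\at{e}}=O\at{d\log T_2\at{e}}$: by AM--GM (valid since $e$ is totally positive) $\prod_k\s_k\at{e}^2\le\at{T_2\at{e}/d}^d$, so $N\at{e\z_F}$ has bit-length polynomial in $\log T_2\at{e}$. The test computes this norm and runs a primality test; when $N_{F/\q}\at{e}$ is prime the ideal $e\z_F$ is automatically a degree-one prime, and more generally the test declares success exactly when it can produce the full prime factorisation of $e\z_F$ within the budget. In particular it returns true whenever $e\z_F$ is prime, as required, and its cost is dominated by one primality test on an integer of the stated bit-length, hence polynomial in $\log T_2\at{e}$.

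Granting the factorisation, FAST-SOLVE-NORM-EQ proceeds as follows. The splitting of each $\p_i$ in $K/F$ is read off by the Kummer--Dedekind criterion, i.e.\ by testing whether the minimal polynomial of a generator of $K/F$ has a root modulo $\p_i$ (a single quadratic-residue computation, since $[K:F]=2$). A candidate ideal of the right relative norm exists iff every inert $\p_i$ occurs to an even power; split and ramified primes contribute unambiguously, split primes offering the finitely many ways of distributing $a_i$ between $\mathfrak{P}_i$ and $\overline{\mathfrak{P}}_i$. When $e\z_F$ is prime there is a single factor, so at most the two conjugate candidates $\mathfrak{P},\overline{\mathfrak{P}}$ arise and the enumeration is trivial; this is the regime the test guarantees. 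If no candidate exists the algorithm reports that $zz^{\ast}=e$ is unsolvable. Otherwise I would decide principality and extract a generator $z_0$ by an LLL-based principal-ideal computation in the spirit of \cite{KV}, relative to the class group and units of $K$ found offline. A generator satisfies $z_0z_0^{\ast}=e\,u$ with $u\in\z_F^{\times}$, automatically totally positive because $z_0z_0^{\ast}=N_{K/F}\at{z_0}$ is a product of complex-conjugate pairs; it remains to find $w\in\z_K^{\times}$ with $N_{K/F}\at{w}=u^{-1}$, whereupon $z=wz_0$ solves the equation. Here the CM structure is decisive: units of $F$ satisfy $N_{K/F}\at{v}=v^2$ and roots of unity have relative norm $1$, so $N_{K/F}\at{\z_K^{\times}}$ equals $\at{\z_F^{\times}}^2$ up to the Hasse unit index $[\z_K^{\times}:\mu_K\z_F^{\times}]\in\set{1,2}$. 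Deciding whether $u^{-1}$ lies in this image, and producing $w$, is therefore a constant-size $\f_2$-linear computation in $\z_F^{\times}/\at{\z_F^{\times}}^2$ using the offline unit data; failure at this last step certifies that the candidate yields no solution.

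The main obstacle is the polynomial-time principal-ideal step. In general finding a generator is hard, and even for a principal ideal a generator can be exponentially large in the regulator. Two features rescue the situation. First, the fields are fixed, so the class group, fundamental units and log-unit lattice of $K$ are computed once offline; the online cost is then an LLL reduction on $\vecs\at{\mathfrak{A}}$---whose basis entries have bit-length $O\at{\log N\at{\mathfrak{A}}}=O\at{\log T_2\at{e}}$---to locate a generator, followed by reduction of that generator modulo the precomputed fundamental units to a short, balanced representative. Second, a genuine solution is short: from $\abs{\s_{k,+}\at{z}}^2=\s_k\at{e}$ one gets $\nrm{\vecs\at{z}}^2=\tr_{F/\q}\at{e}\le\sqrt{d\,T_2\at{e}}$, which bounds the output bit-size and guarantees that the balanced generator LLL produces differs from the sought solution only by a unit of bounded height, after which the matching $N_{K/F}\at{w}=u^{-1}$ is the constant-size $\f_2$-linear problem above. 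Establishing that these steps always succeed together in polynomial time---especially that the LLL-produced generator can be corrected to the exact-norm solution whenever one exists---is the delicate point, and is precisely where the principal-ideal techniques adapted from \cite{KV} do the real work.
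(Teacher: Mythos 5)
Your overall strategy matches the paper's up to the critical step: both isolate the factorization of $e\z_F$ into IS\nobreakdash-EASILY\nobreakdash-SOLVABLE via a primality test on (essentially) $N_{F/\q}\at{e}$, both lift the primes to $\z_K$ to assemble candidate ideals of relative norm $e\z_F$, and both run LLL on the resulting $2d$-dimensional ideal lattice using data about the fixed fields precomputed offline. The divergence, and the gap, is in what you ask of LLL. You write that the online cost is ``an LLL reduction on $\vecs\at{\mathfrak{A}}$ \ldots to locate a generator,'' and later that the solution's shortness ``guarantees that the balanced generator LLL produces differs from the sought solution only by a unit of bounded height.'' Neither claim is justified: LLL returns a short vector $v$ of the ideal, and if $\mathfrak{A}=\xi\z_K$ is principal then $v=\xi\eta$ for some $\eta\in\z_K$ that is short in norm but need not be a unit. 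Consequently $N_{K/F}\at{v}=e\cdot u$ fails in general with $u$ a unit of $\z_F$; instead $N_{K/F}\at{v}=e\cdot N_{K/F}\at{\eta}$ with $N_{K/F}\at{\eta}$ a totally positive \emph{non-unit} algebraic integer. Your final correction step --- the $\f_2$-linear membership test in $\z_F^{\times}/\at{\z_F^{\times}}^2$ using the Hasse unit index --- can only absorb a unit discrepancy, so it cannot complete the argument. You flag this as ``the delicate point'' and defer to the principal-ideal techniques of \cite{KV}, but that deferral is precisely where the proof is missing.

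The paper closes this gap differently and never needs a generator at all. Its Lemma~\ref{lem:bounded} shows, via the LLL approximation guarantee, the volume formula $\vol\at{I}=\vol\at{\z_K}N_{K/\q}\at{\xi}$ and the AM--GM inequality, that the co-factor satisfies $N_{K/\q}\at{\eta}\le C_K$ for an explicit constant depending only on $K$. One then sets $\gamma:=e\,N_{K/F}\at{\eta'}^{-1}$ (with $\eta'$ the LLL short vector) and solves the \emph{residual} norm equation $N_{K/F}\at{w}=\gamma$ by any off-the-shelf method such as Simon's $S$-unit algorithm; because $\gamma$ ranges over a set of bounded size depending only on $K$, this step costs constant time regardless of $e$, and the output is $z=w\eta'$. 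Failure of the residual equation for every candidate ideal certifies unsolvability. If you replace your ``find a generator, then fix up by a unit'' step with ``find a short vector, bound the co-factor, then solve a constant-size residual norm equation,'' your argument becomes essentially the paper's; as written, the proposal does not establish the polynomial-time claim.
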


\begin{proof}[Proof of Theorem~\ref{thm:main}] \label{thm:main:proof}
Let us first prove that output is correct. The norm of the quaternion $q$ we compute on line \ref{ln:appr:q}~(Fig.~\ref{fig:proc-approximate}) is equal to totally positive element $r$ of $\z_F$ computed on line \ref{ln:appr:suitable-norm}~(Fig.~\ref{fig:proc-approximate}). This is because on line~\ref{ln:appr:norm-eq}~(Fig.~\ref{fig:proc-approximate}) we find $z_2$ such that $r =\nrd\at{q} = \abss{z_1} - b \abss{z_2}$.  By Theorem~\ref{thm:suitable-norm} the output $r$ of the procedure SUITABLE-Q-NORM~(line \ref{ln:appr:suitable-norm}, Fig.~\ref{fig:proc-approximate}) satisfies $r\z_F = \nrd\at{q}\z_F = \trg{\p_1^{L_1}}{\p_M^{L_M}}$.
According to Theorem~\ref{thm:find-point} procedure RANDOM-INTEGER-POINT (line~\ref{ln:appr:find-pt}, Fig.~\ref{fig:proc-approximate}) returns an element $z_1$ of $\z_K$ such that
 \[
 \re \at{\at{\s_{1,+}\at{z}-z_0}e^{-i\vp/2} } \ge 0, \abs{ \s_{1,+}\at{z} } \le \sqrt{\s_1\at{r}}
\]
where $z_0 = \sqrt{\s_1\at{r}}\at{1-\ve^2}e^{-i\vp/2}$. According to Proposition~\ref{prop:distance} this implies that $d_2\at{R_z\at{\vp},U_q} \le \ve$. Numbers $z_1, z_2 $ are in $\z_F$ which immediately implies that $q$ is in generalized Lipschitz order (see Section~\ref{sec:lip-order}).


%
Next, let us show that the restrictions on inputs of all the procedures called within online part of procedure APPROXIMATE are satisfied. Procedure SUITABLE-Q-NORM~(line \ref{ln:appr:suitable-norm}, Fig.~\ref{fig:proc-approximate}) always succeeds if the narrow class group of $\trg{\p_1^{L_1}}{\p_M^{L_M}}$ is trivial and
\[
 \prg{L_1 \log N\at{\p_1}}{L_M\log N\at{\p_M}} -4\log(1/\ve) \in [C_{\min},C_{\max}],
\]
which is required in the statement of the theorem. From Theorem~\ref{thm:suitable-norm} we know that procedure SUITABLE-Q-NORM~(line \ref{ln:appr:suitable-norm}, Fig.~\ref{fig:proc-approximate}) finds $r$ such that: 
\[
  \sqrt{\s_1\at{r}}\ve^2/4 \in [R_1^{\min},R_1^{\max}],\, \sqrt{\s_k\at{r}} \in [R_k^{\min},R_k^{\max}]\text{ for }k = 2,\ldots,d
\]
therefore procedure RANDOM-INTEGER-POINT always succeeds. Let us show that $e=(r-z_1\, z^*_1)/(-b)$ is totally positive, where $b$ is the parameter from the definition of the quaternion algebra $Q$. Note that $e$ being totally positive is required by procedures IS-EASILY-SOLVABLE~(line \ref{ln:appr:is-easy}, Fig.~\ref{fig:proc-approximate}) and FAST-SOLVE-NORM-EQUATION~(line \ref{ln:appr:norm-eq}, Fig.~\ref{fig:proc-approximate}). For all $k=\rg{1}{d}$ it the case that
\[
\s_k\at{e}=\at{\s_k\at{r} - \abss{\s_{k,+}\at{z_1}}}/{-\s_k\at{b}}.
\]
By definition, $b$ is totally negative and $\s_k\at{e} >0 $ if and only if $\s_k\at{r} - \abss{\s_{k,+}\at{z_1}} > 0$ . By Theorem~\ref{thm:find-point} the output of procedure RANDOM-INTEGER-POINT~(line \ref{ln:appr:find-pt}, Fig.~\ref{fig:proc-approximate}) satisfies $\abs{\s_{k,+}\at{z_1}} \le R^{\min}_k$ for $k=\rg{2}{d}$. By Theorem~\ref{thm:suitable-norm} totally positive element $r$ of $\z_F$ satisfies $R^{\min}_k \le \sqrt{\s_k\at{r}}$ for $k=\rg{2}{d}$ and therefore $\s_k\at{e} >0 $ for $k=\rg{2}{d}$. We have already shown above that $\abs{\s_{k,+}\at{z_1}} \le \sqrt{\s_1\at{r}}$ which implies $\s_1\at{e} >0 $.

It remains to show that our algorithm terminates and runs on average in time polynomial in $\log\at{1/\ve}$. Procedure  SUITABLE-Q-NORM~(line \ref{ln:appr:suitable-norm}, Fig.~\ref{fig:proc-approximate}) runs in time polynomial in $\log\at{1/\ve}$ by Theorem~\ref{thm:suitable-norm}. Let us show that all procedures inside the loop run in polynomial time. Procedure RANDOM-INTEGER-POINT runs in polynomial time in $\log\at{1/\ve}$ according to Theorem~\ref{thm:find-point}.

Next we show that the logarithm of
 \[
 T_2\at{e}= \sum_{k=1}^d \s^2_k\at{e}
 \]
is bounded by polynomial in $\log\at{1/\ve}$. This implies that procedures IS-EASILY-SOLVABLE and FAST-SOLVE-NORM-EQ run on average in polynomial time according to Theorem~\ref{thm:norm-eq}. Indeed we have $\s_k\at{e} \le \s_k\at{r}/\s_k\at{-b}$, $\sqrt{\s_1\at{r}}$ is bounded by $4 R^{\max}_1/\ve^2$ and $\sqrt{\s_k\at{r}}$ are bounded by $R^{\max}_k$ for $k=\rg{2}{d}$. 
Finally, arithmetic in the number field~(line~\ref{ln:appr:e}, Fig.~\ref{fig:proc-approximate}) can be done in time polynomial in $\log T_2\at{z_1}$ and $\log T_2\at{r}$\cite{BF:2012}. We bound $\log T_2\at{z_1}$ using inequalities $\abss{\s_{k+1}\at{z_1}} \le \s_k\at{r}$. We also perform the test $e \in \z_F$~(line~\ref{ln:appr:is-easy}, Fig.~\ref{fig:proc-approximate}). It can be performed in polynomial time in $\log T_2\at{r}$, because it can be reduced to multiplying a vector over $\q$ with the norm bounded by $C \log T_2\at{r}$ by fixed  matrix over~$\q$.

We conclude that procedure APPROXIMATE runs on average in time polynomial in $\log\at{1/\ve}$ under the conjecture that the fraction of points in set $\set{ z \in \z_K : \vecs\at{z} \in S_{r,\vp,\ve} }$ for which we can reach line \ref{ln:appr:norm-eq} (Fig.~\ref{fig:proc-approximate}) and successfully find $z_2$ scales as $\Omega\at{\log\at{1/\ve}}$. This is because according to Theorem~\ref{thm:find-point} we sample all points from $\set{ z \in \z_K : \vecs\at{z} \in S_{r,\vp,\ve} }$ sufficiently uniformly. 
\end{proof}

\subsection{Picking a suitable quaternion norm} \label{sec:snorm}

In this subsection we are going to prove the following theorem:

\suitablenorm*

\begin{figure}[!ht]
\protect\caption{\label{fig:proc-suitable-q-norm}SUITABLE-Q-NORM procedure. }
\rule[0.5ex]{1\columnwidth}{1pt}

\begin{centering}

\begin{algorithmic}[1]
\fxinput $\rg{\p_1}{\p_M},R^{\min}$
\Statex[1] $\p_1,\ldots,\p_M$ are prime ideals of $\z_F$
\Statex[1] $R^{\min}$ is a vector from $\at{0,\infty}^d$
\inp $\rg{L_1}{L_M}, \ve$
\Statex[1] $\rg{L_1}{L_M}$ non-negative integers defining ideal $\trg{\p_1^{L_1}}{\p_M^{L_M}}$
\Statex[1] $\ve$ is a real number
\Procedure{SUITABLE-Q-NORM}{}
\offline
\State $\rg{\delta_0}{\delta_d} \gets$ \Call{UNIT-ADJUST}{$F$}
\State \Call{TOTALLY-POS-GENERATOR}{$\rg{\p_1}{\p_M}$} 
\State $C_{\min},C_{\max},R^{\max} \gets $ \Call{TARGET-SIZE}{$\rg{\p_1}{\p_M},R^{\min},\rg{\delta_0}{\delta_d}$} \Comment See Fig.~\ref{fig:proc-target-size}
\State \Return $C_{\min},C_{\max},R^{\max}$
\spec  $C_{\min},C_{\max}$ are real numbers, $R^{\max}$ is the vector in $\at{0,\infty}^d$
\online
\State HAS-TP-GEN,$r \gets$ \Call{TOTALLY-POS-GENERATOR}{$\rg{L_1}{L_M}$}\label{ln:sqn:r}
\If{$\Not$ HAS-TP-GEN }
\State \Return FALSE, 0 \Comment Narrow class group of $\trg{\p_1^{L_1}}{\p_M^{L_M}}$ is non-trivial \label{ln:sqn:false}
\EndIf
\State $\rg{t_1}{t_d} \gets $ \Call{TARGET-SIZE}{$r,\rg{L_1}{L_M},\ve$}  \Comment See Fig.~\ref{fig:proc-target-size}
\State $u \gets$ \Call{UNIT-ADJUST}{$\rg{t_1}{t_d}$} \label{ln:appr:unit-adj} \label{ln:sqn:u} \Comment See Fig.~\ref{fig:proc-unit-adjust}
\State \Return TRUE, $ru^{2}$ \label{ln:appr:ru}
\EndProcedure
\out FALSE,0 if ideal $\trg{\p_1^{L_1}}{\p_M^{L_M}}$ does not have a totally positive generator, or
\Statex[2] TRUE,$r$ otherwise. Totally positive algebraic integer $r$ from $\z_F$ is such that
\Statex (a) $r \z_F = \trg{\p_1^{L_1}}{\p_M^{L_M}} $, 
\Statex (b) $\ve^2 \sqrt{\s_1\at{r}} /4 \in  [R^{\min}_1,R^{\max}_1]$ and $\sqrt{\s_k\at{r}} \in [R^{\min}_k,R^{\max}_k]$ for $k=\rg{2}{d}$.
\end{algorithmic}
\par\end{centering}

\rule[0.5ex]{1\columnwidth}{1pt}

\end{figure}

The proof relies on the following proposition proven in this and the next sections.

\begin{restatable}{prop}{totallypos}
\label{prop:totally-pos} Given non-negative integers $L_1,\ldots,L_M$ there is an algorithm~(procedure TOTALLY-POS-GEN, Fig.~\ref{fig:proc-totally-pos-generator}) that decides if ideal $\trg{\p_1^{L_1}}{\p_M^{L_M}}$ has a totally positive generator. The algorithm  also finds a totally positive generator $r$ of the ideal if it exists. The algorithm runs in time polynomial in $L_1,\ldots,L_M$ and $\log T_2\at{r}$ is bounded by the function that is linear in $L_1,\ldots,L_M$.
\end{restatable}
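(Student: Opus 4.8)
The plan is to reduce the problem to the structure of the \emph{narrow class group} $\mathrm{Cl}^{+}(F) = I_F/P_F^{+}$, where $P_F^{+}$ is the group of principal ideals admitting a totally positive generator. By definition an integral ideal $\trg{\p_1^{L_1}}{\p_M^{L_M}}$ has a totally positive generator if and only if it lies in the trivial narrow class, so the decision part of the procedure is exactly a membership test in $\mathrm{Cl}^{+}(F)$, and the witness-producing part is the construction of the generator. The equivalence, together with the solvability of the sign systems used below, rests on the exact sequence $1 \to (\z/2\z)^d/\sgn(\z_F^{\times}) \to \mathrm{Cl}^{+}(F) \to \mathrm{Cl}(F) \to 1$ relating the narrow and ordinary class groups.

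In the offline stage (whose cost is unconstrained) I would compute $\mathrm{Cl}^{+}(F)$, the classes $[\p_k]$, and the exponent $e$ of the group. Since $e[\p_k]=0$, each $\p_k^{e}$ is narrow-trivial and has a totally positive generator $\pi_k$, computed once as follows: solve the principal ideal problem to obtain any generator $\alpha$ of $\p_k^{e}$; compute a system of fundamental units and the sign vectors $\at{\sgn\s_1(\cdot),\ldots,\sgn\s_d(\cdot)}\in(\z/2\z)^d$ of $\alpha$ and of the units; and solve the resulting $\f_2$-linear system for a unit $u$ with $\alpha u$ totally positive, setting $\pi_k=\alpha u$. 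The same routine applied to each residue vector $s\in\{0,\ldots,e-1\}^M$ whose class $\sum_k s_k[\p_k]$ is trivial produces a totally positive generator $g_s$ of $\trg{\p_1^{s_1}}{\p_M^{s_M}}$; these finitely many generators are tabulated, and every $\pi_k$ and $g_s$ is LLL-reduced against the lattice $\vecs(\z_F)$ (see Section~\ref{sec:lattice}) so that the logarithms $\log\abs{\s_m(\pi_k)}$, $\log\abs{\s_m(g_s)}$ are bounded by absolute constants.

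In the online stage, given $(L_1,\ldots,L_M)$, I would write $L_k=e\,q_k+s_k$ with $0\le s_k<e$. Because $\trg{\p_1^{e q_1}}{\p_M^{e q_M}}=\prod_k\pi_k^{q_k}\z_F$ is always narrow-trivial, the full ideal is narrow-trivial exactly when the residue ideal $\trg{\p_1^{s_1}}{\p_M^{s_M}}$ is, which is the table lookup (equivalently, testing $\sum_k s_k[\p_k]=0$ in the finite group $\mathrm{Cl}^{+}(F)$). If the test fails I return FALSE; otherwise I output
\[
r \;=\; g_s\cdot\prod_{k=1}^{M}\pi_k^{\,q_k}.
\]
Correctness is immediate, since $r\z_F=\trg{\p_1^{s_1}}{\p_M^{s_M}}\cdot\trg{\p_1^{e q_1}}{\p_M^{e q_M}}=\trg{\p_1^{L_1}}{\p_M^{L_M}}$ and $r$ is a product of totally positive elements, hence totally positive. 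For the size bound, $\log\abs{\s_m(r)}=\log\abs{\s_m(g_s)}+\sum_k q_k\log\abs{\s_m(\pi_k)}$ is a linear function of $(L_1,\ldots,L_M)$ because $q_k\le L_k$ and the per-embedding logarithms are precomputed constants; as $r$ is a nonzero algebraic integer we have $\max_m\abs{\s_m(r)}\ge 1$, so $\log T_2\at{r}\le \log d+2\max_m\log\abs{\s_m(r)}$ is bounded above by a linear function of $L_1,\ldots,L_M$. The online work is a few reductions modulo $e$, one lookup, and $M$ exponentiations by repeated squaring producing output of linear size, so it runs in time polynomial in $L_1,\ldots,L_M$.

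The genuinely hard ingredients, namely computing $\mathrm{Cl}^{+}(F)$ and the unit group and solving the principal ideal problem, are all confined to the offline stage, where arbitrary precomputation is permitted. I expect the main obstacle to be the \emph{size} control rather than the decision itself: a generator returned by a PIP solver can be astronomically large, so the LLL-reduction of each $\pi_k$ and $g_s$ is essential, as it is precisely what keeps the constants $\log\abs{\s_m(\pi_k)}$ bounded and thereby upgrades the final estimate on $\log T_2\at{r}$ from polynomial to linear in $L_1,\ldots,L_M$.
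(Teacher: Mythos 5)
Your proposal is correct and follows essentially the same route as the paper: precompute totally positive generators of suitable powers of each $\p_k$ together with a finite table of totally positive generators for the residue ideals, then online reduce the exponents, look up the residue, and multiply. The only differences are cosmetic --- the paper uses a per-prime minimal exponent $N_k$ (bounded by twice the order of $[\p_k]$ in the class group) instead of the exponent of the narrow class group, and it controls $\log T_2\at{r}$ via submultiplicativity of $T_2$ under products rather than via LLL-reduction of the precomputed generators.
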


\begin{restatable}{prop}{unitadjust}
\label{prop:unit-adjust} There exists real numbers $\delta_0$ and $\rg{\delta_1}{\delta_d}$~(computed by the offline part of the procedure UNIT-ADJUST, Fig.~\ref{fig:proc-unit-adjust}) such that there exists an algorithm~(online part of the procedure UNIT-ADJUST, Fig.~\ref{fig:proc-unit-adjust}) that for any real numbers $\rg{t_1}{t_d}$ finds a unit $u$ of $\z_F$ such that the following inequalities hold
\[
 \abs{ \log\abs{\s_k\at{u}} - t_k} \le \log\delta_k \text{ for } k=\rg{1}{d},
\]
under the assumption that $\abs{\prg{t_1}{t_d}} < \log\delta_0 $. The runtime of the algorithm is bounded  by a polynomial in $\nrm{t}$.
\end{restatable}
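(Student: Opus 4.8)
The plan is to recognize this as a covering problem for the logarithmic unit lattice of $F$, solved by Babai-style rounding in a precomputed basis of fundamental units. Write $\mathrm{Log}:\z_F^\times \to \r^d$, $\mathrm{Log}\at{u}=\at{\log\abs{\s_1\at{u}},\ldots,\log\abs{\s_d\at{u}}}$. Since $\abs{N_{F/\q}\at{u}}=1$ for every unit, the image of $\mathrm{Log}$ lies in the hyperplane $H=\set{x\in\r^d:\prg{x_1}{x_d}=0}$, and by Dirichlet's unit theorem (for a totally real field of degree $d$ the free rank of $\z_F^\times$ is $d-1$) the image $\Lambda=\mathrm{Log}\at{\z_F^\times}$ is a full-rank lattice in $H$. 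Thus the all-ones direction is exactly the direction transverse to $\Lambda$, which is why the hypothesis $\abs{\prg{t_1}{t_d}}<\log\delta_0$ is the natural one: it bounds the component of the target $t$ orthogonal to $H$, i.e.\ the part that no unit can ever match.

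Offline I would compute a system of fundamental units $\eta_1,\ldots,\eta_{d-1}$ by any standard unit-group algorithm (this may be expensive and depends only on $F$), set $\ell_j=\mathrm{Log}\at{\eta_j}$, and store both the basis $B=\at{\ell_1,\ldots,\ell_{d-1}}$ of $\Lambda$ and the inverse of the change-of-basis map expressing elements of $H$ in the coordinates $\ell_j$. Using an LLL- or HKZ-reduced basis (cf.\ Section~\ref{sec:lattice}) keeps the resulting constants small, though any basis yields valid constants. I would then define
\[
 \log\delta_k=\frac{\log\delta_0}{d}+\frac{1}{2}\sum_{j=1}^{d-1}\abs{\at{\ell_j}_k}\quad\text{for }k=\rg{1}{d},
\]
with $\delta_0>1$ chosen as large as the application requires; these depend only on $F$.

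Online, given $t$, I would orthogonally project onto $H$, $t_H=t-\tfrac1d\at{\prg{t_1}{t_d}}\bsone$, express $t_H=\sum_{j=1}^{d-1}c_j\ell_j$ via the precomputed inverse, round each coordinate $n_j=\rnd{c_j}$, and output $u=\trg{\eta_1^{n_1}}{\eta_{d-1}^{n_{d-1}}}$. Then $\mathrm{Log}\at{u}=\sum_j n_j\ell_j\in H$, and for each coordinate $k$,
\[
 \abs{\log\abs{\s_k\at{u}}-t_k}\le\abs{t_k-\at{t_H}_k}+\abs{\at{t_H}_k-\mathrm{Log}\at{u}_k}\le\frac{\log\delta_0}{d}+\frac{1}{2}\sum_{j=1}^{d-1}\abs{\at{\ell_j}_k}=\log\delta_k,
\]
where the first term is the projection error bounded by the hypothesis and the second is the rounding error, since $\abs{c_j-n_j}\le 1/2$. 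This is exactly the required inequality.

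For the runtime, the coefficient vector satisfies $\nrm{c}\le\nrm{B^{-1}}\,\nrm{t_H}\le\nrm{B^{-1}}\,\nrm{t}$ with $\nrm{B^{-1}}$ a field constant, so each $\abs{n_j}=O\at{\nrm{t}}$, and the projection and the linear solve are $O\at{d^2}$ arithmetic operations. The only point needing care is that $u$ itself be computable in time polynomial in $\nrm{t}$: here I would note that $\log\abs{\s_k\at{u}}=t_k+O\at{1}$ forces $\log T_2\at{u}=O\at{\nrm{t}}$, so $u$ has polynomially many bits and can be assembled from the stored $\eta_j$ by repeated squaring in $O\at{\poly\at{\nrm{t}}}$ time. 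The main obstacle is therefore not the approximation, which is immediate once $\Lambda$ has full rank in $H$, but verifying that the output unit has controlled size; this is handled by the a priori bound on $\mathrm{Log}\at{u}$ inherited from the target $t$.
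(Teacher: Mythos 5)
Your proof is correct and follows essentially the same route as the paper: precompute a system of fundamental units offline, then Babai-round the target $t$ against the logarithmic unit lattice and bound the per-coordinate error by half the $\ell_1$-size of the basis columns plus the transverse defect controlled by $\abs{\prg{t_1}{t_d}}<\log\delta_0$. The only (cosmetic) difference is that you handle the rank deficiency by projecting $t$ onto the trace-zero hyperplane before rounding, whereas the paper augments the $(d-1)$ log-unit columns with an extra column $2\log\delta_0\cdot\bsone$ and rounds in $\r^d$; both yield the same constants up to the $\log\delta_0$ versus $\log\delta_0/d$ term and the same polynomial runtime bound.
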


\begin{restatable}{prop}{targetsize} \label{prop:target-size} Given real numbers $\rg{\delta_0}{\delta_d} > 1$, vector $R^{\min}$ from $\at{0,\infty}^d$, and prime ideals $\rg{\p_1}{\p_M}$ there exist real numbers $C_{\min},C_{\max}$ and vector $R^{\max}$ from $\at{0,\infty}^d$~(computed by the offline part of the procedure TARGET-SIZE, Fig.~\ref{fig:proc-target-size}) such that there exist an algorithm~(online part of the procedure TARGET-SIZE, Fig.~\ref{fig:proc-target-size}) that given non-negative integers $\rg{L_1}{L_M}$, real number $\ve$ and totally positive element $r$ of $F$ finds real numbers $\rg{t_1}{t_d}$ such that 
\begin{eqnarray*}
 t_1 + \log\at{\sqrt{\s_k\at{r}}\ve^2/4} & \in &  \of{ \log R^{\min}_1 + \log \delta_1, \log R^{\max}_1 - \log \delta_1   } \\
 t_k + \log\sqrt{\s_k\at{r}} & \in &  \of{ \log R^{\min}_k + \log \delta_k, \log R^{\max}_k - \log \delta_k   }, k=\rg{2}{d}
\end{eqnarray*}
and $\abs{\prg{t_1}{t_d}} < \log\delta_0 $. The algorithm succeeds under the assumption that $L_1 \log N\at{\p_1} + \ldots +  L_M \log N\at{\p_M} - 4\log\at{1/\ve} \in [C_{\min},C_{\max}] $.

The runtime of the algorithm is bounded by a polynomial in $\log\at{1/\ve}$ and $\log T_2\at{r}$. The norm $\nrm{t}$ is bounded by the function that is linear in the same variables.

\end{restatable}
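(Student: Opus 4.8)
The plan is to output, for each coordinate, the midpoint of the interval that $t_k$ is required to land in, and then to arrange the offline constants so that the one remaining constraint, $\abs{\prg{t_1}{t_d}} < \log\delta_0$, holds automatically. Write the requirement on coordinate $k$ as $t_k \in I_k$, where $I_1 = \of{\log R^{\min}_1 + \log\delta_1,\,\log R^{\max}_1 - \log\delta_1} - \log\at{\sqrt{\s_1\at{r}}\ve^2/4}$ and $I_k = \of{\log R^{\min}_k + \log\delta_k,\,\log R^{\max}_k - \log\delta_k} - \log\sqrt{\s_k\at{r}}$ for $k=\rg 2d$ (the subtraction denoting a translation of the bracketed interval; I read the $\s_k$ in the first displayed line of the statement as $\s_1$, consistent with the output specification of SUITABLE-Q-NORM). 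Provided $R^{\max}$ is chosen offline with $\log R^{\max}_k - \log R^{\min}_k \ge 2\log\delta_k$, each $I_k$ is a nonempty interval whose midpoint $m_k$ lies in $I_k$. The online algorithm will return $t_k = m_k$, so the two membership conditions hold by construction; everything then reduces to bounding the sum.

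The key step will be to show that $\prg{m_1}{m_d}$ is a bounded constant, free of any dependence on $\ve$ or on the size of $r$. Because $r$ is a totally positive generator of $\trg{\p_1^{L_1}}{\p_M^{L_M}}$, the product of its conjugates equals the absolute norm of that ideal, so $\sum_{k=1}^d \log\s_k\at{r} = \prg{L_1\log N\at{\p_1}}{L_M\log N\at{\p_M}}$. Summing the midpoints and using $-\log\at{\ve^2/4} = \log 4 + 2\log\at{1/\ve}$ yields
\[
\prg{m_1}{m_d} = A + \log 4 - \tfrac12\at{\prg{L_1\log N\at{\p_1}}{L_M\log N\at{\p_M}}} + 2\log\at{1/\ve},
\]
where $A := \sum_{k=1}^d \tfrac12\at{\log R^{\min}_k + \log R^{\max}_k}$ is an offline constant. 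Substituting the hypothesis $E := \prg{L_1\log N\at{\p_1}}{L_M\log N\at{\p_M}} - 4\log\at{1/\ve} \in \of{C_{\min},C_{\max}}$, the factor $4$ is precisely what cancels the two $\log\at{1/\ve}$ terms, leaving $\prg{m_1}{m_d} = A + \log 4 - E/2$.

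With this identity the offline stage is immediate: after fixing $R^{\max}$ (hence $A$), I would set $C_{\min} = 2A + 2\log 4 - \log\delta_0$ and $C_{\max} = 2A + 2\log 4 + \log\delta_0$, which is a nonempty interval since $\delta_0 > 1$. For every admissible $E \in \of{C_{\min},C_{\max}}$ the identity then gives $\abs{\prg{m_1}{m_d}} \le \tfrac12\log\delta_0 < \log\delta_0$, so $t_k = m_k$ meets all three requirements. I view this cancellation---that the $4\log\at{1/\ve}$ normalization in the input hypothesis and the $\ve^2/4$ in the first coordinate are tuned to annihilate all $\ve$- and $r$-dependence in $\prg{m_1}{m_d}$---as the real content of the proposition; the remaining steps are bookkeeping.

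For the complexity claims, the online part evaluates only the $d$ numbers $m_k$, each a fixed combination of $\log\s_k\at{r}$, $\log\at{1/\ve}$, and precomputed constants; evaluating the embeddings $\s_k\at{r}$ and their logarithms to the precision needed downstream costs time polynomial in $\log T_2\at{r}$ and $\log\at{1/\ve}$. Since $r$ is a nonzero totally positive algebraic integer, $\prod_{k}\s_k\at{r} = N_{F/\q}\at{r} \ge 1$ while each $\s_k\at{r} \le \sqrt{T_2\at{r}}$, so $\abs{\log\s_k\at{r}}$ is $O\at{\log T_2\at{r}}$. Hence each $\abs{m_k}$ is bounded by a term linear in $\log T_2\at{r}$ together with (for $k=1$ only) a term linear in $\log\at{1/\ve}$, and therefore $\nrm{t}$ is linear in $\log T_2\at{r}$ and $\log\at{1/\ve}$, as required.
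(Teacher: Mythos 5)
Your proposal is correct and follows essentially the same route as the paper's proof: choose an explicit point in each required interval, use $\sum_k\log\s_k(r)=\log N_{F/\q}(r)=\sum_j L_j\log N(\p_j)$ together with the cancellation of the $4\log(1/\ve)$ term against the $\ve^2$ in the first coordinate to reduce $\prg{t_1}{t_d}$ to an affine function of $E$, and then tune $C_{\min},C_{\max}$ to an interval of width $2\log\delta_0$ so that $\abs{\prg{t_1}{t_d}}<\log\delta_0$. The only difference is cosmetic (you take the midpoint of each interval where the paper takes the left endpoint, shifting $C_{\min},C_{\max}$ accordingly); if anything, make sure $\log R^{\max}_k-\log R^{\min}_k$ strictly exceeds $2\log\delta_k$ so the membership constraints retain slack for finite-precision arithmetic, a point the paper handles via the $\log\delta_0/d$ margin.
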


\begin{proof}[Proof of Theorem~\ref{thm:suitable-norm}]
First we prove that the procedure terminates in polynomial time in $L_1,\ldots,L_M$ when it returns FALSE. Indeed, when ideal $\trg{\p_1^{L_1}}{\p_M^{L_M}}$ does not have a totally positive generator procedure TOTALLY-POS-GENERATOR returns FALSE and procedure SUITABLE-Q-NORM terminates~(line~\ref{ln:sqn:false}, Fig.~\ref{fig:proc-suitable-q-norm}). According to Proposition~\ref{prop:totally-pos} procedure TOTALLY-POS-GENERATOR  runs    in polynomial time in $L_1,\ldots,L_M$.

Next we consider the case when the output of SUITABLE-Q-NORM procedure is TRUE. Let us first prove that the output of SUITABLE-Q-NORM procedure~(Fig.~\ref{fig:proc-suitable-q-norm}) is correct in this case. By Proposition~\ref{prop:totally-pos} algebraic integer $r$~(line~\ref{ln:sqn:r}, Fig.~\ref{fig:proc-suitable-q-norm}) is totally positive and $r\z_F=\trg{\p_1^{L_1}}{\p_M^{L_M}}$. By Proposition~\ref{prop:unit-adjust} algebraic integer $u$~(line~\ref{ln:sqn:u}, Fig.~\ref{fig:proc-suitable-q-norm}) is a unit. Therefore $ru^{2}$ is also totally positive and also generates ideal $\trg{\p_1^{L_1}}{\p_M^{L_M}}$.

It remains to show that $\s_k\at{r u^2}$ satisfy required inequalities. By Proposition~\ref{prop:unit-adjust} unit $u$~(line~\ref{ln:sqn:u}, Fig.~\ref{fig:proc-suitable-q-norm}) computed by procedure UNIT-ADJUST satisfies the following inequalities:
\[
\abs{ \log\abs{\s_k\at{u} }  - t_k } \le  \log\delta_k, k=\rg{1}{d},
\]
because by Proposition~\ref{prop:target-size} procedure TARGET-SIZE ensures that $\abs{\prg{t_1}{t_d}} < \log\delta_0 $. Now we see that
\begin{eqnarray*}
 \log{\sqrt{\s_k\at{ru^2}}} -  \log{\sqrt{\s_k\at{r}}} & = &  \log\abs{\s_k\at{u}} \in [t_k - \log \delta_k , t_k + \log \delta_k ] \\
  \log\at{{\sqrt{\s_1\at{ru^2}}}\ve^2/4} -  \log\at{\sqrt{\s_1\at{r}}\ve^2/4} & = &  \log\abs{\s_1\at{u}} \in [t_1 - \log \delta_1 , t_1 + \log \delta_1 ]
\end{eqnarray*}
This immediately implies that 
\begin{eqnarray*}
\log{\sqrt{\s_k\at{ru^2}}} & \in &  [\log R^{\min}_k,\log R^{\max}_k], \text{ for } k = 2,\ldots,d \\
\log\at{ {\sqrt{\s_1\at{ru^2}}}\ve^2/4} & \in &  [\log R^{\min}_1,\log R^{\max}_1]
\end{eqnarray*}
We now show that the runtime of our algorithm is bounded by a polynomial in $\log\at{1/\ve}$. All $L_k$ are bounded by function linear in $\log\at{1/\ve}$.  Procedure TOTALLY-POS-GENERATOR runs in polynomial time and produces $r$ such that $\log T_2\at{r}$ is bounded by a function linear in $L_1,\ldots,L_M$. This ensures that procedure TARGET-SIZE outputs $t_1,\ldots,t_d$ such that their bit size is bounded by polynomial in $\log\at{1/\ve}$. It also ensures that $\nrm{t}$ is bounded by a function linear in $\log\at{1/\ve}$. This ensures that procedure UNIT-ADJUST runs in polynomial time. Note that for unit $u$~(computed in line~\ref{ln:sqn:u}, Fig.~\ref{fig:proc-suitable-q-norm}) we have
\[
 \log \sqrt{T_2\at{u}} = \log \sqrt{\sum_{k=1}^d \s^2_k\at{u} } \le \sqrt{d} \max_k \log\abs{\s_k} \le \sqrt{d}\max_k \at{t_k + \log \delta_k}  \le \sqrt{d}\at{  \nrm{t} + \max_k \log \delta_k.}
\]
Therefore $ \log T_2\at{u}$ is bounded by a function linear in $\log\at{1/\ve}$. Hence, the time spent on computing $ru^{2}$ is bounded by polynomial in $\log\at{1/\ve}$ as required \cite{BF:2012}. We have shown that procedure SUITABLE-Q-NORM runs in polynomial time.
\end{proof}

\begin{figure}[!th]

\protect\caption{\label{fig:proc-totally-pos-generator}TOTALLY-POS-GENERATOR procedure. }

\rule[0.5ex]{1\columnwidth}{1pt}

\begin{centering}

\begin{algorithmic}[1]
\fxinput Prime ideals $\rg{\p_1}{\p_M}$ of $\z_F$
\inp Non-negative integers $\rg{L_1}{L_M}$  defining ideal $\trg{\p_1^{L_1}}{\p_M^{L_M}}$
\Procedure{TOTALLY-POS-GENERATOR}{}
\offline
\State Find minimal $N_k$ such that ideal $\p^{N_k}_k$ has totally positive generator $r_k$
\ForAll{ $\rg{n_1}{n_k} \in \set{\rg{0}{N_1 - 1}} \times \dots \times \set{\rg{0}{N_M - 1}}   $ }
\If{ ideal $\trg{\p^{n_1}_1}{\p^{n_M}_M}$ has totally positive generator $r'$ }
\State $r\of{n_1,\ldots,n_M} \gets  r'$ 
\Else
\State $r\of{n_1,\ldots,n_M} \gets 0$ 
\EndIf
\EndFor
\spec  
\online
\State $l_k \gets L_k \mod N_k, s_k \gets (L_k - l_k) / N_k $ 
\If{ $r\of{l_1,\ldots,l_k} = 0 $ } \label{ln:tpg:check}
\State \Return FALSE, 0
\Else
\State \Return TRUE, $\trg{r^{s_1}_1}{r^{s_M}_M}r\of{l_1,\ldots,l_k}$
\EndIf
\EndProcedure
\out FALSE, 0 if ideal $\trg{\p_1^{L_1}}{\p_M^{L_M}}$ does not have totally positive generator, otherwise
\Statex[2] TRUE, $r$ where $r\z_F = \trg{\p_1^{L_1}}{\p_M^{L_M}}$ and $r$ is totally positive 
\end{algorithmic}
\par\end{centering}

\rule[0.5ex]{1\columnwidth}{1pt}

\end{figure}

Let us now discuss procedure TOTALLY-POS-GENERATOR~(Fig.~\ref{fig:proc-totally-pos-generator}) in more details. Finding totally positive generator of the ideal is strictly more difficult than finding a generator of the ideal. The latter problem is known to be hard and there is no polynomial time algorithm known for it. In our case, because the ideal we interested has special form $\p_1^{L_1}\ldots\p_m^{L_M}$ we can find a totally positive generator for it by precomputing certain information about $\p_1,\ldots,\p_m$~(in the offline part of TOTALLY-POS-GENERATOR procedure, Fig.~\ref{fig:proc-totally-pos-generator}). Let us now prove the following proposition:

\totallypos*

\begin{proof}
Let us first prove the correctness of the online part of the algorithm. In is not difficult to see that
\[
 \trg{\p_1^{L_1}}{\p_M^{L_M}} = \trg{\at{\p_1^{N_1}}^{s_1}}{\at{\p_M^{N_M}}^{s_M}} \trg{\p_1^{l_1}}{\p_M^{l_M}}
\]
We know what each of ideals $\p_k^{N_k}$ has totally positive generator. Therefore $\trg{\p_1^{L_1}}{\p_M^{L_M}}$ has totally positive generator if and only if ideal $\trg{\p_1^{l_1}}{\p_M^{l_M}}$ does. This is what is checked on line~\ref{ln:tpg:check} in Fig.~\ref{fig:proc-totally-pos-generator}. We have shown that the procedure always returns FALSE, when $\trg{\p_1^{L_1}}{\p_M^{L_M}}$ does not have totally positive generator, and TRUE when it does. In the case when the procedure returns TRUE we have $r\of{l_1,\ldots,l_k}\z_F =  \trg{\p_1^{l_1}}{\p_M^{l_M}} $ and therefore
\[
 \trg{\p_1^{L_1}}{\p_M^{L_M}}= \trg{r_1^{s_1}\z_F}{r_M^{s_M}\z_F} \cdot r\of{l_1,\ldots,l_k} \z_F
\]
We see that $r_1^{s_1}\ldots r_M^{s_M} r\of{l_1,\ldots,l_k}$ is totally positive element of $\z_F$ that generates ideal $\trg{\p_1^{L_1}}{\p_M^{L_M}}$.

Let us now show that algorithm runs in polynomial time. The number of multiplications we need to perform is bounded by $\prg{L_1}{L_M}$.  Note that
\[
 T_2\at{\trg{x_1}{x_n}} \le d \prod_{k=1}^{n} T_2\at{x_k}, \text{ for } x_k \in F.
\]
Therefore, each time we perform a multiplication, the value $\log T_2$ of the arguments is bounded by a function linear in $\rg{L_1}{L_M}$. We conclude that $r_1^{s_1}\ldots r_M^{s_M} r\of{l_1,\ldots,l_k}$ is computed in time polynomial in $L_1,\ldots,L_M$. The inequality above also implies that $\log T_2$ of the algorithm is bounded by a function linear in $L_1,\ldots,L_M$.

It remains to show the correctness of the offline part of procedure. First, note that $N_k$ always exist. The fact that the class group of the number field is always finite implies that for each ideal $\p_k$ there exists a number $N'_k$ (dividing the order of class group) such that ideal $\p^{N'_k}_k$ is principal and has generator $\xi_k$. Note that $\xi_k^2$ is totally positive generator of $\p^{2 N'_k}_k$. We have shown that $N_k \le 2N'_k$. Computing the order of the class group and testing if the ideal is principal are two standard problems solved in computational number theory. Checking that given ideal has a totally positive generator can also be done using standard methods. If totally positive generator $r$ exists then the ratio of the ideal generator $\xi$ and $r$ must be a unit. Therefore once $\xi$ is known it remains to find unit $u$ such that $\xi u$ is totally positive. This can be done if and only if $\xi\z_F$ has a totally positive generator. It is not difficult to check this algorithmically, once the system of fundamental units of $\z_F$ has been computed.
\end{proof}

Note, that in case of Clifford+T and Clifford+$R_z\at{\pi/16}$ gate set the narrow class group of $F$ is trivial and therefore any ideal has a totally positive generator. This significantly simplifies procedure TOTALLY-POS-GENERATOR.

\begin{figure}[!ht]

\protect\caption{\label{fig:proc-target-size}Procedure TARGET-SIZE}

\rule[0.5ex]{1\columnwidth}{1pt}

\begin{centering}

\begin{algorithmic}[1]
\fxinput $\rg{\p_1}{\p_M},\rg{R_1}{R_d},\rg{\delta_0}{\delta_{d}}$
\Statex[1] $\rg{\p_1}{\p_M}$ are prime ideals of $\z_F$
\Statex[1]$\rg{R_1}{R_d},\rg{\delta_0}{\delta_{d}}$ are real numbers 
\inp $\rg{L_1}{L_M},\ve,r$
\Statex[1] $\rg{L_1}{L_M}$ are integers 
\Statex[1] $\ve$ is a real number, $r$ is a totally positive element of $F$.
\Procedure{TARGET-SIZE}{}
\offline
\State $C_{\min} \gets 2\at{\sum_{k=1}^{d}\log R_{k}+\sum_{k=1}^{d}\log\delta_{k}+2\log2},\, C_{\max} \gets C_{\min} + 2\log{\delta_0}$
\State $R^{\max}_k \gets \exp\at{ \log R_k^{\min} + 2\log \delta_k + \log \delta_0 / d } $
\State \Return $C_{\min}, C_{\max}, R^{\max}$
\spec $C_{\min},C_{\max}$ -- real numbers, $R^{\max}$ vector from $\at{0,\infty}^d$
\online
\State \assert $\prg{L_1 \log N\at{\p_1}}{L_M\log N\at{\p_M}} - 4\log(1/\ve) \in [C_{\min},C_{\max}]$
\State $t_1 \gets \log R^{\min}_1 + \log \delta_1 - \log\at{\sqrt{\s_1\at{r}}\ve^2/4}$ \label{ln:ts:sone}
\State $t_k \gets \log R^{\min}_k + \log \delta_k - \log\sqrt{\s_k\at{r}}$, for $t=\rg{2}{d}$ \label{ln:ts:sk}
\State \Return $t_1,\ldots,t_d$ 
\EndProcedure
\out $\rg{t_1}{t_d}$ are real number such that $\abs{\prg{t_1}{t_d}}\le \log \delta_0$,  
\Statex[2] $t_1 + \log\at{\sqrt{\s_k\at{r}}\ve^2/4}   \in    \of{ \log R^{\min}_1 + \log \delta_1, \log R^{\max}_1 - \log \delta_1 }$  , 
\Statex[2] $t_k + \log\sqrt{\s_k\at{r}}  \in   \of{ \log R^{\min}_k + \log \delta_k, \log R^{\max}_k - \log \delta_k   }, k=\rg{2}{d}
$

\end{algorithmic}
\par\end{centering}

\rule[0.5ex]{1\columnwidth}{1pt}

\end{figure}

On a high level procedure TARGET-SIZE~(Fig.~\ref{fig:proc-target-size}) is about finding a solution to the system of linear inequalities for $t_1,\ldots,t_d$ mentioned in the statement of the Proposition~\ref{prop:target-size}. In the offline part we compute constant $C$ that guarantees that the system has a solution. In addition we show that the solution to such a system can be represented by numbers of moderate bit-size, can be found in polynomial time and that the norm of the solution vector is bounded. Now we proceed to the formal proof.

\targetsize*
\begin{proof}
Let us first prove correctness of the procedure TARGET-SIZE~(Fig.~\ref{fig:proc-target-size}). On lines~\ref{ln:ts:sone},\ref{ln:ts:sk} in Fig.~\ref{fig:proc-target-size} we assign to $t_k$ the smallest values for which the constraints on $t_k$ are satisfied.  Next we show that our choice of $C_{\min},C_{\max}$ and $R^{\max}$ ensures that all other constraints are also satisfied. Indeed, $R^{\max}_k \ge R^{\min}_k + 2 \log \delta_k$ therefore intervals for $t_k$ are non-empty. It remains to show that $\abs{\prg{t_1}{t_d}} \le \delta_0 $. This follows from the following equalities:  
\begin{eqnarray*}
\prg{t_1}{t_d} & = & \sum_{k=1}^d \log R^{\min}_k + \sum_{k=1}^d \log \delta_k  -  \log\at{ \sqrt{ N_{F/\q}\at{r}} \ve^2/4 } \\
 & = & C_{\min}/2 - \frac{1}{2}\sum_{k=1}^{M} L_k \log N\at{\p_k} + 2\log\at{1/\ve}
\end{eqnarray*}
Therefore value $\prg{t_1}{t_d}$ belongs to the interval $\of{\at{C_{\min}-C_{\max}}/2,0}$. The definition of $C_{\max}$ precisely implies that $\of{\at{C_{\min}-C_{\max}}/2,0} = \of{-\log \delta_0, 0 }$ which gives requiered bound on the sum of $t_k$. Note that the anaysis above performed for $t_k$ will hold for any $t'_k$ in the interval $[t_k,t_k + \log\delta_0/d ]$. 

Parameter $\log\delta_0$ is needed to account for finite precision arithmetic we are using. It is not difficult to see that as soon as precision of arithmetic used is smaller then $\log\delta_0 /  C_1 d$ for sufficiently big fixed constant $C_1$ numbers $t_k$ computed within mentioned precision will satisfy all required constraints. It is sufficient to perform the calculation up to fixed precision independent on the online part of the algorithm input. This implies that all calculations in the online part can be performed in polynomial time. We finally note that $t_k$ are bounded by functions linear in $\log T_2\at{r}$  and  $\log\at{1/\ve}$ and therefore the same is true for $\nrm{t}$. As we have established that $t_k$ can be computed up to fixed precision, bound on $\nrm{t}$ implies a bound on the number of bits needed to specify each $t_k$. This concludes our proof.
\end{proof} 



\subsection{Solution region sampling} \label{sec:sampling}

\begin{figure}[ht!]

\protect\caption{\label{fig:proc-find-point} High level description of RANDOM-INTEGER-POINT procedure used in APPROXIMATE procedure in Fig.~\ref{fig:proc-approximate} }

\rule[0.5ex]{1\columnwidth}{1pt}

\begin{centering}

\begin{algorithmic}[1]
\fxinput $\z_K$ -- ring of integers of a CM field of degree $2d$
\inp $\vp,\ve,r$ 
\Procedure{RANDOM-INTEGER-POINT}{}
\offline
\State $z_1,\ldots,z_{2d}$ is a fixed integral basis of $\z_F$
\State $B = [\boldsymbol\sigma\at{z_1},\ldots, \boldsymbol\sigma\at{z_{2d}}]$ is a basis of the lattice associated to $\z_K$,
\State $R^{\min}_k = \frac{1}{2}\sqrt{ \max_j \re\at{\s_{k,+}\at{z_j} }^2 + \max_j \im\at{\s_{k,+}\at{z_j} }^2}$ for $k=1,\ldots,d$
\State $C_{\min},C_{\max}, R^{\max} \gets $ \Call{SUITABLE-Q-NORM}{$\rg{\p_1}{\p_M},R^{\min}$} \Comment See Fig.~\ref{fig:proc-suitable-q-norm}
\State $\mathrm{SHIFTS}$ $\gets \set{z \in \z_K : \nrm{ P_1 \vecs\at{z} } \le 2\sqrt{5}R_1^{\max}, \nrm{ P_k \vecs\at{z} } \le   R_k^{\max} + R_k^{\min}, k=\rg{2}{d} }$\label{ln:rip:shifts}
\State \Comment $P_k : \r^{2d} \rightarrow \r^2 $ is a projector to coordinates $2k-1$ and $2k$   
\State $p_0 \gets 1 / \at{ 8\abs{\mathrm{SHIFTS}}}, M \gets \abs{\mathrm{SHIFTS}}$ \Comment $p_0$ 
indicates how close the distribution
\State \Comment of the procedure output to the uniform over set $\Cand_{r,\vp,\ve}$
\State \Return $C_{\min},C_{\max}$
\spec $C_{\min},C_{\max}$ -- real numbers 
\online
\State $R \gets \sqrt{\s_1\at{r}}, H \gets 2R\ve\sqrt{4-4\ve^2},\,N_{\max}=\rnd{H/\ve^2 R} $ \Comment See Fig.\ref{fig:sampling}
\State $z_c \gets R(1-3\ve^2/4)e^{-t\vp/2},\,\Delta z \gets i e^{-i\vp/2} \ve^2 R /2  $ \Comment See Fig.\ref{fig:sampling}
\State $ \Delta Z \gets \at{\re \Delta z, \im \Delta z, 0, \ldots, 0}  \in \r^{2d}, Z_c \gets \at{\re z_c, \im  z_c, 0, \ldots, 0}  \in \r^{2d} $

\State Sample-found $\gets$ FALSE, $z \gets 0 $
\While{ $\Not$ Sample-found }
\State $N \gets$ random integer from the interval $[ -N_{\max} , N_{\max} ]$ \label{ln:rip:N}
\State $ t \gets Z_c + N \Delta Z $ \label{ln:rip:t}
\State $ m \gets \rnd{B^{-1}t} $ \label{ln:rip:m}\Comment Find $m$ such that $Bm \in t + \CC(B) $
\State Pick $z'$ uniformly at random from set $\mathrm{SHIFTS}$ \label{ln:rip:zpr}
\State $z \gets z' + m_1 z_1 + \ldots + m_{2d} z_{2d}$ \label{ln:rip:ze}
\State Sample-found $\gets \vecs\at{z} \in S_{r,\vp,\ve} \cap \set{ x \in \r^{2d} : \ip{x - t,\Delta Z} \in (-1/2,1/2] }$ \label{ln:rip:z}
\EndWhile 
\State \Return z 
\EndProcedure
\out $z$, the element of $\z_K$ such that 
\Statex[2] $\abs{\s_{k,+}\at{z}} \le R_k$ for  $k=2,\ldots,d$  and  $\re \at{\at{\s_{1,+}\at{z}-z_0}e^{-i\vp/2} } \ge 0, \abs{ \s_{1,+}\at{z} } \le R$  
\Statex[2] where $z_0 = R\at{1-\ve^2}e^{-i\vp/2}$ (see Fig.~\ref{fig:inequality} for the visualization of the condition on $\s_{1,+}\at{z}$)
\end{algorithmic}
\par\end{centering}

\rule[0.5ex]{1\columnwidth}{1pt}
\end{figure}

\begin{figure}[t]

\caption{\label{fig:sampling} Illustration of sampling scheme. The vertical
axis is $ \im \at z$ and the horizontal axis is $ \re \at z$. Small blue circles have radius $R^{\min}_1$. During our sampling procedure we pick the center of blue circle at random. Then we find point $\sum_k {m_k z_k}$ (see Fig.~\ref{fig:proc-find-point}) from $\z_K$ such that $ \s_{1,+}\at{\sum_k {m_k z_k}}$ is in blue circle. Next we compute $z$ by adding randomly picked shift $z'$ to $\sum_k {m_k z_k}$ and check if $\vecs\at{z}$ is in $S_{r,\vp,\ve}$. The set $P_1 S_{r,\vp,\ve}$ is a blue circle segment on the picture below. }

\rule[0.5ex]{1\columnwidth}{1pt}
\begin{centering}


\begin{tikzpicture}[scale=3,>=latex]
\pgfmathsetmacro{\textshift}{1}
\pgfmathsetmacro{\axessize}{1.4}
\pgfmathsetmacro{\xmax}{3}
\pgfmathsetmacro{\xmin}{-\xmax}
\pgfmathsetmacro{\R}{1.1} 
\pgfmathsetmacro{\eps}{0.5} 
\pgfmathsetmacro{\phi}{40} 
\pgfmathsetmacro{\cPh}{cos(\phi/2)}
\pgfmathsetmacro{\sPh}{sin(\phi/2)}
\pgfmathsetmacro{\d}{\R*\eps*sqrt(2-\eps*\eps)}
\pgfmathsetmacro{\dH}{\R*\eps*sqrt(1-\eps*\eps/4)}
\pgfmathsetmacro{\RR}{\R*(1-\eps*\eps)}
\pgfmathsetmacro{\RH}{\R*(1-\eps*\eps/2)}
\pgfmathsetmacro{\RC}{\R*(1-\eps*\eps*3/4)}
\pgfmathsetmacro{\DD}{\R*\eps*\eps/2)}

\clip(\xmin,-1.2) rectangle (\xmax,1.8);

\coordinate (O) at (0,0);
\coordinate (R) at ( \R * \cPh,  -\R * \sPh );
\coordinate (RO) at ( \R * \sPh,  \R * \cPh );
\coordinate (ZC) at ( \RR * \cPh,  -\RR * \sPh  );
\coordinate (ZCH) at ( \RH * \cPh,  -\RH * \sPh  );
\coordinate (ZCC) at ( \RC * \cPh,  -\RC * \sPh  );
\coordinate (dZ) at (\d*\sPh,\d*\cPh);
\coordinate (dZH) at (\dH*\sPh,\dH*\cPh);
\coordinate (Z1) at ($(ZC)+(dZ)$);
\coordinate (Z2) at ($(ZC)-(dZ)$);
\coordinate (Z1H) at ($(ZCH)+(dZH)$);
\coordinate (Z2H) at ($(ZCH)-(dZH)$);
\coordinate (Z1PH) at ($(ZC)+(dZH)$);
\coordinate (Z2PH) at ($(ZC)-(dZH)$);
\coordinate (DZ) at (\DD*\sPh,\DD*\cPh );
\coordinate (DZO) at (3*\DD*\cPh,-3*\DD*\sPh );
\coordinate (DZP) at (\DD*\sPh +\DD*\cPh ,\DD*\cPh - \DD*\sPh);
\coordinate (DZM) at (\DD*\sPh -\DD*\cPh ,\DD*\cPh + \DD*\sPh);
\pgfmathsetmacro{\M}{5}

\begin{scope}
\tkzClipCircle(O,R)
\fill [cyan!10!white] (Z1) -- (Z2) -- ($2*(Z2)$) -- ($2*(Z1)$)  ;
\end{scope}

\draw[ dashed] (Z2H) -- ($-1*(Z1H) - (DZO)$);
\draw[dashed] (Z1H) -- ($-1*(Z2H) - (DZO)$);
\draw[ dashed] (Z2) -- ($-1*(Z1)$);
\draw[dashed] (Z1) -- ($-1*(Z2)$);

\draw[<->] ($-1*(Z1H) - (DZO)$)--($-1*(Z2H) - (DZO)$) node [midway, below, sloped] (tN) {$H' =  R\ve\sqrt{4-\ve^2}$};
\draw[<->] ($-1*(Z1)$)--($-1*(Z2)$) node [midway, below, sloped] (tN) {$H =  2R\ve\sqrt{4-4\ve^2}$};


\draw[->,gray] (O)--(\axessize,0) node [left,below,black] (tN) {$\quad x=\Re(z)$};
\draw[->,gray] (O)--(0,\axessize) node [above,right,black] (tN) {$y=\Im(z)$};

\tkzDrawCircle(O,R)
\draw (Z1)--(Z2);
\foreach \x in {-\M,...,\M}
{
 \fill[cyan!20!white] ($(ZCC)+\x*(DZ)$) circle (\DD/2 * 0.8 );
 \draw[black] ($(ZCC)+\x*(DZ)$) circle (\DD/2 * 0.8 );
 \filldraw ($(ZCC)+\x*(DZ)$) circle (0.01);
} 

\filldraw (ZCC) circle (0.01) node [left] (tn) {$z_c\quad$};
 \draw[->,gray]   ($(tn.east)+(-0.15,0)$) -- (ZCC);
\filldraw ($(ZCC)+(DZ)$) circle (0.01) node [left] (tn) {$z_c + \Delta z \quad$};
 \draw[->,gray]   ($(tn.east)+(-0.15,0)$) --($(ZCC)+(DZ)$);
 \draw[->] (ZCC) -- ($(ZCC)+(DZ)$);

\node () at (O) {} node [below] (tN) {$0$};
\node[above right] (pA) at (Z1) {$ z_0 - id'e^{-i\vp/2}$};
\node[below right] (pB) at (Z2) {$ z_0 + id'e^{-i\vp/2}$};
\node[right] (pA) at (Z1H) {$ z_0 - ide^{-i\vp/2}$};
\node[right] (pB) at (Z2H) {$ z_0 + ide^{-i\vp/2}$};
\draw[->] (O)--($(RO)$) node [midway, below, sloped] (tN) {$\quad R$};

\node () at (2.1,1.1) {$\begin{array}{lcl}%
z_0 & = & R(1-\ve^2)e^{-i\vp/2} \\ %
z_c & = & R(1-3\ve^2/4)e^{-i\vp/2} \\ %
d' & = & \frac{R\ve}{2}\sqrt{2-\ve^2/4}\\ %
d & = & R\ve\sqrt{2-\ve^2}  \\ %
\Delta z & = & i e^{-i\vp/2} \ve^2 R /2 
\end{array}$};

\tkzDrawPoints[color=black,fill=red,size=6](O,R,Z1,Z2,ZCC,Z1H,Z2H);
\end{tikzpicture}
\end{centering}
\rule[0.5ex]{1\columnwidth}{1pt}

\end{figure}
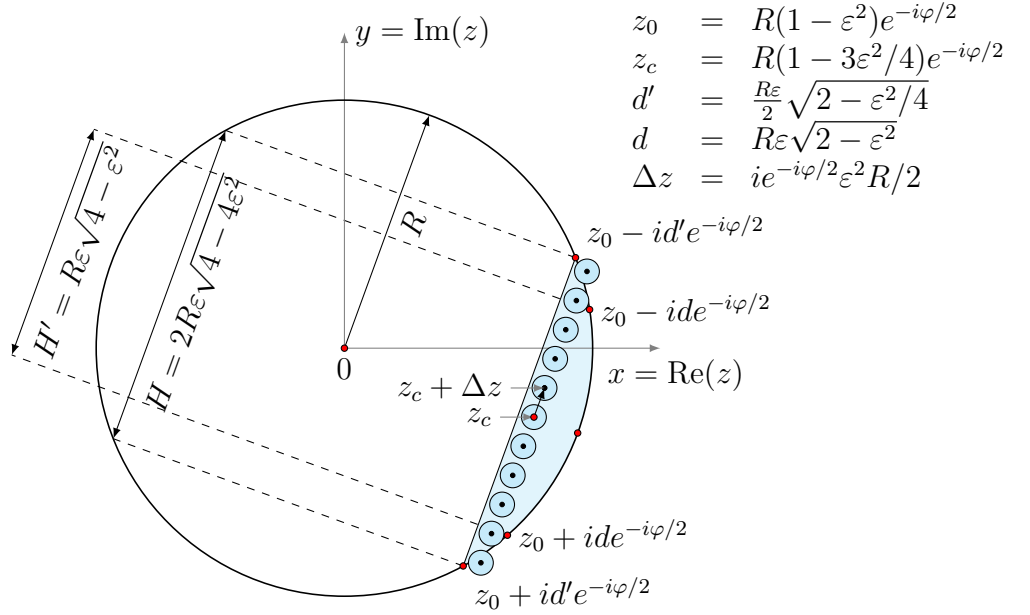

In this subsection we prove the Theorems~\ref{thm:find-point} used in the proof of Theorem \ref{thm:main} in Section~\ref{sec:high:level:desc}:

\findpoint*

\begin{proof} \label{thm:find-point-proof}

The procedure returns points from the set $\Cand_{r,\vp,\ve}$ because on line~\ref{ln:rip:z} in Fig.~\ref{fig:proc-find-point} we select $z$ such that $\vecs\at{z} \in S_{r,\vp,\ve}$ (see Equation~(\ref{eq:cand})). Let us next show that the procedure terminates, on average, after fixed number of loop interations. We estimate the probability $p_{\Cand}$ of variable Sample-found being TRUE (see line~\ref{ln:rip:z} in Fig.~\ref{fig:proc-find-point}). The number of the main loop iterations of RANDOM-INTEGER-POINT procedure has a geometric distribution with parameter $p_{\Cand}$. Let us now lower bound $p_{\Cand}$. Let $H'=R\ve\sqrt{4-\ve^2}$~(see Fig.~\ref{fig:sampling}). When the absolute value of $N$ (on line~\ref{ln:rip:N} in Fig.~\ref{fig:proc-find-point}) is less then $\rnd{H'/ \at{\ve^2 R}}-1$ (see Fig.~\ref{fig:sampling}) the shifted fundamental region $t + \CC\at{B}$ is a subset of $S_{r,\vp,\ve}$ (note that $t$ is computed on line~\ref{ln:rip:t} in Fig.~\ref{fig:proc-find-point} based on $N$). In the case if $z'=0$ (line \ref{ln:rip:zpr} in Fig.~\ref{fig:proc-find-point}) inclusion $t + \CC\at{B} \subset S_{r,\vp,\ve}$ implies that $\vecs\at{z}$ is in $S_{r,\vp,\ve}$. The probability of $z' = 0$ is equal to $1/\abs{\mathrm{SHIFTS}}$ (see line~\ref{ln:rip:shifts} in Fig.~\ref{fig:proc-find-point}). For this reason we can lower bound  $p_{\Cand}$ as: 
\[
 p_{\Cand} \ge \frac{2\rnd{H'/ \at{\ve^2 R}} - 1  }{2\rnd{H/\at{\ve^2 R}}+1} \cdot \frac{1}{\abs{\mathrm{SHIFTS}}} \ge \frac{ H'/H - 2\at{\ve^2 R / 2 H}  }{ 1 + 2\at{\ve^2 R / 2 H} } \cdot \frac{1}{\abs{\mathrm{SHIFTS}}}
\]
We note that $H'/H = \sqrt{4-\ve^2}/\at{2\sqrt{4-4\ve^2}} \in (1/2,1/\sqrt{3})$ and $(\ve^2 R / 2 H') = \ve / \at{ 4\sqrt{ 4 - 4\ve^2}} \in (0,1/(8\sqrt{3}))$. The constraint $\ve \in (0,1/2)$ implies that $p_{\Cand}$ is lower bounded by constant independent on the input to the online part of RANDOM-INTEGER-POINT procedure. 

We have shown that the online part of the procedure consist of a fixed number of arithmetic operations on average. To show that the procedure runs in polynomial time it is sufficient to show that the absolute value of the logarithm of absolute precision required for the computation is bounded by a polynomial in $\log\at{1/\ve}$. Let's analyze line~\ref{ln:rip:m} in Fig.~\ref{fig:proc-find-point} in more details. Let $\appr{B^{-1}}$ be an approximation to $B^{-1}$ to within precision $\delta_c$ and let $t'$ be an approximation to $t$ within precision $\delta_c$ and let $m' = \appr{B^{-1}} t'$. Now we show that the norm of the projection of $B m' - t'$ can also be bounded in terms of $R^{\min}_k$. For convenience, let $P_k$ be a projector on the subspace spanned by $e_{2k-1},e_{2k}$.
\[
 \nrm{P_{k}\at{B m' - t}} \le \nrm{t-t'} +
 \nrm{ P_k  B\at{ \rnd{\appr{B^{-1}}t'} - \appr{B^{-1}}t' } } +
 \nrm{t'} \nrm{B\appr{B^{-1}}-I}
\]
Now we see that $\nrm{t-t'}$ is bounded by $\delta_c$, the second term in the sum above is bounded by $R^{\min}_k$ and the third term is bounded by some fixed constant times $\delta_c \nrm{t'}$. This implies that we can find $m'$ such that $\nrm{P_{k}\at{B m' - t}} \le R^{\min}_k + \delta'_c$. The absolute value of the logarithm of absolute precision required for the computation is bounded by polynomial in $\log\at{1/\ve}$ and $\log\at{1/\delta'_c}$ because $\log \nrm{t'}$ is bounded by polynomial in $\log\at{1/\ve}$ . This is sufficient for our purposes because it is sufficient to choose $\log\at{1/\delta'_c}$ to be of order $\log \nrm{t'}$.

It remains to show that we get every point from $\Cand_{r,\vp,\ve}$ with probability at least $p_0 / \abs{\Cand_{r,\vp,\ve}} $. We first introduce some notation convenient for the proof. Let $\xi_N$ be a random variable corresponding to $N$ (line \ref{ln:rip:N} in Fig.~\ref{fig:proc-find-point}), $\xi_{z'}$ be a random variable corresponding $z'$ (line \ref{ln:rip:zpr} in Fig.~\ref{fig:proc-find-point}) and $\xi_z$ be a random variable corresponding to value $z$ computed on line~\ref{ln:rip:ze} in Fig.~\ref{fig:proc-find-point}. Random variable $\xi_z$ is a function of $\xi_N$ and $\xi_{z'}$. Random variable $\xi_N$ takes integer values in the range $[-N_{\max},N_{\max}]$ with equal probability and $\xi_{z'}$ takes values in the finite set $\mathrm{SHIFTS}$ also with equal probability. Variables $\xi_{z'}$ and $\xi_N$ are independent. Let $z_0$ be a fixed element of $\Cand_{r,\vp,\ve}$. It is sufficient to lower bound $P\at{\xi_z = z_0} $ because the probability of the output of RANDOM-REGION-CENTER being $z_0$ is $P\at{\xi_z = z_0}/P\at{\xi \in \Cand_{r,\vp,\ve} }$. Given $z_0$ there are unique values of $N$ (line~\ref{ln:rip:N}), $m$ (line~\ref{ln:rip:m}) and $t$ (line~\ref{ln:rip:t}), denoted as $N_0, m^0$ and $t_0$, such that $\vecs\at{z_0}$ is in 
\[
S_{r,\vp,\ve} \cap \set{ x \in \r^{2d} : \ip{x - t_0,\Delta Z} \in (-1/2,1/2] }.
\]
The equality 
\[
P \at{\xi_z = z_0} = P\at{\xi_z = z_0 / \xi_N = N_0 } P\at{\xi_N = N_0} = \frac{P\at{\xi_z = z_0 / \xi_N = N_0 }}{2 N_{\max} + 1}
\]
implies that it is sufficient to lower bound $P\at{\xi_z = z_0 / \xi_N = N_0 }$ and relate $2 N_{\max} + 1$ to the size of $\Cand_{r,\vp,\ve}$. Note that $P\at{\xi_z = z_0 / \xi_N = N_0 } = P\at{ z_0 = \xi'_z + \sum_k m_k^0 z_k } $. This implies that $P\at{\xi_z = z_0 / \xi_N = N_0 }$ is $1/\abs{\mathrm{SHIFTS}}$ if $z_0 - \sum_k m_k^0 z_k $  belongs to set $\mathrm{SHIFTS}$. Let us show that $z_0 - \sum_k m_k^0 z_k $ is always in $\mathrm{SHIFTS}$. It is sufficient to show that 
\begin{eqnarray*}
 & & \nrm{P_k\at{\vecs\at{z_0} - \vecs\at{ \sum_k m_k^0 z_k }} } = \nrm{P_k\at{\vecs\at{z_0} - B m^{0}} } \le  R^{\max}_k +  R^{\min}_k  \text{ for } k=\rg{2}{d} \\
 & &  \nrm{P_1\at{\vecs\at{z_0} - B m^{0}} }  \le 2\sqrt{5}R^{\max}_1
\end{eqnarray*}
It is useful to note that 
\[
 \nrm{P_k\at{\vecs\at{z_0} - B m^{0}} } \le \nrm{P_k\at{\vecs\at{z_0} - t_0 } } + \nrm{P_k\at{ t_0 - B m^{0} } } \le \nrm{P_k\at{\vecs\at{z_0} - t_0 } } + R_{k}^{\min}
\]
The fact that $\vecs\at{z_0}$ is in $ S_{r,\vp,\ve} \cap \set{ x \in \r^{2d} : \ip{x - t_0,\Delta Z} \in (-1/2,1/2] }$ implies that 
\begin{eqnarray*}
 & & \nrm{P_k\at{\vecs\at{z_0} - t_0 } } \le \sqrt{\s_k\at{r}}  \le  R^{\max}_k \text{ for } k=\rg{2}{d}. 
\end{eqnarray*}
To establish bound on $\nrm{P_1\at{\vecs\at{z_0} -  B m^{0} } } $ we observe that $P_1 \vecs\at{z_0}$ and $P_1 B m^{0}$ both belong to a set with the diameter $\sqrt{5}\ve^2 R/2$. We also have shown that: 
\[
\Cand_{r,\vp,\ve} \subset \mathrm{SHIFTS} +   \set{ \vecs^{-1} \at{ B \rnd{ B^{-1}\at{Z_c + N\Delta Z} } } : N \in [-N_{\max},N_{\max}] }
\]

Finally we note that 
\[
 \set{   z \in Z_K : \vecs\at{z} \in Z_c + N\Delta Z + \CC(B), \abs{N} \le \rnd{H'/ \at{\ve^2 R}} - 1, N \in \z } \subset \Cand_{r,\vp,\ve}.
\]
This implies that if $\Cand_{r,\vp,\ve}$ is non-empty we have: 
\[
\frac{ \abs{\Cand_{r,\vp,\ve}} }{2 N_{\max} + 1} \ge \frac{ \abs{\Cand_{r,\vp,\ve}} }{2 H/(\ve^2 R) + 2 }\ge \frac{ \abs{\Cand_{r,\vp,\ve}} }{  \frac{H }{H' } \at{2 \rnd{ H'/(\ve^2 R) } + 1} + 2} \ge \frac{ \abs{\Cand_{r,\vp,\ve}} }{6  + 2 \abs{\Cand_{r,\vp,\ve}}} \ge \frac{1}{8}
\]
We conclude that $P\at{\xi_z = z_0} \ge \frac{1 }{\abs{\Cand_{r,\vp,\ve} } } \cdot p_0$, where $p_0$ is $\frac{1}{8\abs{\mathrm{SHIFTS}}}$. Above derivation also gives us required bounds on the size of $\Cand_{r,\vp,\ve}$. 
\end{proof}

\subsubsection{Implementation aspects} In practice we are looking for best possible value of the additive constants $C_{\min},C_{\max}$  in the Theorem~\ref{thm:main} we can achieve while maintaining the polynomial runtime of the online part of the algorithm. We refer the reader to the Appendix~\ref{sec:appendix-b} for the version of the procedure we use in our implementation to obtain the numerical results reported in Section~\ref{sec:examples}. In practice we use the Nearest Plane Algorithm~\cite{BABAI:1986}. It is also possible to show that $R^{\min}_k$ can be chosen to be based on $\CC\at{B^\ast}$, not based on $\CC\at{B}$. We ensure that the basis we use is Hermite-Korkine-Zolotarev \cite{Kannan:1983,Nguyen:2004,Hanrot:2007} reduced which makes it possible to guarantee that $R^{\min}_k$ are bounded by some functions of discriminant of $\z_K$ and these bounds are independent on the choice of the basis of $\z_K$. We also use a simpler version of the sampling procedure. The simpler version does not ensure that the distribution of procedure outcomes is close to uniform, but works well in practice. 





\subsection{Multiplicative approximation using unit group}

\begin{figure}[ht!]

\protect\caption{\label{fig:proc-unit-adjust} UNIT-ADJUST procedure}
\rule[0.5ex]{1\columnwidth}{1pt}

\begin{centering}


\begin{algorithmic}[1]
\fxinput $F$, $u_1,\dotsc,u_{d-1}$
\Statex[1] $F$ is a totally real number field of degree $d$
\Statex[1] $u_1,\dotsc,u_{d-1}$ form a system of fundamental units of $F$
\inp $\rg{t_1}{t_d}$
\Statex[1] $t_1,\ldots,t_d$ -- real numbers of the same precision $n$
\Procedure{UNIT-ADJUST}{}
\offline
\State $u_1,\dotsc,u_{d-1} \leftarrow$ FUNDAMENTAL-UNITS($F$)
\State $\delta_0 \gets \sqrt{ \max_{k,j}\set{\abs{\s_k\at{u_j}},\abs{\s_k\at{u^{-1}_j}} } } $ \Comment $\delta_0 > 1$
\State $B \leftarrow 
\pmat{\log|\sig_1(u_1)| & \cdots & \log |\sig_1(u_{d-1})| & 2\log \delta_0\\ 
\vdots  & & \vdots & \vdots \\ 
\log|\sig_d(u_1)| & \cdots & \log |\sig_d(u_{d-1})| & 2 \log \delta_0}$

\State $\delta_k \leftarrow \delta_0 \sqrt{\prod_{i=1}^{d-1}\max\{|\sig_k(u_i)|,|\sig_k(u_i^{-1})|\}}$ 

\spec $\rg{\delta_1}{\delta_d}$ -- real numbers such that $\delta_k > 1$ and  

$\CC(B)\subset [-\log \delta_1,\log\delta_1] \times \cdots \times [-\log \delta_d,\log\delta_d]$

\online
\State \assert $\abs{\prg{t_1}{t_d}} < \log \delta_0$ \Comment Make sure the point is in the span of the lattice 
\State $m \leftarrow \rnd{B^{-1} t}$
\State  $u\leftarrow \prod_{i=1}^{d-1}u_i^{m_i}$
\EndProcedure
\out unit $u \in \z_F^\times$ such that
\Statex for all $k=\rg{1}{d}$ : $\abs{\log\abs{\s_k\at{u}} - t_k } \le \log \delta_k $

\end{algorithmic}

\par\end{centering}

\rule[0.5ex]{1\columnwidth}{1pt}

\end{figure}


In this section we prove

\unitadjust*


The offline part of procedure UNIT-ADJUST computes a system of fundamental units $u_1,\dotsc, u_{d-1} \in \bZ_F^\times$ and outputs
\[\delta_k = \delta_0 \sqrt{\prod_{j=1}^{d-1}\max\{|\sig_k(u_j)|,|\sig_k(u_j^{-1})|\}}\]
for $k = 1,\dotsc,d$, where $\delta_0 > 1$ is some fixed constant.

When called with a target vector $t \in \bR^d$ satisfying $|(t,\bsone_d)| < \log \delta_0$, the online part of UNIT-ADJUST simply rounds off $t$ in the basis
\[B = \pmat{\log|\sig_1(u_1)| & \cdots & \log |\sig_1(u_{d-1})| & 2 \log \delta_0 \\
\vdots  & & \vdots & \vdots  \\
\log|\sig_d(u_1)| & \cdots & \log |\sig_d(u_{d-1})| & 2\log \delta_0}\]
to the lattice vector $Bm$, where $m = \rnd{B^{-1} t} \in \bZ^d$.  Then it returns the unit $u = u_1^{m_1} \cdots u_{d-1}^{m_{d-1}}$.

\begin{proof}[Proof that UNIT-ADJUST is correct]
Because $Bm$ is the unique lattice vector contained in the shifted parallelepiped $t + \CC(B)$, the following inequalities hold for $k = 1,\dotsc,d$:
\begin{eqnarray*}
|(Bm)_k - t_k| &\leq&  \max\{x_k : x\in \CC(B)\} \\
&=& \frac 12 \max_{y \in \{\pm 1\}^{d}} (By)_k \\
&=& \log \delta_0 + \frac 12 \sum_{j=1}^{d-1}\big|\log(|\sig_k(u_j)|)\big| \\
&=& \log \delta_k,
\end{eqnarray*}
\end{proof}
It is also worth noting the above shows that
\[\CC(B) \subset [-\log \delta_1,\log\delta_1] \times \cdots \times [-\log\delta_d,\log\delta_d]\]
and $\norm{Bm - t}_B \leq 1$, where $\norm{x}_B := \inf\{y > 0 : x \in \CC(B)y\}$.


Now we show that the running time is a polynomial in $\norm{t}$ and in the number of bits used to specify $t_k$
\begin{proof}[Proof that UNIT-ADJUST runs in polynomial time]
Suppose that the $t_k$ are given with $n$ bits of precision.  Then they can be specified using $O(n + \log|t_k|)$ bits as $t_k = \pm 2^{\ell-n} s$, where $\ell =  \ceil{\log_2 |t_k|}$ and $s \in \{0,\dotsc,2^{n-1}\}$ is an $n$-bit integer.
First we observe that because the number field is fixed and $\delta_0 > 1$ is an arbitrary fixed constant, the the inverse $B^{-1}$ can be precomputed to sufficiently high precision and stored during the offline part.  The vector $m$ can therefore be computed in polynomial time. We also note that its norm is bounded by a polynomial in $\norm{t}$.  Indeed,
\begin{eqnarray*}
\norm{m}   & \leq & \norm{m - B^{-1}t} + \norm{B^{-1}t} \\
&\leq & \frac{\sqrt{ d}}{2} + \lambda_\mmin(B)^{-1}\norm{t} \\
&\leq & O(\norm{t}).
\end{eqnarray*}
This further implies that each $|m_i| \leq O(\norm{t})$, so that the output unit $u = u_1^{m_1} \cdots u_{d-1}^{m_{d-1}}$ can be computed by polynomially-many multiplications of the fundamental units $u_i$.  Therefore,
\[\log\norm{u} = O(|m_1| \log \norm{u_1} + \cdots + |m_d| \log \norm{u_d}) = O(\poly(\norm t , \norm{u_1},\dotsc,\norm {u_d})),\]
implying that the output unit can indeed be computed in polynomial time.
\end{proof}

\subsubsection{Implementation aspects} For proving that our algorithm runs in polynomial time it is sufficient to show that $\delta_k$ are fixed numbers for a given quaternionic gate set specification. It does not in principal matter how big they are. However, we see that the additive constant $C_{\min}$ in Theorem~\ref{thm:main} depends on values of $\delta_k$. When implementing our algorithm in practice we aim to achieve smallest possible constant $C_{\min}$ while maintaining good performance. For this reason we use the Nearest Plane Algorithm~\cite{BABAI:1986} instead of the simple round off procedure shown in Fig.~\ref{fig:proc-unit-adjust}. The Nearest Plane Algorithm has runtime that is polynomial in dimension of the lattice and bit sizes of the entries of vectors involved in the computation. In Appendix~\ref{sec:appendix-b} we show the pseudo-code for the variant of UNIT-ADJUST procedure we use in practice. Precision and complexity analysis above can be extended to this more practical approach. We refer the reader to \cite{NS:2009} for a careful precision and performance analysis of a variant of the Nearest Plane algorithm.

The results of applying the nearest plane algorithm depends on the quality of the basis used with it. In practice we apply Hermite-Korkine-Zolotarev \cite{Kannan:1983,Nguyen:2004,Hanrot:2007} or LLL\cite{NS:2009} reduction to the unit lattice basis during the offline step of the algorithm. This allows us to further lower the contribution from $\log \delta_1,\ldots,\log \delta_k$ to the additive constant $C_{\min}$. Value of $\log \delta_0$ can be chosen to be very small, and its contribution to $C_{\min}$ can be made negligible without high computational overhead. Values of $\delta_k$ computed based on reduced basis can also be related to the value of the regulator of number field $F$ and known techniques for bounding the regulator can be applied to bound them.

Computing the system of fundamental units of the number field is known to be a hard problem~\cite{Thiel:1995,Hallgren:2005,Eisentrager:2014} and can be to costly even for the offline part of the algorithm. In practice we can circumvent this issue to some extend. For our purposes it is sufficient to know the generators of the finite index subgroup of the unit group, but not the unit group itself. Frequently generators of such a subgroup can be computed much faster than the system of fundamental units \cite{PZ:89}, or are even known in analytic form \cite{Washington:82}. We used this approach to obtain values of constant $C_{\min}$ for some of the higher degree number fields.

\subsection{On the conjecture related to the approximation algorithm performance} \label{sec:conjecture}

On a high level, the performance of our approximation algorithm depends on the properties of the set of all possible solutions to QAP (Problem~\ref{prob:qap}). Let us recall the statement of QAP:
\qap*

Recall also, that map $U_q$ is constructed in Section~\ref{sec:quat-and-unitaries} using the embedding $\sigma : F \rightarrow \r$ that is a part of quaternion gate set specification. 

Next we construct a formal description to the set of all solution to QAP. Let $L$ be a generalized Lipschitz order in $\at{\frac{a,b}{F}}$ (Section~\ref{sec:lip-order}). We will use the following set as a part of the description of all possible solutions to QAP: 
\[
\sln_{r,\vp,\ve} = \set{ q \in L : \nrm{U_q - R_z\at{\vp}}\le \ve, \nrd\at{q} = r }
\]
The set of all possible norms of quaternions with given cost vector $L_1,\ldots,L_M$ is given by
\[
\nrd_{L_1,\ldots,L_M} = \set{ r \in \z_F : r\z_F = \p_1^{L_1}\ldots\p_M^{L_M}, r \text{ is totally positive} }
\]
Using the notation above, the set of all solutions to given instance of QAP is given by:
\[
\sln_{\mathrm{QAP}} = \bigcup_{r \in \nrd_{L_1,\ldots,L_M} } \sln_{r,\vp,\ve}.
\]
Note that for any unit $u$ for $\z^{\times}_F$ it is the case that $U_q = U_{q u}$. For this reason, the set of all solutions can be obtained as: 
\[
\sln_{\mathrm{QAP}} = \bigcup_{r \in \nrd_{L_1,\ldots,L_M} / (\z^{\times}_F)^2 } \sln_{r,\vp,\ve},
\]
where the set $\nrd_{L_1,\ldots,L_M} / (\z^{\times}_F)^2$ is finite. Its size is equal to the size of the set: 
\[
\set{ u \in \z^{\times}_F : u \text{ is totally positive } } / (\z^{\times}_F)^2. 
\]
Let us now discuss the structure of the set $\sln_{r,\vp,\ve}$. Consider $q$ from $\sln_{r,\vp,\ve}$. Quaternion $q$ can be described by two elements $z_1,z_2$ of $\z_K$ as
\[
 q = e_1^{-1}\at{z_1} + e_2^{-1}\at{z_2} 
\]
Note that equality $r = z_1 z^{\ast}_1 - b z_2 z^{\ast}_2$ and condition $ \nrm{U_q - R_z\at{\vp}}\le \ve$ imply that 
\[
 \re\at{\at{\s_{1,+}\at{z_1}-z_0}e^{i\vp/2}}\ge 0,\,\s_{k,+}\at{z_1} \le \s_k\at{r} 
\]
where $z_0=\sqrt{\sigma_1\at{r}}(1-\ve^2)e^{-i\vp/2}$. In other words $z_1$ belongs to the set $\Cand_{r,\vp,\ve}$ defined as: 
\[
\Cand_{r,\vp,\ve} = \set{ z \in \z_K : \re\at{\at{\s_{1,+}\at{z_1}-z_0}e^{i\vp/2}}\ge 0,\,\s_{k,+}\at{z_1} \le \s_k\at{r} } = \set{ z : \vecs\at{z} \in S_{r,\vp,\ve}}
\] 
The observations above allows us to rewrite the set $\sln_{r,\vp,\ve}$ as following: 
\[
\sln_{r,\vp,\ve} = \bigcup_{z_1 \in \Cand_{r,\vp,\ve} }\set{ e_1^{-1}\at{z_1} + e_2^{-1}\at{z_2} : z_2 \in \z_K, \abss{z_2} = \at{r - \abss{z_1}}/(-b) }
\] 
Note that some sets in the above union can be empty, because the relative norm equation $\abss{z_2} = \at{r - \abss{z_1}}/(-b) $ does not always have a solution. Motivated by this fact we can define the set 
\[
\term_{r,\vp,\ve} = \set{ z_1 \in \Cand_{r,\vp,\ve} : \text{ there exists } z_2 \in \z_K \text{ such that } \abss{z_2} = \at{r - \abss{z_1}}/(-b)  } 
\]
Let us assume that we have an oracle for solving the relative norm equations and drawing points from $\term_{r,\vp,\ve}$. Under this assumptions we could have the following algorithm for solving QAP: 
\begin{enumerate}
\item Pick random $r$ from $\nrd_{L_1,\ldots,L_M}$
\item Pick $z_1$ from $\term_{r,\vp,\ve}$
\item Find $z_2$ by solving relative norm equation $\abss{z_2} = \at{r - \abss{z_1}}/(-b) $
\item Return $q = e_1^{-1}\at{z_1} + e_2^{-1}\at{z_2}$
\end{enumerate}
Suppose now, that we don't have an oracle for drawing points from $\term_{r,\vp,\ve}$. We can modify our algorithm as following: 
\begin{enumerate}
\item Pick random $r$ from $\nrd_{L_1,\ldots,L_M}$
\item Pick random element $z_1$ from $\Cand_{r,\vp,\ve}$
\item Check if $z_1$ is in $\term_{r,\vp,\ve}$. If this is not the case return to Step 2. 
\item Find $z_2$ by solving relative norm equation $\abss{z_2} = \at{r - \abss{z_1}}/(-b) $
\item Return $q = e_1^{-1}\at{z_1} + e_2^{-1}\at{z_2}$
\end{enumerate}
Note that if the ratio $\abs{\term_{r,\vp,\ve}}/\abs{\Cand_{r,\vp,\ve}}$ were in $\Omega\at{1/\log\at{1/\ve}}$, then our algorithm would still run in polynomial time. In practice we don't have an oracle that solves all relative norm equations in polynomial time ( or even checks if there is a solution to given relative norm equation). However, if we restrict the possible right hand sides of the relative norm equation we can check the existence of the solution and find one in polynomial time. This motivates the following definition: 
\[
\pterm_{r,\vp,\ve} = \set{ z_1 \in \term_{r,\vp,\ve} : N_{K/\q}\at{\at{r - \abss{z_1}}/(-b)} \text{ is a rational prime}  }
\]
This gives us the following algorithm, which is very close to the procedure APPROXIMATE in Fig.~\ref{fig:proc-approximate}: 
\begin{enumerate}
\item Pick random $r$ from $\nrd_{L_1,\ldots,L_M}$
\item Pick random element $z_1$ from $\Cand_{r,\vp,\ve}$
\item Check if $z_1$ is in $\pterm_{r,\vp,\ve}$. If this is not the case return to Step 2. 
\item Find $z_2$ by solving relative norm equation $\abss{z_2} = \at{r - \abss{z_1}}/(-b) $
\item Return $q = e_1^{-1}\at{z_1} + e_2^{-1}\at{z_2}$
\end{enumerate}
If the ratio $\abs{\pterm_{r,\vp,\ve}}/\abs{\Cand_{r,\vp,\ve}}$ were in $\Omega\at{1/\log\at{1/\ve}}$ and we were drawing samples from $\Cand_{r,\vp,\ve}$ sufficiently uniformly, the algorithm above would still run in polynomial time. In this case in the absence of the oracle for solving arbitrary norm equation. Note the above discussion implies, that 
\[
 \frac{\abs{\pterm_{r,\vp,\ve}}}{\abs{\Cand_{r,\vp,\ve}}} = \frac{\abs{\pterm_{r u^2,\vp,\ve}}}{\abs{\Cand_{r u^2,\vp,\ve}}} \text{ for any } u \in \z^{\times}_F.
\]
For this reason, the ratio above is well defined for $r / (\z^{\times}_F)^2$. The conjecture that implies that our algorithm runs in polynomial time is the following:
\begin{conj} \label{cnj:main} Keeping the notation introduced before in this section, for any $r$ from 
\[
\nrd_{L_1,\ldots,L_M} / (\z^{\times}_F)^2 = \set{ r \in \z_F : r\z_F = \p_1^{L_1}\ldots\p_M^{L_M}, r \text{ is totally positive} } / (\z^{\times}_F)^2
\]
the ratio $\abs{\pterm_{r,\vp,\ve}}/\abs{\Cand_{r,\vp,\ve}}$ is in $\Omega\at{1/\log\at{1/\ve}}$. 
\end{conj}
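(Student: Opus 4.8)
The plan is to reduce Conjecture~\ref{cnj:main} to a statement about the density of primes among an explicit family of algebraic integers, and then to recognise that the required \emph{lower} bound is the genuinely hard (and, with present technology, conjectural) ingredient. First I would make the relevant sizes explicit. By Theorem~\ref{thm:find-point} the candidate set has $\abs{\Cand_{r,\vp,\ve}} = \Theta\at{1/\ve}$. For $z_1 \in \Cand_{r,\vp,\ve}$ the associated right-hand side $e = \at{r - \abss{z_1}}/(-b) \in \z_F$ satisfies $\s_1\at{e} = \Theta\at{1/\ve^2}$ — because on the thin region $S_{r,\vp,\ve}$ one has $\s_1\at{r} - \abss{\s_{1,+}\at{z_1}} = O\at{\ve^2 \s_1\at{r}}$ while $\sqrt{\s_1\at{r}} = \Theta\at{1/\ve^2}$ — whereas $\s_k\at{e} = O(1)$ for $k = 2,\ldots,d$. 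Hence the rational integer $N_{F/\q}\at{e} = \prod_{k} \s_k\at{e}$ is of size $\Theta\at{1/\ve^2}$. The conjecture then asserts that, as $z_1$ ranges over the $\Theta\at{1/\ve}$ lattice points of $\Cand_{r,\vp,\ve}$, a fraction $\Omega\at{1/\log\at{1/\ve}}$ of the resulting ideals $e\z_F$ are prime — exactly the density the prime number theorem predicts for ``random'' integers of size $1/\ve^2$, since $1/\log\at{1/\ve^2} = \Theta\at{1/\log\at{1/\ve}}$.

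With this reduction in hand I would split the argument in two. The first part, \emph{equidistribution}, is to show that the values $e(z_1)$ do not conspire to avoid primes: concretely, that the lattice points $\vecs\at{z_1} \in S_{r,\vp,\ve}$ push forward under the quadratic map $z_1 \mapsto e = \at{r-\abss{z_1}}/(-b)$ to a collection of integers of $\z_F$ that is equidistributed in residue classes modulo any fixed modulus $\mfm$, with the expected main term and a power-saving error. This is a lattice-point count for a thin convex region cut by congruence conditions, which I would attack via Poisson summation and exponential-sum bounds on $S_{r,\vp,\ve}$, exploiting that the region, although thin in one direction, is long ($\Theta\at{1/\ve}$ lattice points) along it. The second part is to feed this equidistribution into a Bateman--Horn / Hardy--Littlewood type prime-counting argument for the essentially one-parameter family $N_{F/\q}\at{e}$ of size $1/\ve^2$, producing the predicted density after removing the fixed local obstructions (bad residues modulo small primes and modulo $\disc\at{\CO}$), which only affect the implied constant.

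The hard part is the second step, and I expect it to be the true obstruction: proving a \emph{lower} bound of the correct order for primes in a sparse, one-parameter arithmetic family is in general beyond current methods and is of the same flavour as the Bunyakovsky and Hardy--Littlewood conjectures. Sieve methods (Selberg, large sieve) would readily give the matching \emph{upper} bound $O\at{1/\log\at{1/\ve}}$, but the parity problem blocks the lower bound except in special structured cases (in the spirit of Friedlander--Iwaniec or Fouvry--Iwaniec). I would therefore not expect an unconditional proof in full generality; the realistic goal is (i) to establish the equidistribution step rigorously, thereby reducing the conjecture to a clean statement purely about primes represented by $N_{F/\q}(e)$, and (ii) to connect the remaining input to the automorphic/spectral machinery underlying the ``Golden Gates'' of~\cite{Sarnak:2015} — a Ramanujan-type spectral gap for the relevant arithmetic lattice, which is the natural source of the required equidistribution and precisely where that work suggests the statement should originate. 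This would at best yield the conjecture conditionally on such spectral input, consistent with its status here as a conjecture supported by the numerical evidence reported in Section~\ref{sec:examples}.
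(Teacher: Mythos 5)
The statement you were asked about is a \emph{conjecture}: the paper offers no proof of it, only the heuristic that motivates it (Section~\ref{sec:conjecture}), numerical evidence (Section~\ref{sec:examples} and Appendix~A), and a pointer to the spectral-gap analysis of \cite{Sarnak:2015}. Your proposal correctly recognises this, and your quantitative reconstruction of the heuristic matches the paper's: $\abs{\Cand_{r,\vp,\ve}} = \Theta(1/\ve)$ is exactly the bound in Theorem~\ref{thm:find-point}, the estimate $\s_1(e) = \Theta(1/\ve^2)$ with $\s_k(e) = O(1)$ for $k \ge 2$ follows from the constraints of Theorem~\ref{thm:suitable-norm} and the geometry of $S_{r,\vp,\ve}$, and the resulting Cram\'er/Bateman--Horn prediction of density $\Theta(1/\log(1/\ve))$ is precisely what the conjecture encodes. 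Your honest assessment that the lower bound is blocked by the parity problem, and that the realistic programme is equidistribution plus a conditional input of Ramanujan type, is consistent with how the authors position the conjecture relative to \cite{Sarnak:2015}. One point you elide: $\pterm_{r,\vp,\ve}$ is defined as a subset of $\term_{r,\vp,\ve}$, so membership requires not only that the norm of $e = (r - \abss{z_1})/(-b)$ be a rational prime but also that the relative norm equation $\abss{z_2} = e$ actually be solvable in $\z_K$ — an additional splitting/local condition at the prime $e\z_F$ which, heuristically, costs a further constant factor but must be included in any equidistribution statement you would try to prove. Beyond that omission, your account is a faithful and appropriately cautious treatment of an unproven statement.
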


Recently it was pointed out by Peter Sarnak \cite{Sarnak:2015} that the questions about properties of solutions to QAP~(Problem~\ref{prob:qap}) can be studied using the methods developed in \cite{S1,S2} and used to study the spectral gap of the Hecke operator associated to different gate sets.

\def\exPrimes{{\mathcal P}}

\section{Relative norm equations}\label{sec:relativeNormEqs}

In this section we show how a solution $z\in K$ to a relative norm equation of the form $N_{K/F}\at z=e$ between a CM field $K$ and its totally real subfield $F=K\cap \r$ can be efficiently computed, provided such solutions exist at all for the given right hand side $e\in \z_F$. The totally positive element $e$ arises from the RANDOM-INTEGER-POINT step in lines \ref{ln:appr:find-pt} and \ref{ln:appr:e} in the main algorithm (see Figure \ref{fig:proc-approximate}). 
Formally, we deal with the following problem:
\begin{prob}[CM relative norm equation]\label{prob:NE}
\label{prob:rnf-1}Let $K/F$ be a CM field of constant degree over $\q$ and let $e$ be a totally positive element of
$\z_{F}$. The task is to find an element $z$ of $\z_{K}$ such that $N_{K/F}\at z= z z^* = e$ in time polynomial in the bit-size of $e$, provided such an element $z$ exists.
\end{prob}

In the following we describe our approach in solving a relative norm equation as in Problem \ref{prob:NE} and give pseudo-code implementations of IS-EASILY-SOLVABLE step in line \ref{ln:appr:is-easy} and FAST-SOLVE-NORM-EQ in line \ref{ln:appr:norm-eq} of the main algorithm in Figure \ref{fig:proc-approximate}. 

Relative norm equations of the form
$N_{K/F}\at z=e$ have been studied in the literature before. Early approaches include various methods that proceed by establishing a bounding box that will contain a solution provided it exists and then checks the candidates in the bounding box \cite{BS:67,PZ:89,FJP:97,Garbanati:80}. Unfortunately, these methods are exponential in the bit-size of the right hand side. Next, there is a method based on $S$-units \cite{Simon:2002,Coh2}. This requires the factorization of the right hand side of the equation, along with precomputation of the relative class group of the extension $K/F$, and some additional data that is dependent on the right hand side. Therefore, it is not clear that the resulting algorithm runs in polynomial time. Relative norm equations have also been studied in the context of cryptanalysis of lattice based cryptography, e.g., of the NTRU system. See also \cite{GS:2002}, where an algorithm is described to solve relative norm equations for cyclotomic fields over their totally real subfield. This algorithm uses Fermat's little theorem for ideals in $\z_K$ in conjunction with LLL reduction to find a solution, which is known to exist in the context in which the algorithm is applied, see also \cite{GGH:2013}. However, like Simon's $S$-unit based algorithm, the algorithm relies on some properties of the right hand side, and therefore does not seem to run in time that is a polynomial in the bit-size of the right hand side.

We take a different route in which precompute a finite set of attributes of  $K/F$ that do not depend on the right hand side $e$. Our method is similar to \cite[Algorithm 4.10]{KV} in that it reduces the problem for general right hand side to a bounded size instance. Furthermore, we leverages the fact that $K$ is a CM field and that the right hand side is of a particular form, which we call benign integers. These are characterized in terms of a finite set ${\mathcal P} = \{ \mathfrak{p}_1, \ldots, \mathfrak{p}_k\} \subset \z_{F}$:  

\begin{dfn}[$\exPrimes$-benign integers]
\label{def:benign}
\label{prob:rnf-2} Let $\exPrimes$ be a set of prime ideals of $\z_{F}$. An integer $e \in \z_{F}$ is called {\em benign} if it is totally positive and the prime factorization of the ideal generated by $e$ satisfies
\begin{equation}\label{eq:benign}
e\z_{F}=\mathfrak{q} \prod_{\p\in \exPrimes}\p^{e\at{\p}},
\end{equation}
where $\mathfrak{q}$ is prime and ${e\at{\p}}\geq 0$ for all $\p \in \exPrimes$.
\end{dfn}

The primes in $\exPrimes$ are defined by the user parameter $P$ in algorithm APPROXIMATE (Figure \ref{fig:proc-approximate}) and are precomputed by the offline part of procedure IS-EASILY-SOLVABLE (Figure \ref{fig:proc-is-easily-solvable}). 

\subsection{Measuring the bit-size of the input}

There are several natural ways to measure the bit-size of the algebraic numbers that are involved as the input and the output of a relative norm equation. We briefly discuss some of these definitions and show that in our case they are all within a constant factor of each other. 

Let $K/\q$ be a Galois extension and let $B = \{ b_1, \ldots, b_n \}$ be a basis for $\z_K$ over $\q$, where $n = [K:\q]$, i.e., $\z_K = \bigoplus_{i=1}^n b_i \z$. Any $x\in \z_K$ can then  be represented as $x = \sum_{i=1}^n b_i x_i$ where $x_i \in \z$, i.e., we can define the bit-size with respect to $B$ as $\|x\|_B := \sum_{i=1}^n |x_i|$. Alternatively, for CM fields, have $n=2d$ where $[F:\q]=d$ and we can use the quadratic form $T_2(x) = \sum_{i=1}^d |\sigma_i(x)|^2$ as a measure for the bit-size of $x$. Also, we can use a notion of bit-size that is valid for general ideals $I \subseteq \z_K$ and not just for the principal ideals $e\z_F$: following \cite{BF:2012} we first choose a matrix $M\in \z^{n \times n}$ for a basis of $I$ expressed over an integral basis of $\z_K$. If $M$ is in Hermite Normal Form, then each entry can be bounded by $|\det(M)| = N_{K/\q}(I)$, i.e., we can define $S(I) := n^2 \log_2(N_{K/\q}(I))$ as the bit-size of $I$. For principal ideals, as in \cite{BF:2012} we define $S(x) := n \log_2(\max_i |x_i|)$.

It turns out that $T_2(x)$ and $S(x)$ are related. More precisely, we have the following result that we cite from \cite{BF:2012}:
\begin{lem}\label{lem:bitsizeNorm}
Let $K/\q$ be a CM field, let $x\in \z_K$, and let $\Delta_K = \det(T_2(b_i,b_j))^2$ denote the discriminant of $K$. Then the bound $\frac{1}{2} \log_2(T_2(x)) \leq \widetilde{O}\left( S(x)/n + n^2 + \log_2(\Delta_K)\right)$ holds. Furthermore, we have that $S(x) \leq \widetilde{O}(d (d+\frac{1}{2} \log_2(T_2(x))))$.
\end{lem}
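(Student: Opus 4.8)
The plan is to view $T_2\at{x}$ and $S(x)$ as the squared length of $x$ measured, respectively, in the archimedean (Minkowski) coordinates and in an integral basis, and to translate between the two through the real embedding matrix of that basis. Fix an integral basis $b_1,\ldots,b_n$ of $\z_K$ with $n=2d$, let $\vecs:\z_K\to\r^{2d}$ be the embedding from Section~\ref{sec:lattice}, and let $\mathbf B=[\vecs\at{b_1},\ldots,\vecs\at{b_n}]$ be the associated real $n\times n$ basis matrix, so that $\vecs\at{x}=\mathbf B\mathbf x$ for the integer coordinate vector $\mathbf x=(x_1,\ldots,x_n)$ of $x=\sum_i x_i b_i$, and $T_2\at{x}=\nrm{\vecs\at{x}}^2$ up to the fixed CM factor of two. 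The covolume of $\vecs\at{\z_K}$ is $\abs{\det\mathbf B}=2^{-d}\sqrt{\abs{\Delta_K}}$, which is what ultimately lets every basis-dependent constant be re-expressed through $\Delta_K$ alone. Since the degree is treated as constant (see Problem~\ref{prob:NE}), all the $d$-, $n^2$- and $\log_2\Delta_K$-type terms are field-dependent additive constants, and the essential content is the coefficient of the main term $\tfrac12\log_2 T_2\at{x}$.

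For the first inequality I would use $\vecs\at{x}=\mathbf B\mathbf x$ to get $\sqrt{T_2\at{x}}\le \nrm{\mathbf B}_{\mathrm{op}}\nrm{\mathbf x}\le \sqrt n\,\nrm{\mathbf B}_{\mathrm{op}}\max_i\abs{x_i}$. Taking base-$2$ logarithms and recalling $S(x)=n\log_2\max_i\abs{x_i}$ yields $\tfrac12\log_2 T_2\at{x}\le S(x)/n+\log_2\nrm{\mathbf B}_{\mathrm{op}}+\tfrac12\log_2 n$. It then suffices to bound $\log_2\nrm{\mathbf B}_{\mathrm{op}}$ by $\widetilde O\at{n^2+\log_2\Delta_K}$: choosing $b_1,\ldots,b_n$ to be an LLL- or Minkowski-reduced integral basis, the lengths $\nrm{\vecs\at{b_i}}$ are controlled by the successive minima of $\vecs\at{\z_K}$, and Minkowski's second theorem bounds the product of these minima by $\sqrt{\abs{\Delta_K}}$ up to a dimension factor absorbed into the $n^2$ term.

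For the second inequality I would invert: $\mathbf x=\mathbf B^{-1}\vecs\at{x}$, whence $\max_i\abs{x_i}\le \nrm{\mathbf x}\le \nrm{\mathbf B^{-1}}_{\mathrm{op}}\sqrt{T_2\at{x}}$. Writing $\mathbf B^{-1}$ through cofactors over $\det\mathbf B=\pm 2^{-d}\sqrt{\abs{\Delta_K}}$ and applying Hadamard's inequality to the cofactors, together with the same reduced-basis control of $\nrm{\vecs\at{b_i}}$, gives $\log_2\max_i\abs{x_i}\le \tfrac12\log_2 T_2\at{x}+\widetilde O\at{d+\log_2\Delta_K}$; the $2^{d}$ coming from the inverse covolume is exactly what contributes the linear-in-$d$ term. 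Multiplying by $n=2d$ and absorbing $\log_2\Delta_K=\widetilde O\at{d^2}$ (valid for a CM field of degree $2d$) into the $d^2$ term produces $S(x)\le\widetilde O\at{d\at{d+\tfrac12\log_2 T_2\at{x}}}$.

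The main obstacle is not either inequality in isolation but making all constants uniform in $\Delta_K$ and $n$ and independent of the particular basis: this is exactly where the reduced-basis choice and the Minkowski successive-minima bounds carry the argument, and where the $\widetilde O$ must absorb both the polylogarithmic precision overhead of lattice reduction and the constant factor of two arising from the CM conjugate pairs. Since the statement is quoted verbatim from \cite{BF:2012}, the honest course in the paper is simply to cite that reference; the route sketched above is how I would reprove it from scratch.
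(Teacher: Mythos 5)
The paper offers no proof of this lemma at all: it is quoted from \cite{BF:2012} and used as a black box, a point you correctly flag at the end. Your sketch is therefore necessarily a different route from the paper's (which has none), and it is the standard one: identify $T_2\at{x}$ with $\nrm{\vecs\at{x}}^2$, write $\vecs\at{x}=\mathbf{B}\mathbf{x}$ for the coordinate vector in an integral basis, and translate between the two norms by bounding $\nrm{\mathbf{B}}_{\mathrm{op}}$ and $\nrm{\mathbf{B}^{-1}}_{\mathrm{op}}$ via a reduced basis, Minkowski's second theorem and Hadamard's inequality. The overall architecture is sound, and your observation that the constants can only be made to depend on $\Delta_K$ and $n$ if the integral basis is reduced is exactly the right caveat, since $S(x)$ as defined in the paper is basis-dependent. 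One small omission: to convert Minkowski's bound on the product of successive minima into an upper bound on the individual lengths $\nrm{\vecs\at{b_i}}$, you also need the lower bound $\lambda_1\ge\sqrt{d}$, which for $\z_K$ follows from $T_2\at{y}\ge d\,N_{K/\q}\at{y}^{1/d}\ge d$ for nonzero $y\in\z_K$; without some lower bound on $\lambda_1$ the largest minimum is not controlled by the covolume alone.

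The genuine error is the final absorption step in the second inequality: the claim that $\log_2\Delta_K=\widetilde{O}\at{d^2}$ for a CM field of degree $2d$ is false. Minkowski's discriminant theorem gives only a \emph{lower} bound on $\abs{\Delta_K}$ exponential in the degree; there is no upper bound on the discriminant in terms of the degree alone (already the imaginary quadratic fields $\q\at{\sqrt{-p}}$ have $d=1$ and discriminant of size $p$). As written, your argument for the second inequality yields something like $S(x)\le d\log_2 T_2\at{x}+\widetilde{O}\at{d^3+d^2\log_2\Delta_K}$, with an explicit $\log_2\Delta_K$ term that cannot be legitimately folded into the $d^2$. In the regime of this paper, where $K$ is fixed and every field invariant is a constant, the discrepancy is invisible --- which is why citing the statement in the given form is harmless for the application --- but as a free-standing proof the $\Delta_K$-dependence must be carried through rather than suppressed by an invalid bound.
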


We next establish a bound that allows to relate $T_2(x)$ to the bit-size of the expansion of $x$ with respect to any given basis $B$.

\begin{lem}\label{lem:bitsizeBasis}
Let $K/\q$ be a CM field and let $B = \{ b_1, \ldots, b_n \}$ be an integral basis for $K$ over $\q$, where $n = [K:\q]$. For $x = \sum_{i=1} x_i b_i \in K$ define the bit size of $x$ with respect to the basis $B$ as $\| x\|_B := \sum_{i=1}^n |x_i|^2$. Let $M = [Tr(b_i b_j^*)]_{i,j=1,\ldots, n}$ be the Gram matrix of $B$ and let $\lambda_{max}$ and $\lambda_{min}$ be the largest, respectively smallest eigenvalue of $M$. Then
\[
\lambda_{min} \|x\|_B /2 \leq T_2(x) \leq \lambda_{max}\| x \|_B /2.
\]
\end{lem}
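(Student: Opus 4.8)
The plan is to identify $T_2$ with the quadratic form attached to the Gram matrix $M$ (up to the factor $1/2$) and then apply the standard Rayleigh--Ritz bounds for a real symmetric positive-definite matrix. First I would fix the coefficient vector $\mathbf{x} = (x_1,\ldots,x_n)^T$ of $x = \sum_i x_i b_i$ and expand the entries of $M$ over the embeddings of $K$. Since $K$ is a CM field, its complex conjugation $^\ast$ satisfies $\tau(y^\ast) = \overline{\tau(y)}$ for every embedding $\tau : K \to \c$ (this is exactly the property of the $\s_{k,\pm}$ recorded in Section~\ref{sec:quat-and-unitaries}), so that
\[
M_{ij} = \tr_{K/\q}\at{b_i b_j^\ast} = \sum_{\tau} \tau\at{b_i}\,\overline{\tau\at{b_j}},
\]
the sum ranging over all $n = 2d$ embeddings of $K$. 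In particular $M$ has rational (indeed integer) entries and is symmetric, because $\tr_{K/\q}$ is invariant under $^\ast$ and $(b_i b_j^\ast)^\ast = b_j b_i^\ast$.

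Next I would compute the quadratic form directly. Using the expansion above,
\[
\mathbf{x}^T M \mathbf{x} = \sum_{\tau}\at{\sum_i x_i \tau\at{b_i}}\overline{\at{\sum_j x_j \tau\at{b_j}}} = \sum_\tau \abss{\tau\at{x}}.
\]
Because the $2d$ embeddings of the CM field come in complex-conjugate pairs $\s_{k,+},\s_{k,-}$ with $\abs{\s_{k,-}\at{x}} = \abs{\s_{k,+}\at{x}}$, the right-hand side equals $2\sum_{k=1}^d \abss{\s_{k}\at{x}} = 2\,T_2\at{x}$. Thus $2\,T_2\at{x} = \mathbf{x}^T M \mathbf{x}$. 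This identity is the crux of the argument and the only place the CM structure enters; it is also where the factor $1/2$ in the statement originates, namely from the fact that $T_2$ counts only one embedding per conjugate pair while the trace form sums over all $2d$.

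Finally, $M$ is positive definite, since $\mathbf{x}^T M \mathbf{x} = 2\,T_2\at{x} > 0$ whenever $x \neq 0$, so its eigenvalues satisfy $0 < \lambda_{\min} \le \cdots \le \lambda_{\max}$. The min--max (Rayleigh quotient) characterization for a symmetric matrix then gives $\lambda_{\min}\,\nrm{\mathbf{x}}^2 \le \mathbf{x}^T M \mathbf{x} \le \lambda_{\max}\,\nrm{\mathbf{x}}^2$, and substituting $\nrm{\mathbf{x}}^2 = \sum_i \abss{x_i} = \nrm{x}_B$ together with $\mathbf{x}^T M \mathbf{x} = 2\,T_2\at{x}$ yields $\lambda_{\min}\nrm{x}_B/2 \le T_2\at{x} \le \lambda_{\max}\nrm{x}_B/2$, as claimed. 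There is no genuine obstacle here: the argument is elementary linear algebra once the embedding expansion is in place, and the only points requiring care are the bookkeeping of the factor $2$ from the paired embeddings and the verification that $M$ is real, symmetric, and positive definite so that the Rayleigh bounds legitimately apply.
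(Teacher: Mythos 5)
Your proof is correct and follows essentially the same route as the paper: both establish the identity $2\,T_2(x) = \mathbf{x}^T M \mathbf{x}$ (the paper via $T_2(x) = \tr_{K/\q}(xx^\ast)/2$ and bilinearity of the trace form, you via expanding $M_{ij}$ over the embeddings) and then conclude by diagonalizing $M$, i.e.\ the Rayleigh--Ritz bounds. Your version is slightly more explicit about where the factor of $2$ and the CM hypothesis enter, but the argument is the same.
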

\begin{proof}
Let $x\in K$ and let $x=\sum_{i=1}^n x_i b_i$ be its expansion over the chosen basis. Recalling that $T_2(x) = {Tr}_{F/\q}(x x^*) = {Tr}_{K/\q}(x x^*)/2$ we obtain that $T_2(x) = \sum_{i,j=1}^n x_i x_j^* Tr(b_i b_j^*)/2$.  We can rewrite this as $T_2(x) = (x_1, \ldots, x_n) M (x_1, \ldots, x_n)^t/2$ where $M$ is the integer valued, symmetric, and positive-definite matrix with entries $M_{i,j} = Tr(b_i b_j^*)$. 
By diagonalizing $M$ in an eigenbasis, we see that $\lambda_{min} \|x\|_B/2 \leq T_2(x) \leq \lambda_{max} \| x \|_B$/2 as claimed.
\end{proof}

In the approach taken in this paper the field $K$ is considered to be a constant. This implies that quantities such as the degree $[K:\q]$ or the discriminant $\Delta_F$ of the totally real subfield $F = K^+$ are constants.  By choosing $B$ to be an LLL-reduced basis we obtain from Lemma \ref{lem:bitsizeBasis} that $T_2(x)$ and $\|x|\|_B$ are related by a constant factor, see also \cite{Belabas:2004}. To summarize, Lemmas \ref{lem:bitsizeNorm} and \ref{lem:bitsizeBasis} imply that all measures of bit-size considered in the following are within constant factors of each other.

\subsection{Hermite normal form and lifting of ideals}

In order to represent ideals in rings of integers, Hermite normal forms are an indispensable tool \cite{coh3,Coh2}. We cite a result from \cite{SL:96} that allows to give a polynomial bound on both, the complexity of computing a Hermite Normal Form (HNF) of an integer matrix, and the bit-size of the output. See also \cite{Storjohann:2000} for a discussion and comparison with other efficient algorithms to compute HNFs.

\begin{thm}\label{thm:hnf}
Let $A \in \z^{n \times m}$ be a rank $r$ integer matrix and let $\|A\| := \max_{i,j} |A_{i,j}|$.  There exists a deterministic algorithm that computes the HNF of $A$ in time $\widetilde{O}(m^{\theta}n \log{\|A\|})$, where the $\widetilde{O}$ notation ignores log-factors and $2
\leq \theta \leq 2.373$ is the exponent for matrix multiplication.
\end{thm}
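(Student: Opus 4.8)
The plan is to establish the result by the standard route for polynomial-time Hermite Normal Form (HNF) computation: control intermediate coefficient growth through modular arithmetic against a determinantal modulus, and then organize the linear algebra so that asymptotically fast matrix multiplication can be applied.

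First I would reduce to a nonsingular core. Using a rank computation, I would identify $r$ linearly independent columns and rows, isolating a nonsingular $r \times r$ submatrix $A'$ with $d := |\det A'|$. The quantity $d$ plays a dual role: it serves as a working modulus, and it bounds the output. Indeed, by Hadamard's inequality $d \le (\sqrt{m}\,\|A\|)^r$, so $\log d = O(r \log(m\|A\|))$ is polynomial in the input size; and by the structure theory of the HNF, the product of the diagonal entries $h_{ii}$ divides $d$ while each off-diagonal entry is reduced modulo the corresponding diagonal entry. Hence every entry of the output HNF is bounded by $d$, which already supplies the bit-size bound on the output.

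The heart of the matter, and the main obstacle, is that a direct triangularization in the style of integer Gaussian elimination suffers from exponential intermediate coefficient swell, which would defeat any polynomial bound. I would circumvent this exactly as in \cite{SL:96}: since the final entries are known a priori to lie in the range $[0,d)$, one carries out the elimination modulo $d$. Working in $\z/d\z$ keeps every intermediate quantity of bit-size $O(\log d)$, and this congruence data suffices to recover the true HNF because the reduced entries are uniquely determined by their residues modulo $d$ together with the recovered diagonal. Completing the off-diagonal reduction against that diagonal then fixes the remaining entries.

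Finally, to reach the stated $\widetilde{O}(m^{\theta} n \log\|A\|)$ bound I would batch the row operations into block submatrix products rather than processing one column at a time, so that the dominant cost is a sequence of matrix multiplications over $\z/d\z$: each such multiplication costs $O(m^{\theta})$ ring operations, each ring operation costs $\widetilde{O}(\log d) = \widetilde{O}(\log \|A\|)$ bit operations (absorbing the $\log m$ and $\log r$ factors into the $\widetilde{O}$), and the remaining $n$ factor sweeps the full row dimension. Assembling these pieces yields the claimed complexity. I expect the genuinely delicate points to be (i) verifying that the modular computation recovers the \emph{exact} HNF with no loss of information, which rests entirely on the a priori entry bound by $d$, and (ii) bookkeeping the unimodular transforms so that rank deficiency is handled cleanly; the coefficient-explosion issue itself is fully resolved by the modular-determinant device.
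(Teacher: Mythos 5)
First, a point of reference: the paper does not prove Theorem~\ref{thm:hnf} at all --- it is imported verbatim from \cite{SL:96}, so there is no in-paper argument to compare against. Your sketch is, in outline, a faithful summary of how that cited result is actually obtained (reduce to a nonsingular $r\times r$ core, use the determinant $d$ of that core both as an a priori bound on the HNF entries and as a working modulus to kill coefficient swell, then batch the eliminations into block matrix products so that fast matrix multiplication applies). The structural claims you make along the way --- $d\le(\sqrt m\,\|A\|)^r$ by Hadamard, the product of the diagonal entries of the HNF dividing $d$, off-diagonal entries reduced below the diagonal, and the recoverability of the exact HNF from the mod-$d$ data --- are all correct and are exactly the ingredients of the standard modular HNF technique.

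There is, however, a concrete error in your complexity bookkeeping. You assert that each ring operation in $\z/d\z$ costs $\widetilde{O}(\log d)=\widetilde{O}(\log\|A\|)$ bit operations, ``absorbing the $\log m$ and $\log r$ factors.'' But by your own earlier estimate $\log d=O(r\log(m\|A\|))$, and $r$ is a polynomial factor (up to $m$), not a logarithmic one; it cannot be absorbed into $\widetilde{O}$. The correct accounting, as in \cite{SL:96}, is that the algorithm performs $\widetilde{O}(m^{\theta-1}n)$ operations on integers of bit-length $O(m\log(m\|A\|))$, giving $\widetilde{O}(m^{\theta-1}n\cdot m\log\|A\|)=\widetilde{O}(m^{\theta}n\log\|A\|)$. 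Your tally of $O(m^{\theta})\cdot n$ ring operations overcounts by a factor of $m$ (blocking an $n\times m$ by $m\times m$ product yields $n m^{\theta-1}$ ring operations, not $n m^{\theta}$), and this exactly cancels your undercount of the per-operation cost by the same factor --- so you land on the stated bound only by two compensating mistakes. To make the argument sound you should carry the $\log d=\Theta(r\log(m\|A\|))$ operand size through honestly and count the block products as $\widetilde{O}(m^{\theta-1}n)$ ring operations. Beyond that, the sketch is a strategy outline rather than a proof: the genuinely delicate parts of \cite{SL:96} --- how the unimodular transforms are accumulated so that every elimination stage really is a bounded number of block products, and how rank deficiency interacts with the choice of modulus --- are named but not carried out, which is acceptable for a citation of a known result but would not stand alone as a proof.
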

We now discuss how to lift primes ideals in $\z_F$ to prime ideals in $\z_K$.
Write $K=F(\theta)$, where $\theta$ is a primitive element. Recall that the conductor of $K/F$ is defined as $\mathfrak{F} = \{ x \in \z_K : x \z_K \subseteq \z_F[\theta] \}$. It is well known, see e.g. \cite[Ch.~I, Prop.~8.3]{Neukirch:99} or \cite[Ch.~3, Thm.~27]{Marcus:77} or \cite[Prop.~2.3.9]{Coh2}, that at least the prime ideals that are coprime with $\mathfrak{F}$ can be easily lifted via a reduction to factoring the minimal polynomial of $\theta$ over a suitable finite field as described in following. Being coprime with the conductor is equivalent to being coprime with $|\z_K / \z_F[\theta]|$ which leaves only a finite set of primes for which it does not hold, i.e., these primes will be added to the exceptional set $\exPrimes$.

\begin{thm}\label{thm:lyingOver}
Let $K/F$ be a Galois extension where $K=F(\theta)$. Denote by $g(X) \in F[X]$ the minimal polynomial of $\theta$ and by $\mathfrak{F}$ the conductor of $K/F$. Let $\p$ be a prime ideal in $\z_F$ that is coprime with $\mathfrak{F}$. Let $\f = \z_F/\p$ be the finite field corresponding to the residues mod $\p$ and let
\[
\overline{g}(X) = \overline{g}_1(X)^{e_1} \cdots \overline{g}_r(X)^{e_r} \in \f[X]
\]
	be the factorization of $\overline{g}(X) = g(X) \; {\rm mod} \; \p$ into irreducible polynomials $\overline{g}_i(X) = g_i(X) \; {\rm mod} \; \p \in \f[X]$ where all $\overline{g}_i(X)$ are chosen to be monic polynomials. Then the ideals
\begin{equation}\label{eq:lyingOver}
\mathfrak{P}_i := \p \z_K + g_i(\theta) \z_K
\end{equation}
are precisely the prime ideals of $\z_K$ that are lying over $\p$ and all these ideals are pairwise different.
\end{thm}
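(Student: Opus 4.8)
The plan is to prove this as an instance of the relative Kummer--Dedekind theorem, whose engine is that coprimality with the conductor makes the maximal order $\z_K$ and the possibly non-maximal monogenic order $\z_F[\theta]$ indistinguishable modulo $\p$. Accordingly, the first and central step is to establish a ring isomorphism
\[
\z_K/\p\z_K \;\cong\; \z_F[\theta]/\p\z_F[\theta].
\]
Write $\mathfrak{f} = \mathfrak{F}\cap \z_F$; coprimality of $\p$ with $\mathfrak{F}$ means $\mathfrak{f}+\p = \z_F$, so I would fix $c\in\mathfrak{f}$ and $p\in\p$ with $c+p=1$. Since $c\in\mathfrak{F}$ gives $c\z_K\subseteq\z_F[\theta]$, every $x\in\z_K$ satisfies $x = cx + px \in \z_F[\theta]+\p\z_K$, so $\z_K = \z_F[\theta]+\p\z_K$ and the inclusion-induced map $\z_F[\theta]\to\z_K/\p\z_K$ is surjective. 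Its kernel is $\p\z_F[\theta]$: for $y\in\z_F[\theta]\cap\p\z_K$ one has $cy\in\p\,(c\z_K)\subseteq\p\z_F[\theta]$ and $py\in\p\z_F[\theta]$ (because $p\in\p\subseteq\z_F$), hence $y=cy+py\in\p\z_F[\theta]$. This yields the claimed isomorphism.

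Next I would compute the right-hand quotient explicitly. As $\theta$ generates $K/F$ with minimal polynomial $g$, we have $\z_F[\theta]\cong\z_F[X]/(g(X))$, so
\[
\z_F[\theta]/\p\z_F[\theta] \;\cong\; \z_F[X]/(\p,\,g(X)) \;\cong\; \f[X]/(\overline{g}(X)),
\]
with $\f=\z_F/\p$ and $\theta$ mapping to the class of $X$. Factoring $\overline{g}=\overline{g}_1^{e_1}\cdots\overline{g}_r^{e_r}$ into distinct monic irreducibles and applying the Chinese Remainder Theorem gives
\[
\f[X]/(\overline{g}) \;\cong\; \prod_{i=1}^r \f[X]/(\overline{g}_i^{e_i}).
\]
Each factor $\f[X]/(\overline{g}_i^{e_i})$ is a local Artinian ring whose maximal ideal is generated by the image of $\overline{g}_i$, so the maximal ideals of $\f[X]/(\overline{g})$ are exactly the $r$ distinct ideals attached to the $(\overline{g}_i)$.

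Finally I would transport this back to $\z_K$. Since $\z_K$ is a Dedekind domain, the primes lying over $\p$ are precisely the maximal ideals of the Artinian ring $\z_K/\p\z_K$, and via the isomorphisms above these correspond bijectively to the $(\overline{g}_i)$. Pulling the $i$-th maximal ideal back along the composite $\z_K \twoheadrightarrow \z_K/\p\z_K \xrightarrow{\sim} \f[X]/(\overline{g})$ (under which $\theta\mapsto X$), the preimage is generated by the kernel $\p\z_K$ together with a lift of $\overline{g}_i$, namely $g_i(\theta)$; this is exactly $\mathfrak{P}_i=\p\z_K+g_i(\theta)\z_K$. Distinctness of the $\mathfrak{P}_i$ follows from distinctness of the $(\overline{g}_i)$. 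The main obstacle is the conductor argument of the first paragraph establishing $\z_K/\p\z_K\cong\z_F[\theta]/\p\z_F[\theta]$; once the two orders agree modulo $\p$, the remainder is factorization of $\overline{g}$ over $\f$ and routine CRT bookkeeping. I would also remark that the Galois hypothesis is not needed for the factorization correspondence itself (here it is automatic, since $[K:F]=2$), and that the residue degrees $\deg\overline{g}_i$ and ramification exponents $e_i$ can be read off the same decomposition if required.
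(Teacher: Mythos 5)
Your proof is correct, but note that the paper does not actually prove Theorem~\ref{thm:lyingOver}: it states it as a known fact and cites \cite[Ch.~I, Prop.~8.3]{Neukirch:99}, \cite[Ch.~3, Thm.~27]{Marcus:77} and \cite[Prop.~2.3.9]{Coh2}. The argument you give --- using $\mathfrak{f}+\p=\z_F$ to identify $\z_K/\p\z_K$ with $\z_F[\theta]/\p\z_F[\theta]$, then computing the latter as $\f[X]/(\overline{g})$ and reading off its maximal ideals via CRT --- is precisely the standard relative Kummer--Dedekind proof found in those references, so there is nothing to compare beyond observing that you have supplied the omitted proof. The only point worth flagging is that both your argument and the paper's statement implicitly require $\theta\in\z_K$ (so that $g$ is monic in $\z_F[X]$ and $\z_F[\theta]\cong\z_F[X]/(g)$); this is harmless here since one can always rescale a primitive element to be integral, but it should be stated.
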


We next analyze the time-complexity of computing the list of ideals $\mathfrak{P}_i$ lying over $\mathfrak{p}$. Factoring of a polynomial $f(X)$ of degree $n$ over a finite field $\mathbb{F}_q$ of size $q$ is known to run in time polynomial in $n$ and $\log(q)$ \cite[Theorem 14.14]{GG:2003}: 

\begin{thm}\label{thm:polyfactoring}
Let $\f_q$ be a finite field and let $f(X) \in \f_q[X]$ be a polynomial of degree $n$. Then there exists a probabilistic polynomial time algorithm that computes the factorization of $f(X) = \prod_{i=1}^k f_i(X)$ into irreducible polynomials over $\f_q$. The probability of success of the algorithm is at least $1/2$ and the expected running time can be bounded by $\widetilde{O}(n^2 \log{q})$.
\end{thm}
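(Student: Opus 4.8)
The statement is the classical Cantor--Zassenhaus factorization theorem, so the plan is to reproduce its three-stage structure and then account for the cost of each stage. First I would reduce to the square-free case: computing $\gcd\at{f,f'}$ isolates the repeated factors, and iterating the Euclidean algorithm on the successive quotients yields the square-free decomposition $f = \prod_i h_i^{i}$ with each $h_i$ square-free. The only delicate point here is the inseparable case $f' = 0$, in which $f = \widetilde{f}\at{X^p}$ and one extracts a $p$-th root coefficientwise using the inverse Frobenius on $\f_q$. Each individual operation is a polynomial GCD of degree at most $n$, so this stage is cheap and deterministic.

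The core of the algorithm is \emph{distinct-degree factorization}. The fact I would rely on is that for each $d\ge 1$ the polynomial $X^{q^d}-X$ is precisely the product of all monic irreducibles over $\f_q$ whose degree divides $d$; consequently, for square-free $f$ and peeling off degrees in increasing order, $\gcd\at{f,\,X^{q^d}-X}$ collects exactly the irreducible factors of degree $d$. First I would compute $X^q\bmod f$ by repeated squaring, which costs $O\at{\log q}$ multiplications modulo $f$, and then obtain each $X^{q^d}\bmod f$ by iterating the Frobenius $O\at{n}$ times. This modular exponentiation is the dominant cost, and it is what produces the $\log q$ factor in the final bound.

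What remains is to split, for each $d$, a product of distinct irreducibles all of degree $d$, and this is where randomness enters (\emph{equal-degree factorization}). For odd $q$ the plan is to draw a uniformly random residue $a$ modulo the current factor, form $a^{\at{q^d-1}/2}$, and take a GCD with the factor; since in each residue field the nonzero elements split evenly into squares and non-squares, a Chinese Remainder Theorem argument shows this produces a proper factor with probability at least $1/2$, and recursion on the two halves finishes the job. The even-characteristic case I would handle separately by replacing the exponentiation with the additive trace $a+a^q+\cdots+a^{q^{d-1}}$, which plays the same separating role. Boosting the per-split probability to a constant and summing over the $O\at{n}$ splits, each again a single modular exponentiation, keeps the total expected cost within the claimed bound.

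The hard part will be the probabilistic analysis of this equal-degree step: one must prove that the random exponentiation separates two prescribed irreducible factors with constant probability, which via the isomorphism $\f_q[X]/\at{g_1 g_2}\cong \f_{q^d}\times\f_{q^d}$ reduces to counting squares versus non-squares in a product of finite fields (and, in characteristic two, to analyzing the kernel and image of the trace map). Everything else is a careful but routine accounting of polynomial GCDs and modular exponentiations: each exponentiation modulo a degree-$n$ polynomial uses $O\at{\log q}$ squarings, which is the source of the $\log q$ factor, while the polynomial arithmetic carried out across the $O\at{n}$ distinct degrees and splits contributes the $n^2$; collecting these yields the stated expected running time of $\widetilde{O}\at{n^2\log q}$.
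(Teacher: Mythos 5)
Your proposal is the standard Cantor--Zassenhaus argument, which is exactly what the paper relies on: it cites this statement as \cite[Theorem 14.14]{GG:2003} and notes only that the algorithm proceeds in the three stages (squarefree, distinct-degree, and equal-degree factorization) that you reconstruct, with the randomness and the $1/2$ success probability entering in the equal-degree splitting step just as you describe. Your cost accounting ($O(\log q)$ squarings per modular exponentiation, $\widetilde{O}(n)$ per multiplication modulo $f$, and $O(n)$ degrees/splits) matches the stated $\widetilde{O}(n^2\log q)$ bound, so the proposal is correct and takes essentially the same route.
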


The algorithm in Theorem \ref{thm:polyfactoring} proceeds in $3$ stages, namely (i) squarefree factorization, (ii) equal degree factorization, and finally (iii) distinct degree factorization. As in our case $[K:F]=2$ we only have two possibilities of possible splitting behavior of $\overline{g}(X)$: either a) this polynomial is irreducible which according to Theorem~\ref{thm:lyingOver} corresponds to the case in which $\p$ is inert as $g_i(\theta)=0$ and $\p$ itself generates a prime ideal in $\z_K$ or b) the polynomial splits as $\overline{g}(X)=\overline{g}_1(X) \overline{g}_2(X)$ where both factors  are linear. This corresponds to the case where there are two ideals $\mathfrak{P}_1$ and $\mathfrak{P}_2$ lying over $\p$. On account of Galois theory in this case have that $\mathfrak{P}_2 = \mathfrak{P}_1^*$. 

Using Theorem \ref{thm:polyfactoring} we obtain a refined version of Theorem \ref{thm:lyingOver} that bounds the running-time of finding the ideals lying over a given prime ideal $\mathfrak{p}$ in terms of the bit complexity of $\mathfrak{p}$. 

\begin{cor}\label{cor:lyingOver}
Let $K/F$ be a CM field where $K=F(\theta)$. Let $\p \subseteq \z_F$ be a prime ideal and let $n = S(\p)$ be the bit-size of $\p$. Then there exists a polynomial time algorithm to compute all ideals $\mathfrak{P}$ lying over $\mathfrak{p}$. Furthermore, the bit-size of the $\mathfrak{P}$ is polynomial in $n$.
\end{cor}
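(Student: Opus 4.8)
The plan is to assemble the corollary directly from the three tools established just above: the explicit description of the primes lying over $\p$ in \theo{lyingOver}, the polynomial-time finite-field factoring of \theo{polyfactoring}, and the Hermite-normal-form bounds of \theo{hnf}. Concretely I would proceed in four steps: (i) build the residue field $\f=\z_F/\p$ and reduce the minimal polynomial $g$ of $\theta$ modulo $\p$; (ii) factor $\overline g\in\f\of{X}$; (iii) assemble the candidate primes $\mathfrak P_i=\p\z_K+g_i\at{\theta}\z_K$ from the factors; and (iv) put each $\mathfrak P_i$ into HNF over a fixed integral basis of $\z_K$, which simultaneously yields a canonical representation and a bound on its bit-size.

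For steps (i)--(ii), since $F$ (hence $\z_F$, the conductor $\mathfrak F$, and the degree $2d=[K:\q]$) is treated as a constant, the prime $\p$ lies over a rational prime $p$ with residue degree $f\le d$, so $\f\cong\f_{p^f}$ has size $q=N_{F/\q}\at{\p}$ with $\log q=O\at{n}$; reducing the degree-$2$ polynomial $g$ modulo $\p$ and arithmetic in $\f$ therefore cost time polynomial in $n$. Here I would flag the hypothesis that \theo{lyingOver} requires: $\p$ must be coprime to $\mathfrak F$. Because $K$ is fixed, only finitely many primes divide $\mathfrak F$; these are precisely the primes placed in the exceptional set $\exPrimes$, and for them the lifting data can be precomputed and stored, so they do not affect the polynomial-time claim. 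Applying \theo{polyfactoring} to $\overline g$ of degree $[K:F]=2$ over $\f_q$ gives expected running time $\widetilde O\at{2^2\log q}=\widetilde O\at{n}$, and leaves only the two cases already identified after \theo{lyingOver}: $\overline g$ irreducible (so $\p$ is inert and $\mathfrak P=\p\z_K$), or $\overline g=\overline g_1\,\overline g_2$ with linear factors (so $\mathfrak P_1,\mathfrak P_2=\mathfrak P_1^{\ast}$), in both of which \theo{lyingOver} supplies the generators $\mathfrak P_i=\p\z_K+g_i\at{\theta}\z_K$.

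For step (iv), I would express the $\z$-generators of $\p\z_K+g_i\at{\theta}\z_K$ over a fixed integral basis of $\z_K$, obtaining an integer matrix whose entries are bounded in terms of $N\at{\p}$ and the constant data of $\z_K$, and then run the HNF algorithm of \theo{hnf}; it runs in time polynomial in $n$ and outputs entries bounded by $\abs{\det}=N_{K/\q}\at{\mathfrak P_i}$. For the output bit-size I would then observe that $\mathfrak P_i$ divides $\p\z_K$, so $N_{K/\q}\at{\mathfrak P_i}$ divides $N_{K/\q}\at{\p\z_K}=N_{F/\q}\at{\p}^{2}=q^{2}$; hence $\log N_{K/\q}\at{\mathfrak P_i}=O\at{\log q}=O\at{n}$, and since $S\at{\mathfrak P_i}=\at{2d}^2\log_2 N_{K/\q}\at{\mathfrak P_i}$ with $2d$ constant, $S\at{\mathfrak P_i}$ is polynomial (indeed linear) in $n$, as claimed.

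The bulk of the corollary is bookkeeping rather than new mathematics, so I expect the main obstacle to be the bit-size accounting: one must verify that constructing $\f$ and reducing $g$, the finite-field factorization, and forming and HNF-reducing the generating set of $\mathfrak P_i$ all keep intermediate quantities polynomially bounded in $n$. The cleanest route is to reduce every such measure to $\log N_{F/\q}\at{\p}$ using \lemm{bitsizeNorm} and \lemm{bitsizeBasis}, which show that all the bit-size measures agree up to the constant factors coming from the fixed field $K$, and then to let \theo{hnf} absorb the final representation step.
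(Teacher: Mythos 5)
Your proposal follows essentially the same route as the paper's proof: reduce and factor the minimal polynomial over the residue field via \theo{polyfactoring}, lift the factors to form the two-generator representation $\mathfrak P_i=\p\z_K+g_i\at{\theta}\z_K$ from \theo{lyingOver}, and then compute an HNF via \theo{hnf} to get both the polynomial runtime and the output bit-size bound. Your two small additions --- explicitly flagging the coprimality-with-the-conductor hypothesis (handled, as in the paper, by relegating the finitely many bad primes to the precomputed exceptional set) and the sharper observation that $N_{K/\q}\at{\mathfrak P_i}$ divides $N_{F/\q}\at{\p}^2$ so the output size is in fact linear in $n$ --- are both correct and consistent with the paper's argument.
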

\begin{proof}
Let $g(X) \in F[X]$ be the minimal polynomial of $\theta$ and let $\f_q := \O_F/\mathfrak{p}$ be the finite field that arises as the residue field of $\mathfrak{p}$. Using Theorem \ref{thm:polyfactoring} we see that we can compute the factorization $\overline{g}(X)=\prod_{i=1}^r \overline{g}_i(X)$ into irreducibles in polynomial time in $n$. For each factor $\overline{g}_i$ we can determine a corresponding lifting $g_i(X)$ by considering the components in of $\f_q$ as elements of $F$ while maintaining the same bit-size, i.e., we obtain that the element $g_i(\theta)$ has a bit-size that is polynomial in $n$ as well. Now, we construct the $2$-generator representation of $\mathfrak{P}$ as in eq.~(\ref{eq:lyingOver}). We next find an HNF representation of this ideal: as 
$\theta$ obeys a quadratic equation $F$ we can choose $B = \{1, \theta\}$ as a basis of $K/F$. Hence, we obtain a set of generators in the form
\[
H = \left(\begin{array}{ccccc}
A & 0 & a & b \\
0 & A & c & d
\end{array}
\right) \in \z^{2d\times 4d}
\]
where $n=2d$, $A$ is an HNF for $\mathfrak{p}$, and the matrix $\left(\begin{array}{cc} a & b \\ c & d \end{array}\right)$ is the expansion over $\z$ of the linear map that describes the multiplication by $g_i(\theta)$ with respect to $B$. Note that all coefficients of $H$ are bounded in bit-size by a polynomial in $n$. We now use Theorem \ref{thm:hnf} to compute an HNF $H^\prime$ for $H$, and hence the ideal $\mathfrak{P}$, in time polynomial in $n$. Theorem \ref{thm:hnf} also implies that the output size, i.e., all coefficients of $H^\prime$, are polynomial in $n$.
\end{proof}

\subsection{Outline of our algorithm to solve relative norm equations}
\label{sec:steps}

Let $K$ be a CM field over its totally real subfield $F := K^+$ and denote by $\{\sigma_1, \ldots, \sigma_d\}$ the real embeddings of $F$ into $\r$, i.e, we have that $[K:F]=2$ and $[F:\q]=d$. The extension $K/F$ is Galois and its Galois group is generated by complex conjugation, i.e., ${\rm Gal}(K/F)=\langle \tau \rangle$, where $\tau(x) := x^*$. Furthermore, denote by $\z_F$ and $\z_K$ the rings of integers in $K$ and $F$, respectively. Recall that for an ideal $I\subseteq \z_K$ the norm is defined as $N_{K/F}(I) := |\z_K/I|$ which for principal ideals $I=(x)$ coincides with the usual definition as the product of all Galois conjugates, i.e., $N_{K/F}((x)) = x \cdot x^*$.

For a given element $e \in \z_F$, our approach to solving the relative norm equation
$N_{K/F}(z)=e$ where $z \in \z_K$ relies on the observation that if $I \cdot I^* = e \z_F$ is a factorization of ideals in $\z_F$ and $\eta$ is an arbitrary non-zero element in the lattice generated by $I$, then we have that $N_{K/F}(I) | N_{K/F}(\eta)$. This alone would not be a very useful property as potentially the quantity on the right might be unbounded. Using the fact that $[K:F]=2$ is constant and that $K$ is a CM field we will however be able to show that for suitable $\eta$ the quotient $N_{K/F}(\eta) /N_{K/F}(I)$ will be a constant that just depends on the CM field $K$ and not on the right hand side $e$ of the norm equation (\ref{prob:NE}). Furthermore, we will show that we can find such $z$ in polynomial time, provided that $e$ is benign.

In the next two sections we show that there exists a probabilistic algorithm  that runs in polynomial time (with respect to the bit-size of the right hand side $e$) and finds an element $z$ such that $N_{K/F}\at z=e$ or else reports that no such element exists. We show this first for the case where $e$ generates a prime ideal and then in a subsection section for the case of general benign $e$. The algorithm proceedings in several stages:
\begin{itemize}
\item[Step 1] Determine whether the right hand side $e\z_{F}$ is prime, respectively benign.
\item[Step 2] If $e\z_{F}$ is prime, respectively benign, extract $\mathfrak{q}$ from the prime decomposition  eq.~(\ref{eq:benign}).
\item[Step 3] Compute a prime ideal $\mathfrak{Q}$ lying over $\mathfrak{q}$ using Theorem \ref{thm:lyingOver}.
\item[Step 4] For all ideals $\mathfrak{p_i}$ in $\exPrimes$, precompute all ideals $\mathfrak{P_{i,j}}$ lying over them. This can be done offline.
\item[Step 5] For all exponents ${\bf e}_{i,j}$, where the tuple ${\bf e}$ is taken from a bounded set of candidates, compute $I := \mathfrak{Q} \cdot {\mathfrak{P}_{1,1}^{{\bf e}_{1,1}}} \cdots \ldots \cdot {\mathfrak{P}_k}^{{\bf e}_{k,\ell_k}}$ and apply the following steps to all candidates. In the case of prime $e\z_F$ this step can be omitted and $I := \mathfrak{Q}$.
\item[Step 6] Compute LLL reduced basis for lattice $I$ to obtain an approximation $\eta$ to the SVP for $I$.
\item[Step 7] Compute the norm of $\eta$ and compute $\gamma := N_{K/F}(\eta)/N_{K/F}(I)$ and
attempt to solve the norm equation for $\gamma$ using a known method such as \cite{Simon:2002}. Let $w$ be such that $N_{K/F}(w)=\gamma$.
\item[Step 8] Output ``fail'' if no such solution exists. Otherwise, return a solution $z=\eta w\in \z_K$.
\end{itemize}

We will provide a proof that all steps can be performed by a classical algorithm whose runtime is polynomial in the bit-size of $e$, where in Section \ref{sec:NEprime} we show this for the somewhat simpler, however in practice frequently occurring, case where the right hand side $e\z_F$ generates a prime ideal and then sketch in Section \ref{sec:NEbenign} how the case of any benign $e$ can be handled. Before we can prove this, we need another technical result, namely that it is indeed possible to find an element $\eta$ as needed for Steps 6 and 7 such that the co-factor $\gamma := N_{K/F}(\eta)/N_{K/F}(I)$ is bounded.

\subsection{Bounding the co-factor}\label{sec:bounding}

Assuming that the norm equation $I \cdot I^* = e \z_K$ is solvable implies that $I=\xi \z_K$ for some element $\xi\in \z_K$. We now show that we can find an element $\eta \in \xi \z_K$ such that the quotient of the norms of $I$ and $\eta \z_K$ is a constant. We consider $I$ to be a $2d$-dimensional $\z$ lattice, where $d = [F:\q]$. This means that there exists a basis $\{a_1, \ldots, a_{2d}\} \subset K$ such that $I = a_1 \z + \ldots + a_{2d} \z$. Recall further that there is a quadratic form on $I$ defined by $(x,y) := {\rm Tr}_{K/\q}(x y^*)$ and that the Gram matrix $G_{i,j} := (a_i, a_j)$ is integer valued, i.e., $G \in \z^{2d \times 2d}$. Furthermore, for the volume of the fundamental parallelepiped of $I$, the identity ${\rm vol}(I) = \sqrt{\det(G)}$ holds \cite{MG:2002}. The fact that $I=\xi \z_K$ is principal is used in the following lemma to compute ${\rm vol}(I)$ in terms of the absolute norm of $\xi$:
\begin{lem}\label{lem:volume}
For each $I= \xi \z_K$, we have that ${\rm vol}(I) = {\rm vol}(\z_K) \cdot N_{K/\q}(\xi)$.
\end{lem}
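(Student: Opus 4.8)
The plan is to realize the principal ideal $I=\xi\z_K$ as the image of the lattice $\z_K$ under the real-linear map induced by multiplication by $\xi$, and then to compute the volume scaling factor of that map. First I would recall from Section~\ref{sec:lattice} that $\boldsymbol\sigma$ identifies $\z_K$ with a rank-$2d$ lattice $\CL\subset\r^{2d}$ carrying the bilinear form $\mathrm{Tr}_{K/\q}(xy^*)$, and that ${\rm vol}$ is the covolume of this lattice (the square root of the determinant of its Gram matrix). Since $\boldsymbol\sigma(\xi z)$ depends $\r$-linearly on $\boldsymbol\sigma(z)$, multiplication by $\xi$ is represented by a fixed real matrix $M_\xi$ with $\boldsymbol\sigma(I)=M_\xi\,\boldsymbol\sigma(\z_K)$, whence
\[ {\rm vol}(I) = |\det M_\xi|\cdot {\rm vol}(\z_K). \]

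The second step is to evaluate $\det M_\xi$. Because $K$ is a CM field, $K\otimes_\q\r\cong\c^{d}$ via the $d$ conjugate pairs of embeddings $\sigma_{k,+},\sigma_{k,-}$, and in the complex coordinate attached to $\sigma_{k,+}$ multiplication by $\xi$ is just multiplication by the scalar $\sigma_{k,+}(\xi)$. Viewing $\c\cong\r^2$, multiplication by a complex number $w$ is a $2\times 2$ real block of determinant $|w|^2$, so
\[ \det M_\xi = \prod_{k=1}^{d} |\sigma_{k,+}(\xi)|^2 = \prod_{k=1}^{d} \sigma_{k,+}(\xi)\,\sigma_{k,-}(\xi) = N_{K/\q}(\xi), \]
using that the $2d$ embeddings of $K$ are exactly the $\sigma_{k,\pm}$ and that $\sigma_{k,-}(\xi)=\overline{\sigma_{k,+}(\xi)}$. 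Note this quantity is automatically positive, so no sign or absolute-value issue arises. Combining the two displays gives ${\rm vol}(I)=N_{K/\q}(\xi)\cdot{\rm vol}(\z_K)$.

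I do not expect a genuine obstacle here; the only point needing care is the bookkeeping between the field norm (the determinant of multiplication-by-$\xi$ as a $\q$-linear endomorphism of $K$) and the $2d\times 2d$ real determinant of $M_\xi$. These agree because the determinant is basis-independent under $\q\hookrightarrow\r$, and the CM structure writes the norm as a product of squared moduli. As an alternative route that sidesteps the embedding entirely, one may observe that $I=\xi\z_K\subseteq\z_K$ is a finite-index sublattice with index $[\z_K:I]=N_{K/\q}(\xi\z_K)=|N_{K/\q}(\xi)|$, and invoke the standard fact that ${\rm vol}(\Lambda')=[\Lambda:\Lambda']\,{\rm vol}(\Lambda)$ for any full-rank sublattice $\Lambda'\subseteq\Lambda$; this yields the claim at once.
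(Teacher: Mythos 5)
Your proposal is correct and follows essentially the same route as the paper: both realize multiplication by $\xi$ as a real-linear map $M_\xi$ carrying the lattice $\z_K$ onto $I$, and identify $|\det M_\xi|$ with $N_{K/\q}(\xi)$ (the paper via the $\q$-linear determinant and the CM positivity of the norm, you via the product of squared moduli over the conjugate pairs of embeddings, which is the same computation). The index-based alternative you mention at the end is also a valid one-line substitute.
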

\begin{proof}
We first choose a basis $\{b_1, \ldots, b_{2d}\}$ of $\z_k$ over $\z$, i.e., $\z_K = b_1 \z + \ldots b_{2d} \z$. With respect to this basis, multiplication with the fixed element $\xi$ is a linear transformation $M_\xi$ defined via $M_\xi (x_1, \ldots, x_{2d}) = \xi (b_1 x_1 + \ldots + b_{2d} x_{2d})$, and the determinant of $M_\xi$ is equal to the norm $N_{K/\q}(\xi)$. Note also, as $K$ is a CM field, all Galois automorphisms $\sigma \in {\rm Gal}(K/\q)$ come in complex conjugate pairs, i.e., $N_{K/\q}(\xi)\geq 0$, i.e., $N_{K/\q}(\xi) = |\det(M_\xi)|$.

By applying a base change to the Gram matrix $G$ in which we go from pairs of conjugates $\sigma_i, \overline{\sigma}_i$ to ${\rm Re}(\sigma_i), {\rm Im}(\sigma_i)$, where by convention we order the Galois automorphisms in such a way that the first $d$ are pair-wise non-conjugates under complex conjugations, i.e., $\overline{\sigma}_i \not= \sigma_j$ for all $1 \leq i < j \leq d$.

For general $x_i \in K$ we denote by $V(x_1, \ldots, x_{2d})$ the matrix
\[
V(x_1, \ldots, x_{2d}) = \left(
\begin{array}{ccc}
{\rm Re}(\sigma_1(x_1)) & \ldots & {\rm Re}(\sigma_1(x_{2d})) \\
{\rm Im}(\sigma_1(x_1)) & \ldots & {\rm Im}(\sigma_1(x_{2d})) \\
\vdots & \ddots & \vdots \\
{\rm Re}(\sigma_d(x_1)) & \ldots & {\rm Re}(\sigma_d(x_{2d})) \\
{\rm Im}(\sigma_d(x_1)) & \ldots & {\rm Im}(\sigma_d(x_{2d}))
\end{array}
\right).
\]

Using this matrix we can then express the volume of $I$ as ${\rm vol}(I) = \sqrt{\det(G)} = |\det(V(a_1, \ldots, a_{2d}))|$, where the set $\{a_i : i =1, \ldots, 2d\}$ forms a basis for $I$ over $\z$.

Next, we observe that the matrices $V_{I} := V(a_1, \ldots, a_{2d})$ and $V_{\z_K} := V(b_1, \ldots, b_{2n})$ are related via $V_I = M_\xi V_{\z_K}$. From this we conclude that
\[
{\rm vol}(I) = \sqrt{\det(G)} = |\det(V_I)|
= |\det(M_\xi)| \cdot |\det(V_{\z_K})|
= N_{K/\q}(\xi) \cdot {\rm vol}(\z_K),
\]
as claimed.
\end{proof}

We now show how to find an element $\eta \in \xi \z_K$ such that the quotient of the norms of $I$ and $\eta \z_K$ is a constant as mentioned in the beginning of this section.

\begin{lem}\label{lem:bounded}
Let $I=\xi \z_K$ be an ideal in $\z_K$ such that $I \cdot I^* = e \z_K$. Then there exists $\eta \in \z_K$ such that $N_{K/\q}(\xi/\eta)$ is upper bounded by a constant $C_{K}$ that depends just on the extension $K/\q$.
\end{lem}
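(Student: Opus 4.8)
The plan is to realize $I=\xi\z_K$ as a full-rank lattice in $\r^{2d}$ through the map $\vecs$ of \sec{lattice} and to pull out a short nonzero vector, whose algebraic norm is then automatically small. \lemm{volume} already supplies the covolume $\vol\at{I}=\vol\at{\z_K}\,N_{K/\q}\at\xi$, so the only additional ingredients I need are a shortest-vector bound and the arithmetic--geometric mean inequality.

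First I would apply Minkowski's first theorem (which suffices for the existence claim) to the rank-$2d$ lattice $\vecs\at{I}$: there is a nonzero $\eta\in I$ with
\[
\nrm{\vecs\at\eta}^2 \le c_{2d}\,\vol\at{I}^{1/d},
\]
where $c_{2d}$ is Hermite's constant in dimension $2d$, a quantity depending only on $d=[F:\q]$ and hence only on $K$. Since $\eta\in I=\xi\z_K$, the quotient $\eta/\xi$ lies in $\z_K$; this is exactly the element $w$ for which Step~7 of \sec{steps} solves a bounded norm equation.

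The crucial point is that $\vecs$ is tailored to the CM structure of $K$: as the $2d$ complex embeddings split into the $d$ conjugate pairs $\s_{k,+},\s_{k,-}$ with $\overline{\s_{k,+}\at z}=\s_{k,-}\at z$, one has $\nrm{\vecs\at\eta}^2=\sum_{k=1}^{d}\abss{\s_{k,+}\at\eta}$ while $N_{K/\q}\at\eta=\prod_{k=1}^{d}\abss{\s_{k,+}\at\eta}$. Applying AM--GM to the $d$ nonnegative numbers $\abss{\s_{k,+}\at\eta}$ gives
\[
N_{K/\q}\at\eta^{1/d}\le \frac1d\nrm{\vecs\at\eta}^2,\qquad\text{so}\qquad N_{K/\q}\at\eta\le\at{\nrm{\vecs\at\eta}^2/d}^d .
\]
Combining this with the Minkowski bound and \lemm{volume},
\[
\frac{N_{K/\q}\at\eta}{N_{K/\q}\at\xi}\le \at{\frac{c_{2d}}{d}}^{d}\frac{\vol\at{I}}{N_{K/\q}\at\xi}=\at{\frac{c_{2d}}{d}}^{d}\vol\at{\z_K}=:C_{K},
\]
and $C_{K}$ depends only on $K$. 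As $N_{K/\q}\at{\eta/\xi}=N_{K/\q}\at\eta/N_{K/\q}\at\xi$, this is precisely the desired bound on the co-factor $\gamma$ of Step~7.

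The step I expect to require the most care is not the inequality itself but verifying that $C_{K}$ is genuinely independent of the right-hand side $e$. This rests on two cancellations: the unknown factor $N_{K/\q}\at\xi$ drops out once we normalize by $\vol\at{I}=\vol\at{\z_K}N_{K/\q}\at\xi$, and the CM identity $\overline{\s_{k,+}\at z}=\s_{k,-}\at z$ is what makes the Euclidean norm under $\vecs$ match the factorization of the absolute norm into the $d$ factors $\abss{\s_{k,+}\at\eta}$. A secondary point is that the \emph{algorithmic} version cannot invoke Minkowski directly; instead I would run LLL on a $\z$-basis of $I$ and take $\eta=b_1$, which gives the same estimate up to an extra dimension-only factor $2^{(2d-1)/2}$ absorbed into $C_{K}$. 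It then remains only to note $\eta\neq0$ (immediate, since $\vecs\at{I}$ is a full lattice) and $\eta/\xi\in\z_K$ (immediate from $\eta\in\xi\z_K$).
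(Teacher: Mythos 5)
Your proof is correct and follows essentially the same route as the paper's: extract a short nonzero vector of the lattice $\vecs\at{I}$, bound its absolute norm via AM--GM applied to the $d$ quantities $\abss{\s_{k,+}\at{\eta}}$, and cancel $N_{K/\q}\at{\xi}$ using Lemma~\ref{lem:volume}. The paper obtains the short vector directly from LLL (with the $(4/3)^{(2d-1)/4}$ factor) rather than first citing Minkowski, but as you yourself note this only changes the dimension-dependent constant absorbed into $C_K$.
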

\begin{proof}
As above, we consider $I$ as a lattice $I = a_1 \z + \ldots + a_{2d} \z$. We now use the LLL algorithm on the basis $\{a_1, \ldots, a_{2d}\} \subset \r^{2d}$. Using the LLL algorithm described in \cite{NS:2005} it is known that the first vector $v = {\bf b}_1$ in the LLL reduced basis $\{ {\bf b}_1, \ldots, {\bf b}_{2d} \}$ for $I$ satisfies the following bound:
\[
\| v \| \leq (4/3)^{(2d-1)/4} {\rm vol}(I)^{1/{2d}}.
\]
As $I=\xi \z_K$ is by assumption principal and $v \in I$, there exists an element $\eta \in \z_K$ such that $v = \eta \xi$. In order to finish the proof of the lemma, it remains to show that the norm of $\eta$ is upper bounded by a constant that just depends on $K$ alone and is in particular independent of the right hand side $e$ of the norm equation:

\begin{eqnarray*}
N_{K/\q}(\eta) & = & N_{K/\q}(v) \; / \; N_{K/\q}(\xi)
 =  \prod_{i=1}^{d} | \sigma_i(v)|^2 \; / \; N_{K/\q}(\xi) \\
& \stackrel{{\rm AGM}}{\leq} & \left( \frac{1}{d} \sum_{i=1}^{d} | \sigma_i(v)|^2 \right)^d / \; N_{K/\q}(\xi)
= \left( \frac{1}{2d} {\rm Tr}_{K/\q}(v v^*) \right)^d \; / \; N_{K/\q}(\xi) \\
& = & \left(\frac{1}{2d}\right)^{d} \| v \|^{2d} \; / \; N_{K/\q}(\xi)
\quad \stackrel{\rm {LLL}}{\leq}  \quad \left( \frac{1}{2d} \left(\frac{4}{3}\right)^{(2d-1)/2}\right)^d
{\rm vol}(I) / \; N_{K/\q}(\xi) \\
& \stackrel{{\rm Lemma}~\ref{lem:volume}}{=} &
\left( \frac{1}{2d} \left(\frac{4}{3}\right)^{(2d-1)/2}\right)^d
N_{K/\q}(\xi) \; {\rm vol}(\z_K) \; / \; N_{K/\q}(\xi)\\
& = &
\left( \frac{1}{2d} \left(\frac{4}{3}\right)^{(2d-1)/2}\right)^d {\rm vol}(\z_K) =: C_K
\end{eqnarray*}
where the first inequality is the arithmetic-geometric-mean inequality (AGM) and $v$ is the first basis vector obtained via LLL reduction for $\delta_{LLL} \equiv 1$ and $\eta_{LLL} \equiv 1/2$ as in \cite{NS:2005}.
\end{proof}

Using Lemma \ref{lem:bounded} will finally put us into a position where we can solve norm equations as in Problem \ref{prob:NE} efficiently, in case $e$ is a benign number as defined in Definition \ref{def:benign}. In Section \ref{sec:NEprime} establish this result first for the case where $e \z_F$ is a prime ideal itself as this case is relatively straightforward. We deal with the more general case of benign $e$ in Section \ref{sec:NEbenign}.

\subsection{Constructing the solution: prime case}\label{sec:NEprime}

We now provide a proof of the following theorem which we restate from Section \ref{sec:high:level:desc}.

\nix{
\begin{thm}
\label{thm:norm-eq} Given totally positive element $e$ of $F$ there exists an algorithm for testing if the instance of integral relative norm equation in $K/F$
\[
 zz^{\ast} = e , z \in \z_K
\]
can be solved in polynomial time in $\log T_2\at{e}$ (procedure IS-EASILY-SOLVABLE) If the test is passed, there exist another algorithm for deciding if the solution exists and finding it that runs in time polynomial in $\log T_2\at{e}$~(procedure FAST-SOLVE-NORM-EQ). Procedure IS-EASILY-SOLVABLE returns true at least for cases when the ideal $e\z_F$ is prime.
\end{thm}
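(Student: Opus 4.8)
The plan is to prove the two claims separately, organizing everything around Steps 1--8 of \sec{steps} specialized to the case that $e\z_F$ is prime. For the procedure IS-EASILY-SOLVABLE I would first compute the absolute norm $N=N_{F/\q}\at{e}=\abs{\z_F/e\z_F}$, a positive rational integer whose bit-size is $O\at{\log T_2\at{e}}$ (by \lemm{bitsizeNorm}, using that $K$, hence $F$, is treated as a constant). If $N$ is a rational prime, then $\z_F/e\z_F$ is a field of prime order and $e\z_F$ is a degree-one prime, so the test returns TRUE. To capture \emph{all} prime right-hand sides (primes of residue degree $f>1$, where $N=p^f$), I would additionally extract integer $f$-th roots for the boundedly many $f\le d$, test the resulting base $p$ for primality, and—when $N=p^f$—confirm that $e\z_F$ is genuinely a single prime above $p$ by factoring the defining polynomial of $\z_F$ modulo $p$ via Dedekind's criterion, which is polynomial by \theo{polyfactoring}. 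For general benign $e$ one first divides out the precomputed exceptional primes of $\exPrimes$ (equation~(\ref{eq:benign})) and applies the same check to the remaining factor $\mathfrak{q}$; every operation is polynomial in $\log T_2\at{e}$, and in particular prime $e\z_F$ always passes, as required.

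For FAST-SOLVE-NORM-EQ, assuming the test has isolated the prime $\mathfrak{q}=e\z_F$, I would compute the prime(s) $\mathfrak{Q}$ of $\z_K$ lying over $\mathfrak{q}$ using \theo{lyingOver} and \corr{lyingOver} in time polynomial in $\log T_2\at{e}$. Since $K/F$ is a CM extension, $\mathfrak{q}$ is either inert or splits as $\mathfrak{Q}\mathfrak{Q}^{\ast}$; if $\mathfrak{q}$ is inert then $\mathfrak{q}\z_K$ is prime and no ideal $I$ can satisfy $II^{\ast}=e\z_K$, so that case is rejected at once. In the split case I set $I=\mathfrak{Q}$ and run LLL on $I$ viewed as a $2d$-dimensional $\z$-lattice to obtain a short vector $\eta\in I$. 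By \lemm{volume} and \lemm{bounded}, the cofactor ideal $\mathfrak{a}:=\mathfrak{Q}^{-1}\at{\eta}$ has absolute norm bounded by a constant $C_K$ depending only on $K$. I would then test whether $\mathfrak{a}$ is principal—a bounded-size instance solvable in constant time, e.g.\ by Simon's $S$-unit method—and, on finding a generator $w$, output $z=\eta/w$: this generates $\mathfrak{Q}$, so $N_{K/F}\at{z}=zz^{\ast}$ generates $\mathfrak{q}=e\z_F$, and a final unit adjustment in $\z_F^{\times}$ yields $zz^{\ast}=e$ exactly.

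Correctness rests on a two-way equivalence: a solution $z\in\z_K$ with $zz^{\ast}=e$ exists if and only if $\mathfrak{q}$ splits in $K$ and the prime $\mathfrak{Q}$ above it is principal (up to the totally-positive-unit adjustment). The forward implication is immediate, since $\at{z}$ realizes an ideal $I$ with $II^{\ast}=e\z_K$. For the reverse, \lemm{bounded} ensures that the explicit LLL vector $\eta$ differs from an unknown generator only by a factor of bounded norm, so no solution is lost when the problem is reduced to testing principality of the constant-norm ideal $\mathfrak{a}$. For the running time, I would invoke \lemm{bitsizeNorm} and \lemm{bitsizeBasis} to certify that the $T_2$-size, Hermite-normal-form size, and basis size are all within constant factors (as $K$ is constant), so that the norm and primality tests, the prime lifting of \corr{lyingOver}, the HNF and LLL reductions, and the constant-size principality test are each polynomial in $\log T_2\at{e}$; this gives the claimed bounds for both procedures.

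The main obstacle is the reverse direction together with the accompanying unit bookkeeping. \lemm{bounded} supplies the decisive bound on the cofactor, but converting it into a faithful decision procedure requires showing that replacing the unknown generator by the computable short vector $\eta$ neither creates nor destroys solvability, and that the output can be forced to have relative norm \emph{exactly} $e$ rather than $e$ times a totally positive unit. This last point is where the CM structure is essential: complex conjugation is the nontrivial element of $\Gal\at{K/F}$, so the relative norm of a unit of $\z_F$ is its square, and the adjustment reduces to a finite computation in $\z_F^{\times}/\at{\z_F^{\times}}^2$. I expect the benign (non-prime) case, deferred to \sec{NEbenign}, to require the extra work of enumerating the bounded family of exponent vectors in Step 5 and checking that the bound $C_K$ survives the additional ideal factors, but the prime case above already captures the essential mechanism.
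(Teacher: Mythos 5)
Your proposal follows the paper's own proof essentially step for step: trial division by the precomputed exceptional primes followed by a rational primality test for IS-EASILY-SOLVABLE, lifting $\mathfrak{q}$ to $\z_K$ via Theorem~\ref{thm:lyingOver} and Corollary~\ref{cor:lyingOver}, LLL reduction of the ideal lattice, and the cofactor bound of Lemmas~\ref{lem:volume} and~\ref{lem:bounded} to reduce everything to a residual problem whose size depends only on $K$. The only (harmless) deviations are that you phrase the residual step as a bounded principality test plus a unit adjustment in $\z_F^{\times}/N_{K/F}\at{\z_K^{\times}}$ where the paper instead solves the constant-size norm equation $N_{K/F}\at{w}=\gamma$ directly, and that your root-extraction refinement of the primality test also catches primes $e\z_F$ of residue degree greater than one, which the paper's pseudo-code as written does not.
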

}

\normeq*

We have now all ingredients in place in order to prove that the norm equations arising in the context of our approximation method can be solved in time that is polynomial in the input size. We first prove this for the case where the right hand side $e$ generates a prime ideal $\mathfrak{p}=e \z_F$ and leave the more general case of a benign integer $e$ for the next section.

\bigskip
\noindent
{\em Proof (of Theorem \ref{thm:norm-eq}).}
Let $n=S(e)$ be the bit-size of the ideal generated by $e$. We go through all Steps 1--7 described in Section \ref{sec:steps} and assert that all operations can be performed in time that is upper bounded by a polynomial in $n$.

In Step 1 we run a test which is described in subroutine IS-EASILY-SOLVABLE shown in Figure \ref{fig:proc-fast-solve}. The norm computations in lines \ref{ln:easy:norm1} and \ref{ln:easy:norm2} are clearly polynomial in $n$. Finally, for the primality test at the last line of the subroutine one can use for instance a probabilistic test such as Miller-Rabin \cite[Section~8.2]{coh3} or a deterministic test \cite{AKS:2004}.

\begin{figure}[hbt]
\protect\caption{\label{fig:proc-is-easily-solvable}Check if the right hand side is a benign integer in the sense of Definition \ref{prob:rnf-2}.}
\rule[0.5ex]{1\columnwidth}{1pt}
\begin{centering}
\begin{algorithmic}[1]
\fxinput $F$, $\z_{F}$, $P$
\Statex[1] $\z_{F}$ is the ring of integers of the totally real field $F$  and $P$ is a user defined parameter to
\Statex[1] select the set of exceptional primes $\exPrimes$. 
\inp $e$
\Statex[1] Check if the given element $e \in \z_{F}$ is benign with respect to a set $\exPrimes$ of user defined primes.
\Procedure{IS-EASILY-SOLVABLE}{}
\offline
\State $\exPrimes_0 \leftarrow {\rm AllBoundedPrimes}(\z_F, P)$ \Comment Enumerate all prime ideals $\mathfrak p$ with $N({\mathfrak p})\leq P$.
\State $\exPrimes_1 \leftarrow {\rm AllPrimeDivisors}(\Delta_F)$
\State $\exPrimes \leftarrow \exPrimes_0 \cup \exPrimes_1$ \Comment Yields finite set $\exPrimes = \{\mathfrak{p}_1, \ldots, \mathfrak{p}_k\}$ of exceptional primes.
\State \Call{FAST-SOLVE-NORM-EQ}{$K,F,\z_F,\z_K,\CP$} \Comment $K = F(\sqrt{b})$, see Section~\ref{sec:basics}
\online
\State $u \leftarrow N_{F/\q}(e)$ \label{ln:easy:norm1}
\While{exists $\mathfrak{p}\in \exPrimes$ such that $N_{F/\q}(\mathfrak{p}) \big| u$} \label{ln:easy:norm2}
\State $u \leftarrow u/N_{F/\q}(\mathfrak{p})$
\EndWhile
\EndProcedure
\out ${\rm IsPrime}(u)$  \label{ln:easy:isprime}
\Statex Offline precomputation of exceptional primes $\exPrimes$ and online test whether $e$ is benign via trial divisions by elements of $\exPrimes$ and a final primality test in $\z$.%
\end{algorithmic}
\par\end{centering}
\rule[0.5ex]{1\columnwidth}{1pt}
\end{figure}

Step 2 can be done by computing quotients of the form $(e) (\mathfrak{p}_i)^{-1}$ which can be done in polynomial time in the input bit-size and at an increase per division that is also at most polynomial \cite{BF:2012}. Eventually this yields the prime ideal $\mathfrak{q}$. 

This step is also done in subroutine IS-EASILY-SOLVABLE. All subsequent steps and line numbers refer to subroutine FAST-SOLVE-NORM-EQ shown in Figure \ref{fig:proc-fast-solve}. 

For Step 3, in line \ref{ln:norm-eq:hnf} of subroutine FAST-SOLVE-NORM-EQ, we use Theorem \ref{thm:lyingOver} and the complexity analysis given in Corollary  \ref{cor:lyingOver} in order to compute an HNF for the ideal $\mathfrak{Q}$ lying over $\mathfrak{q}$ in polynomial time and almost polynomial increase of the bit-size.

Steps 4 in line \ref{ln:norm-eq:offline} is an offline computation which does not count toward the cost of the online solution of the norm equation.

Step 5 does not have to be carried out as by assumption in this subsection we assume that $\mathfrak{q}$ is prime, i.e., there is only one prime ideal $\mathfrak{Q}$ that we have to consider. We will discuss this step and the consequences for the subsequent steps in case $e$ is benign but not prime in the next subsection.

Step 6 in line \ref{ln:norm-eq:lll} involves the computation of a reduced lattice basis for the ideal corresponding to $\mathfrak{Q}$ from the HNF that was computed in Step 3. Using bounds on the complexity of the LLL algorithm \cite{NS:2009} we see that the running time of this step is polynomial in the input size $n$ and so is the bit-size of the short vector $\eta$ that is produced by this computation.

For Step 7 in line \ref{ln:norm-eq:constant} we use a known method for solving norm equations such as Simon's $S$-unit based algorithm \cite{Simon:2002} for which an implementation e.g. in Magma \cite{MAGMA:1997} is available. As the element $\gamma$ is constant and does not depend on the input size $n$ we can assume that this computation can be done in constant time that does not affect our overall running time.

Finally, in Step 8 we catch the case where there is no solution for $\gamma$, which then implies that there is no solution for $e$ and combine $\eta$ and $w$ using an ideal multiplication into the final solution $z$ to the norm equation  $N_{K/F}(z)=e$. This again can be done in polynomial time.

\begin{figure}[hbt]
\protect\caption{\label{fig:proc-fast-solve} Compute an integer solution $x\in \z_{K}$ to the norm equation $N_{K/F}(z)=e$, provided it exists.}
\rule[0.5ex]{1\columnwidth}{1pt}
\begin{centering}
\begin{algorithmic}[1]
\fxinput $K$, $F$, $\z_{F}$, $\z_{K}$, $\exPrimes$
\Statex[1]  $\z_{F}$ are the integers of the totally real subfield $F$ of a CM field $K$, $\z_K$ are the integers of $K$ 
\Statex[1] and $\exPrimes=\{\mathfrak{p}_1, \ldots, \mathfrak{p}_k\}$ is a finite set of exceptional primes.
\inp $e$
\Statex[1] Find integer solution $z\in \z_{K}$ to $N_{K/F}(z)=e$, where $e\in\z_F$, provided it exists.
\Procedure{FAST-SOLVE-NORM-EQ}{}
\offline
\For{$i=1,\ldots,k$}
\State $\mathfrak{P}_i \leftarrow$ PrimesLyingOver$(\mathfrak{p}_i)$ \label{ln:norm-eq:offline}
\EndFor
\State $\theta \leftarrow$ PrimitiveElement$(K/F)$
\State $f(X) \leftarrow$ MinimalPolynomial$(\theta)$
\spec  Precomputed primes lying over the primes in $\exPrimes$, the minimal polynomial of a primitive element $\theta$.
\online
\State \assert IS-EASY-SOLVABLE($e$)
\State \Comment Compute prime ideal $\mathfrak{q}$ if test is passed successfully.
\State \assert PASS-SOLVABILITY-TEST($e$)
\State \Comment Check existence of a global rational solution via the Hasse principle.
\State $g(X) \leftarrow$ Lift(ModularFactorization($f(X),\z_F/\mathfrak{q})$ \Comment Obtain a lifted factor $g(X)\in \z_{F}[X]$.
\State $I \rightarrow {\rm HNF}(\mathfrak{q} \z_{K}, g(\theta))$ \label{ln:norm-eq:hnf} 
\State $\eta \leftarrow {\rm FirstBasisVector}({\rm LLL}(I))$ \label{ln:norm-eq:lll}
\State $\gamma \leftarrow e (N_{K/F}(\eta))^{-1}$
\State $w \leftarrow$ SolveConstantNormEquation$(\gamma)$ \label{ln:norm-eq:constant} 
\State \Comment Use any method for solving relative norm equations $N_{K/F}(w)=\gamma$ for the remaining bounded size element $\gamma \in \z_{F}$. If this fails, repeat the steps following line \ref{ln:norm-eq:hnf} with $I \leftarrow I \cdot {\mathfrak P_i}$ for $i=1,\ldots,k$ until success in line \ref{ln:norm-eq:constant}, otherwise report ``failure.'' 
\EndProcedure
\out Result $z = w\eta$ or ``failure'' if no solution $w$ exists.
\Statex Reduced the computation of a solution to the norm equation to a finite problem that is independent of the input size.
\end{algorithmic}
\par\end{centering}
\rule[0.5ex]{1\columnwidth}{1pt}

\end{figure}

A pseudocode description of the Steps 3--8 is given in subroutine FAST-SOLVE-NORM-EQ shown in Figure \ref{fig:proc-fast-solve}. From the above discussion of the steps it follows that the overall runtime is polynomial in the input size $n=S(e)$ and by Lemma \ref{lem:bitsizeNorm} therefore also polynomial in the bit-size $\log{T_2(e)}$ as claimed. \hfill $\Box_{prime}$

\subsection{Constructing the solution: general case}\label{sec:NEbenign}

We briefly discuss the implications of $e$ being benign but not prime. This will involve a change in Step 5, i.e., instead of only considering the prime ideals $\mathfrak{Q}$ lying over $\mathfrak{q}$ we consider all ideals $I$ that can be formed by multiplying $\mathfrak{Q}$ with the ideals lying over the various prime factors $\mathfrak{q}_i \in \exPrimes$. All ideals lying over $\exPrimes$ can be precomputed without any additional cost to the online part. Also, we note that even though the number of ideals to be considered grows significantly, this increase is still just a constant as for any given input parameter $P$ of FAST-SOLVE-NORM-EQ in Figure \ref{fig:proc-fast-solve} the size of $\exPrimes = \exPrimes_0 \cup \exPrimes_1$ is constant. The prime ideals in $\exPrimes_0$ can be found by enumeration and those in $\exPrimes_1$ can be found by factoring the discriminant $\Delta_F$ which, as argued before, can be done in polynomial time as $F$ is constant.  

In order to compute $I$ we now multiply $\mathfrak{Q}$ with all possible combinations of factors that are extensions $\mathfrak{P}$ of ideals in $\exPrimes$. Each such multiplication can be done in polynomial time with polynomial increase in bit-size~\cite{BF:2012}. Then we perform each of the subsequent Steps 6--7 as in the prime case and report success if any of the considered cases leads to a successful solution $z$ and ``failure'' otherwise. Overall, also in case of benign right hand sides $e$ we obtain a polynomial time classical algorithm to solve the norm equation $N_{K/F}(z)=e$. \hfill $\Box_{benign}$

\subsection{Performance improvement: filtering out candidates via the Hasse principle}

It is possible to perform simple tests whether a solution to $N_{K/F}(z) = e$ over the {\em rational} elements of $K$ (i.e., not necessarily elements in $\z_K$) exists. A famous test in this regard is the Hasse Norm Theorem \cite{Neukirch:99} that asserts that a global solution, i.e., a solution over $K$, exists if and only if a solution exists with respect to all local fields associated with $K/F$. More precisely, we have:

\begin{thm}[Hasse Norm Theorem]
Let $K/F$ be a cyclic extension. An element $e \in F^\times$ is a norm of an element in $K^\times$ if and only it is a norm at every prime of $F$, including the infinite primes.
\end{thm}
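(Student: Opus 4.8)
The plan is to recast the statement in the language of Galois cohomology and reduce the nontrivial implication to the Hasse principle for the Brauer group, exploiting crucially that $\Gal(K/F)$ is cyclic. The forward implication is immediate: if $e = N_{K/F}(z)$ for some $z \in K^\times$, then for every place $v$ of $F$ and any place $w$ of $K$ above $v$ one gets $e = N_{K_w/F_v}(z)$ simply by applying the local embedding, so $e$ is a local norm everywhere. All of the content lies in the converse, which I would attack as follows.

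Write $G = \Gal(K/F) = \langle \sigma \rangle$, cyclic of order $n = [K:F]$. The obstruction to $e$ being a global norm is exactly its class in the Tate cohomology group
\[
\hat{H}^0(G, K^\times) = F^\times / N_{K/F}(K^\times),
\]
while for each place $v$, with decomposition group $G_v \subseteq G$, the obstruction to being a local norm is its class in $\hat{H}^0(G_v, K_w^\times) = F_v^\times / N_{K_w/F_v}(K_w^\times)$. Thus the theorem is precisely the injectivity of the localization map
\[
F^\times / N_{K/F}(K^\times) \longrightarrow \bigoplus_v F_v^\times / N_{K_w/F_v}(K_w^\times).
\]

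To prove injectivity I would use that, because $G$ is cyclic, its Tate cohomology is $2$-periodic: cup product with a fixed generator of $\hat{H}^2(G, \bZ) \cong \bZ/n$ yields canonical isomorphisms $\hat{H}^0(G, K^\times) \cong \hat{H}^2(G, K^\times)$ globally, and $\hat{H}^0(G_v, K_w^\times) \cong \hat{H}^2(G_v, K_w^\times)$ locally, compatibly with localization. Now $\hat{H}^2(G, K^\times)$ is the relative Brauer group $\mathrm{Br}(K/F) \subseteq \mathrm{Br}(F)$, and similarly $\hat{H}^2(G_v, K_w^\times) \subseteq \mathrm{Br}(F_v)$. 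The key input is the injectivity of
\[
\mathrm{Br}(F) \hookrightarrow \bigoplus_v \mathrm{Br}(F_v),
\]
the Albert--Brauer--Hasse--Noether theorem, which is part of the fundamental exact sequence of global class field theory. Chasing the resulting commutative square, a class killed by every local map is killed globally; transporting this back through the periodicity isomorphisms gives exactly the assertion that an element which is a local norm everywhere is a global norm.

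The main obstacle is that the injectivity of the Brauer-group localization map is itself a deep theorem, essentially equivalent to the global reciprocity law and resting on both the first and second inequalities of class field theory; I would invoke it rather than reprove it, as is standard (following \cite{Neukirch:99}). A secondary technical point to handle with care is the matching of the local decomposition groups $G_v$ with the cyclic structure, the independence of the choice of $w \mid v$, and the observation that all but finitely many places are unramified and therefore automatically carry local norms, so that the direct sum is genuinely finite and the diagram chase is legitimate.
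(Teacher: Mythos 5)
The paper states this result as a classical theorem with a citation to \cite{Neukirch:99} and supplies no proof of its own, so there is no in-paper argument to compare against; what you have written is the standard cohomological proof, and it is correct in outline. You identify the global obstruction with $\hat{H}^0(G,K^\times)=F^\times/N_{K/F}(K^\times)$, use the $2$-periodicity of Tate cohomology for cyclic $G$ to transport everything to $H^2(G,K^\times)=\mathrm{Br}(K/F)$ and its local analogues, and conclude from the injectivity of $\mathrm{Br}(F)\to\bigoplus_v \mathrm{Br}(F_v)$ (Albert--Brauer--Hasse--Noether); invoking that theorem rather than reproving it is entirely reasonable, and cyclicity enters exactly where it must, since the norm principle fails already for some biquadratic extensions. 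Two points deserve tightening. First, in the forward direction your claim that ``one gets $e=N_{K_w/F_v}(z)$ by applying the local embedding'' is not literally correct: for $z\in K^\times$ one has $N_{K/F}(z)=\prod_{w\mid v}N_{K_w/F_v}(z_w)$, a product over all places of $K$ above $v$; the conclusion that $e$ lies in the local norm group still follows because the norm groups of the conjugate completions $K_w/F_v$ coincide and form a subgroup of $F_v^\times$, but that step should be said. Second, the compatibility of the periodicity isomorphisms with localization requires checking that the restriction of a generator of $\hat{H}^2(G,\mathbb{Z})\cong\mathbb{Z}/n$ to a decomposition group $G_v$ is again a generator of $\hat{H}^2(G_v,\mathbb{Z})\cong\mathbb{Z}/n_v$ (it is, since restriction here is reduction modulo $n_v$), and that the diagonal image of a Brauer class lands in the direct sum because a class in $\mathrm{Br}(K/F)$ is split at all but finitely many places; you flag both issues, and with those details filled in the argument is complete.
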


In practice, it is not necessary to check all primes of $F$, a finite set of primes is sufficient: as described in \cite{AK:2000} the only primes that need to be checked are a) the divisors of the conductor $\mathfrak{F}$ of $K/F$ and b) all finite primes dividing the ideal $e \z_F$. If $e$ is benign, we therefore can efficiently compute the prime factorization and hence can perform this sufficient test for solvability of the norm equation. Note that this test can only be used in this one-sided sense as there are examples of degree $2$ extensions $K/F$ known where for $e \in \z_F$ the equation $N_{K/F}(z) = e$ is solvable over $K$ but not over $\z_K$. In practice, the test is reasonably fast in order to eliminate some candidates $e$. We summarize the pseudo-code for this test PASS-HASSE-SOLVABILITY-TEST in Figure \ref{fig:proc-pass-solvability-test}.

\begin{figure}[hbt]
\rule[0.5ex]{1\columnwidth}{1pt}
\begin{centering}
\def\exS{{\mathcal S}}
\begin{algorithmic}[1]
\fxinput $K$, $F$, $\z_{F}$
\Statex[1] $\z_{F}$ is the ring of integers of the totally real subfield $F$ of a CM field $K$.
\inp $e$
\Statex[1] $e \in \z_{F}$ is an element for which we check if the norm equation $N_{K/F}(z)=e$ has a solution $z\in K$. This is a quick check to rule out existence of solutions for some $e$.
\Procedure{PASS-HASSE-SOLVABLILTY-TEST}{}
\offline
\State $\mathfrak{F} \leftarrow {\rm Conductor}(K/F)$
\State $\exS_0 \leftarrow {\rm Factorization}(\mathfrak{F})$
\State $\exS_\infty \leftarrow {\rm InfinitePlaces}(K)$
\online
\State $\exS_1 \leftarrow {\rm PrimeFactorization}(e\z_F)$
\State $\exS = \exS_0 \cup \exS_1 \cup \exS_\infty$
\State isSolvable $\leftarrow$ TRUE
\For{$\mathfrak{p} \in \exS$}
\If{\Not {\rm IsLocallySolvable}($K_{\mathfrak{P}}, F_{\mathfrak{p}},e)$} \Comment $\mathfrak{P}$ is a prime in $\z_{K}$ lying over $\mathfrak{p}$.
\State isSolvable $\leftarrow$ FALSE
\EndIf
\EndFor
\EndProcedure
\out isSolvable
\Statex Based on the Hasse local-global principle, solvability of the norm equation over $K$ is checked via local solvability at all finite places dividing the conductor, divisors of $e$ and all infinite places.
\end{algorithmic}
\par\end{centering}
\rule[0.5ex]{1\columnwidth}{1pt}
\protect\caption{\label{fig:proc-pass-solvability-test}Check if the right hand passes a solvability test whether a {\em rational} solution $z$ to the norm equation exists.}
\end{figure}

\clearpage


\section{End to end examples of using the framework}\label{sec:examples}

In this section we provide four examples of using our framework. In two examples we apply our framework to reproduce results on Clifford+$T$~\cite{S} and $V$-basis\cite{BGS} gate sets. The exact synthesis framework~\cite{KY1} is necessary for the end to end compilation. For this reason we discuss how to use this framework to reproduce previously known results on exact synthesis from~\cite{BGS},\cite{KMM1},\cite{GKMR} and also results on exact synthesis over Clifford+$R_z\at{\pi/16}$ \cite{SGKM}. Two other examples, corresponds to approximating using gate sets Clifford+$R_z\at{\pi/16}$ and Clifford+$T$+$V$. No number-theoretic style approximation algorithms for this gate sets were known before. A brief summary of the examples is given in Table~\ref{tab:summary}. Amongst other data, the table contains time needed for precomputation stage for all of our examples and the value of the additive constant appearing in Theorem~\ref{thm:main}. All the data about the algorithm performance is based on the implementation of exact synthesis\cite{KY1} and approximation frameworks using computer algebra system MAGMA\cite{MAGMA:1997}. The total number of lines of code needed for it implementation is about $2500$. More examples of running approximation stage of the algorithm are provided in Appendix A on page~\pageref{sec:appendix}. 


\begin{table}

\caption{ \label{tab:summary} Summary of the results of running the algorithm for different gate sets. The quaternion algebra corresponding to the gate set is $Q=\at{\frac{a,b}{F}}$ where $F = \q\at{\zeta_n} \cap \r$; $N_G$ is the number of generators to be used for exact synthesis; $\of{L:L\cap M}$ is the index of the intersection of $L$ (the generalized Lipschitz order ) and $\mo$ (the maximal order used to define the gate set) in $L$; $N_{cnj}$ is the number of conjugacy classes of maximal orders in quaternion algebra $Q$; $t_{exact}$ is the time in seconds spent on precomputation required for the exact synthesis part of our algorithm; $t_{approx}$ is the time in seconds spent on precomputation required for the approximation part; $C_{\min}$ is the additive constant appearing in Theorem~\ref{thm:main}; $C_{\min} / \log N\at{\p_1}$ is the ratio between $C_{\min}$ and the $\log$ of the norm of the ideal from $S$ with the smallest norm~(see~Definition~\ref{dfn:qag}) }

\begin{centering}
\includegraphics{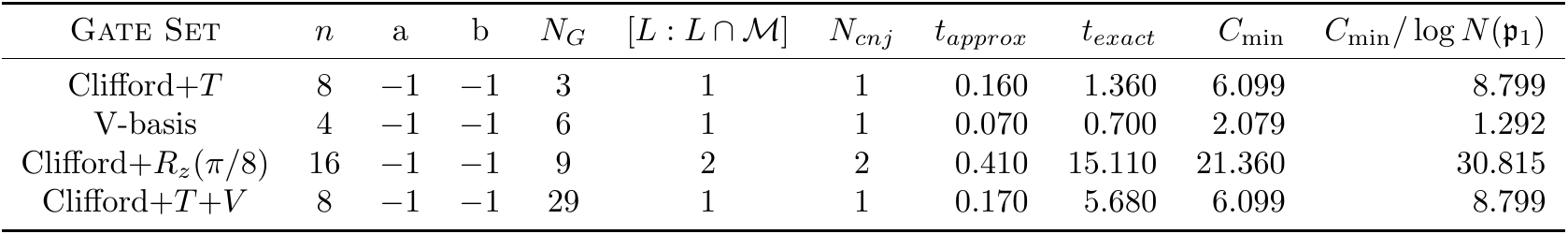}
\end{centering}

\end{table}

\subsection{Clifford+\texorpdfstring{$T$}{T}}
In this section we describe how to obtain results from \cite{S} within our framework. We also discuss the exact synthesis part using the framewok introduced in \cite{KY1}. We follow Section~\ref{sec:esmethods} and show that Clifford$+T$ can be described by totally definite quaternion algebra. We recall the following definition:

\qag*

For Clifford$+T$ gate set we can choose \cite{GKMR}:
\[
\CG=\set{R_\alpha\at{\pi/4},R_\alpha\at{\pi/2} : \alpha = x,y,z }
\]

We separately write $R_\alpha\at{\pi/2}$ because they generate Clifford group. Clifford gates are much cheaper in practice and typical cost function for Clifford+$T$ gate set used in practice is:
\[
cost\at{R_\alpha\at{\pi/4}} = 1, cost\at{R_\alpha\at{\pi/2}} = 0, \alpha = x,y,z.
\]

Let us provide quaternion gate set specification for Clifford$+T$
\begin{itemize}
\item $F=\q\at{\zeta_8+\zeta^{-1}_8}$ where $\zeta_8 = e^{2i\pi/8}$, let also $\theta$ be a primitive element of $F$ (in other words every element of $F$ can be represented as $a_0+a_1\theta$ where $a_0, a_1$ are rational numbers),
\item embedding $\sigma : F \rightarrow \r $ is defined as $\sigma\at{\theta}=\sqrt{2}$,
\item $a=-1$ and $b=-1$,
\item maximal order $\mo$ of quaternion algebra $Q=\at{\frac{-1,-1}{F}}$ is
\[
\z_F + \frac{\at{\i+1}\theta}{2}\z_F + \frac{\at{\j+1}\theta}{2}\z_F + \frac{1+\i+\j+\k}{2}\z_F,
\]
where $\z_F = \z\of{\sqrt{2}}$ is a ring of integers of $F$,
\item $S=\set{\p}$ where $\p = \at{2 - \theta}\z_F$~(note that $2 - \theta $ is totally positive element of $F$).
\end{itemize}
Note that the discriminant of $\mo$ is equal to $\z_F$, therefore $\p$ is coprime to it. This implies that the set $\mo_S$ is infinite.

Using notation $q_z = \i, q_y = \j, q_x = \k $ we obtain set $\CG_Q$ based on the following correspondence
\[
\begin{array}{ccccccc}
 q_{t,\alpha} & = & 1+\theta(1-q_\alpha)/2 & & U_q\at{q_{t,\alpha}} & = & R_\alpha\at{\pi/4} \\
 q_{c,\alpha} & = & \theta(1-q_\alpha)/2 & & U_q\at{q_{c,\alpha}} & = & R_\alpha\at{\pi/2}
\end{array}
\]
where $\alpha=x,y,z$.

The next step is to compute $\CG^{\star}_{\mo,S}$ using the algorithm from \cite{KY1}. We find that quaternion algebra $Q$ has trivial two sided ideal class group and that the number of conjugacy classes of maximal orders of $Q$ is one. In this case the situation is extremely simple. The set $\CG^{\star}_{\mo,S} $ is equal to $\mathrm{gen}_S\at{\mo} \cup \mathrm{gen}_u\at{\mo}$. The set $\mathrm{gen}_S\at{\mo}$ consists of $N\at{\p}+1 = 3$ elements with reduced norm $a_0+a_1\theta$. The set $\mathrm{gen}_u\at{\mo}$ consists of three generators of the finite group of units of maximal order $\mo$ modulo units of $\z_F$. The results of our computations are the following:
\begin{itemize}
\item $\mathrm{gen}_S\at{\mo}=\set{q_1,q_2,q_3}$
where
\[
q_1 = 1/2 + \i/2 + (\theta - 1)\j/2 + (\theta - 1)\k/2 , q_2 = (-\theta + 2)/2 - \theta\j/2,
q_3 = (-\theta + 2)\j/2 + \theta\k/2.
\]
\item $\mathrm{gen}_u\at{\mo}=\set{u_1,u_2,u_3}$ where
\[
u_1 =  1/2 - \i/2 - \j/2 - \k/2, u_2 = -\j, u_3 = \theta/2 - \theta\j/2.
\]
\end{itemize}
Algorithmically, we find that $u_1 = q_{c,z}q_{c,x}$, $u_2=q^2_{c,y}$ and $u_3=q_{c,y}$. We see that the unit group of $\mo$ modulo units of $\z_F$ corresponds to Clifford group. Next we find that $q_1=q_{t,x}q_{c,z}^2 q_{c,x} q_{c,z}$, $q_2=-q_{t,z}q_{c,z}^2 q_{c,x} q_{c,z}$ and $q_3 = q_{t,y}q_{c,y}^3$ up to a unit of $\z_F$. In general, the elements of set $\mathrm{gen}_S\at{\mo}$ are defined up to right-hand side multiplication by a unit of $\mo$. For this reason we can simply choose $\mathrm{gen}_S\at{\mo} = \set{q_{t,\alpha} : \alpha= x,y,z}$. The map Circuit becomes almost trivial in this case. In the next examples we will omit this detail and write  $\mathrm{gen}_S\at{\mo}$ using generators convenient for our application.

Note that $\nrd q_{t,\alpha} \z_F = \p $ therefore the cost vector corresponding to each $q_{t,\alpha}$ is $(1)$. For all elements of the unit group the cost vector is $(0)$. In this example original cost definition completely matches cost obtained based on cost vectors. Table~\ref{tab:clifford-t}, \ref{tab:clifford-pi16-2} shows the results of running our ciruit synthesis algorithm for Clifford+$T$ gate set.

\begin{table}

\caption{\label{tab:clifford-t} Results of running the algorithm for Clifford$+T$. Approximation of rotation $R_z\at{0.1}$ with precision $\ve$ and cost vector $\at{L_1}$. All columns except $N_{tr,\min}$ and $N_{tr,\max}$ are averages over 1000 runs of the algorithm; $L'_1$ is the averaged $T$-count of the found circuits; $\rho\at{U_q,R_z\at{\phi}}$ is the obtained quality of approximation; $N_{tr,\min}, N_{tr,\max}, N_{tr,avg}$ are minimum, maximum and average of the number of the main loop iterations in the procedure APPROXIMATE over all samples; $t_{exact}$ and $t_{approx}$ is time in seconds spent on approximation and exact synthesis stages of the algorithm. }

\begin{centering}
\includegraphics{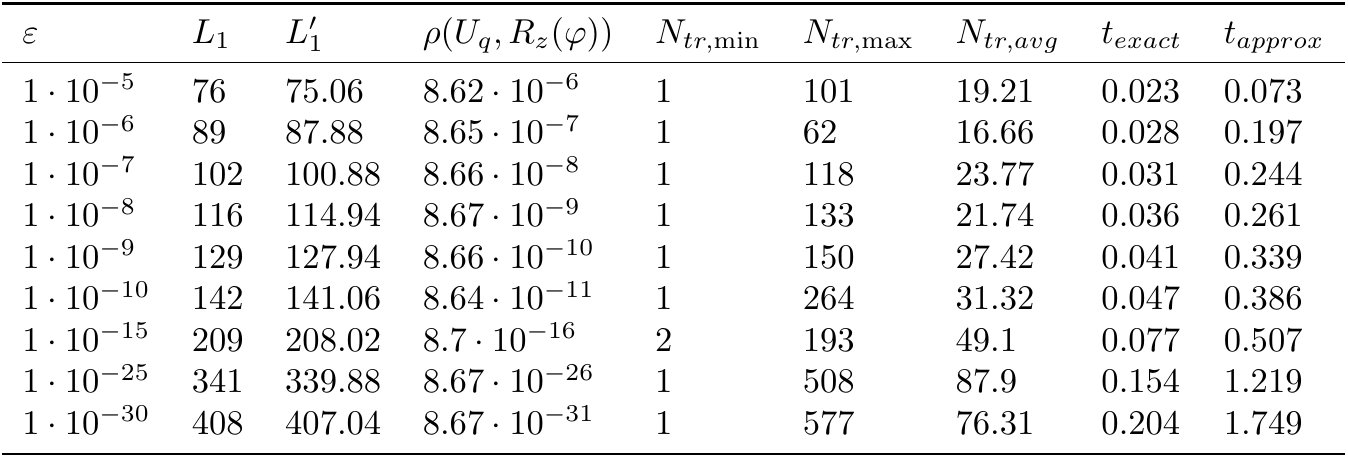}
\end{centering}

\end{table}



\subsection{\texorpdfstring{$V$}{V}-basis}
In this section we describe how to obtain results from \cite{BGS} within our framework.

$V$-basis is defined using the following set
\[
G=\set{\frac{1\pm2iP}{\sqrt{5}},iP : P \in \set{X,Y,Z} }
\]

A typical cost function is:
\[
cost\at{\frac{1\pm2iP}{\sqrt{5}}} = 1, cost\at{iP} = 0, \text{ for } P \in \set{X,Y,Z}
\]

The quaternion gate set specification is:
\begin{itemize}
\item $F=\q$,
\item embedding $\sigma : F \rightarrow \r $ is the only embedding of $\q$ into $\r$,
\item $a=-1$ and $b=-1$,
\item maximal order $\mo$ of quaternion algebra $Q=\at{\frac{-1,-1}{\q}}$ is
$
\z + \i \z + \j \z + \frac{1+\i+\j+\k}{2}\z,
$
\item $S=\set{\p}$ where $\p = 5\z$.
\end{itemize}
Note that the discriminant of $\mo$ is equal to $2\z$ and ideal $\p$ is co-prime to it. This implies that the set $\mo_S$ is infinite.

Using notation $q_Z = \i, q_Y = \j, q_X = \k $ we obtain set $\CG_Q$ based on the following correspondence:
\[
\begin{array}{ccccccc}
 q_{V,\pm P} & = & 1 \pm 2q_P & & U_q\at{q_{V,\pm P}} & = & \at{1\pm2iP} / \sqrt{5} \\
   &  &  & & U_q\at{q_P} & = & iP
\end{array}
\]
where $P \in \set{X,Y,Z}$.

Similarly to the previous section, we compute $\CG^{\star}_{\mo,S}$ using the algorithm from \cite{KY1}. We again find that quaternion algebra $Q$ has trivial two sided ideal class group and that the number of conjugacy classes of maximal orders of $Q$ is one. The set $\CG^{\star}_{\mo,S} $ is equal to $\mathrm{gen}_S\at{\mo} \cup \mathrm{gen}_u\at{\mo}$. The set $\mathrm{gen}_S\at{\mo}$ consists of $N\at{\p}+1 = 6$ elements with reduced norm $5$. The set $\mathrm{gen}_u\at{\mo}$ consists of two generators of the finite group of units of maximal order $\mo$ modulo units of $\z$. The results of our computations are the following:
\begin{itemize}
\item $\mathrm{gen}_S\at{\mo}=\set{q_{V,\pm P} : P \in \set{X,Y,Z} }$
\item $\mathrm{gen}_u\at{\mo}=\set{\i,\at{\i+\j+\k+1}/2}$
\end{itemize}

We got one of the generators of the unit group that cannot be expressed as a product of elements of $\CG_Q$. Indeed, all elements of $\CG_Q$ belong to the Lipschitz order
\[
 L = \z + \z \i + \z \j + \z \k
\]
and $\at{\i+\j+\k+1}/2$ does not. However, our approximation algorithm finds $q$ from $L$. It is possible to show that (in this particular example) the unit of $\mo$ obtained in the end of exact synthesis of $q$ must belong to $L$ and therefore belongs to the subgroup of the unit group of $\mo$ that contained in $L$. After a simple computation we find that this subgroup is generated by $\i,\j,\k$.

\begin{table}[t!]

\caption{\label{tab:v-basis}  Results of running the algorithm for V basis. Approximation of rotation $R_z\at{0.1}$ with precision $\ve$ and cost vector $\at{L_1}$. All columns except $N_{tr,\min}$ and $N_{tr,\max}$ are averages over 1000 runs of the algorithm; $L'_1$ is the average $V$-count of the found circuits; $\rho\at{U_q,R_z\at{\phi}}$ is the obtained quality of approximation; $N_{tr,\min}, N_{tr,\max}, N_{tr,avg}$ are minimum, maximum and average of the number of the main loop iterations in the procedure APPROXIMATE over all samples; $t_{exact}$ and $t_{approx}$ is time in seconds spent on approximation and exact synthesis stages of the algorithm. }

\begin{centering}
\includegraphics{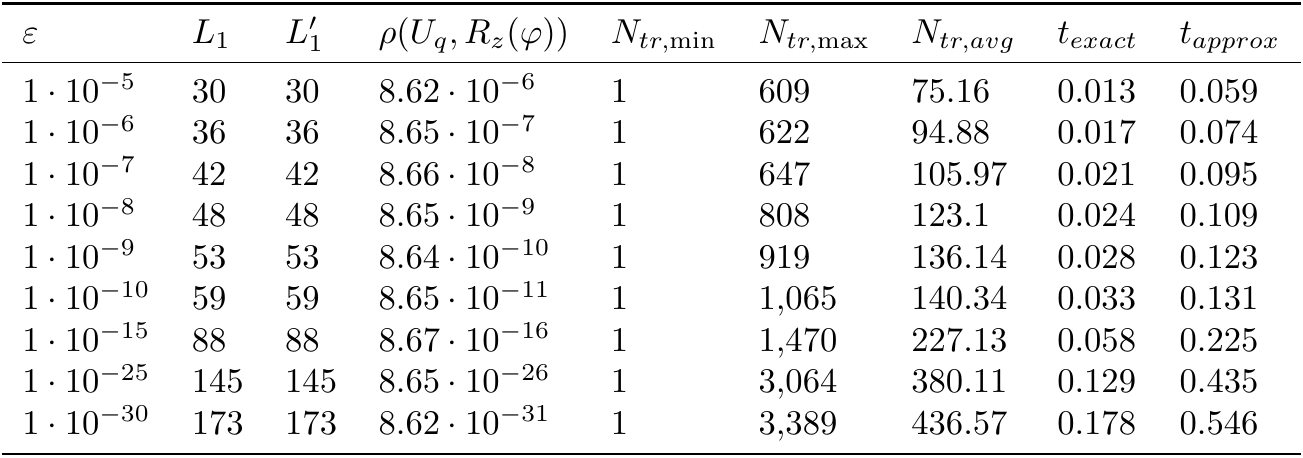}
\end{centering}

\end{table}

Note that $\nrd q_{V,\pm P} \z = \p $ and  the cost vector corresponding to each $ q_{V,\pm P}$ is $(1)$. For all elements of the unit group the cost vector is $(0)$. Similarly to Clifford+$T$ case original cost definition completely matches cost obtained based on cost vectors. Table~\ref{tab:v-basis}, \ref{tab:clifford-pi16-2} show the results of running our ciruit synthesis algorithm for $V$-basis.

\subsection{Clifford+\texorpdfstring{$R_z\at{\frac{\pi}{8}}$}{Rz(pi/8)} } The approximation part of the result for this gate set is new. An exact synthesis algorithm for this gate set was first described in~\cite{SGKM} in the language of $SO(3)$ representation of unitary matrices over the ring $\z\of{\zeta_{16},1/2}$. It can be shown that the output of the approximation stage of our algorithm can be converted to a unitary matrix over $\z\of{\zeta_{16},1/2}$. Therefore the algorithm developed in~\cite{SGKM} can be applied instead of the exact synthesis algorithm for quaternions we use here.

For Clifford+$R_z\at{\frac{\pi}{8}}$ gate set we can choose \cite{SGKM}:
\[
\CG=\set{R_\alpha\at{\pm\pi/8},R_\alpha\at{\pm 3\pi/8}, R_\alpha\at{\pi/4}, R_\alpha\at{\pi/2} : \alpha = x,y,z }
\]

For this example, the quaternion gate set specification is:
\begin{itemize}
\item $F=\q\at{\zeta_{16}+\zeta^{-1}_{16}}$ where $\zeta_{16} = e^{2i\pi/16}$, let also $\theta$ be a primitive element of $F$ (in other words every element of $F$ can be represented as $a_0+\ldots+a_3\theta$ where $a_0,\ldots,a_3$ are rational numbers),
\item embedding $\sigma : F \rightarrow \r $ is defined as $\sigma\at{\theta}=2\cos\at{2\pi/16}$,
\item $a=-1$ and $b=-1$,
\item maximal order $\mo$ of quaternion algebra $Q=\at{\frac{-1,-1}{F}}$ is
\[
\z_F + \frac{\at{\i+1}\xi}{2}\z_F + \frac{\at{\j+1}\xi}{2}\z_F + \frac{1+\i+\j+\k}{2}\z_F,
\]
where $\xi=\theta^2 - 2$, $\s\at{\xi}=\sqrt{2}$ and $\z_F = \z\of{2\cos\at{2\pi/16}}$ is a ring of integers of $F$,
\item $S=\set{\p}$ where $\p = \theta\z_F = \at{-\theta^3+4\theta^2+\theta-2}\z_F$~(note that $-\theta^3+4\theta^2+\theta-2$ is totally positive element of $F$). The 
    The definition of maximal order $\mo$ has essentially the same shape as the definition
    used for Clifford+$T$ case. The only difference is that it defined using different ring of integers.
\end{itemize}

The discriminant of $\mo$ is equal to $\z_F$, therefore $\p$ is coprime to it. This implies that the set $\mo_S$ is infinite.

Using notation $q_z = \i$, $q_y = \j$, $q_x =\k $, $\eta=\theta^3 - 3\theta$~(note $\s\at{\eta}=2\sin\at{\pi/8}$)  we obtain set $\CG_Q$ based on the following correspondence:
\[
\begin{array}{lccclcc}
 q_{t,\alpha} & = & 1+\xi(1-q_\alpha)/2 & & U_q\at{q_{t,\alpha}} & = & R_\alpha\at{\pi/4} \\
 q_{c,\alpha} & = & (1-q_\alpha)/2 & & U_q\at{q_{c,\alpha}} & = & R_\alpha\at{\pi/2} \\
 q_{1/8,\alpha} & = &\theta\at{1+(\theta-\eta q_\alpha)/2} & & U_q\at{q_{1/8,\alpha}} & = & R_\alpha\at{\pi/8} \\
 q_{3/8,\alpha} & = &\theta\at{1+(\eta-\theta q_\alpha)/2} & & U_q\at{q_{3/8,\alpha}} & = & R_\alpha\at{3\pi/8} \\
\end{array}
\]
where $\alpha\in \set{x,y,z}$.



\begin{figure}[!ht]

\caption{\label{fig:tree-pi16} Ideal principality graph \cite{KY1} used to find generators for exact synthesis algorithm. Tree corresponds to quaternion gate set specification for Clifford+$R_z(\pi/8)$ gate set.}

\rule[0.5ex]{1\columnwidth}{1pt}

\begin{centering}
\includegraphics[scale=0.5]{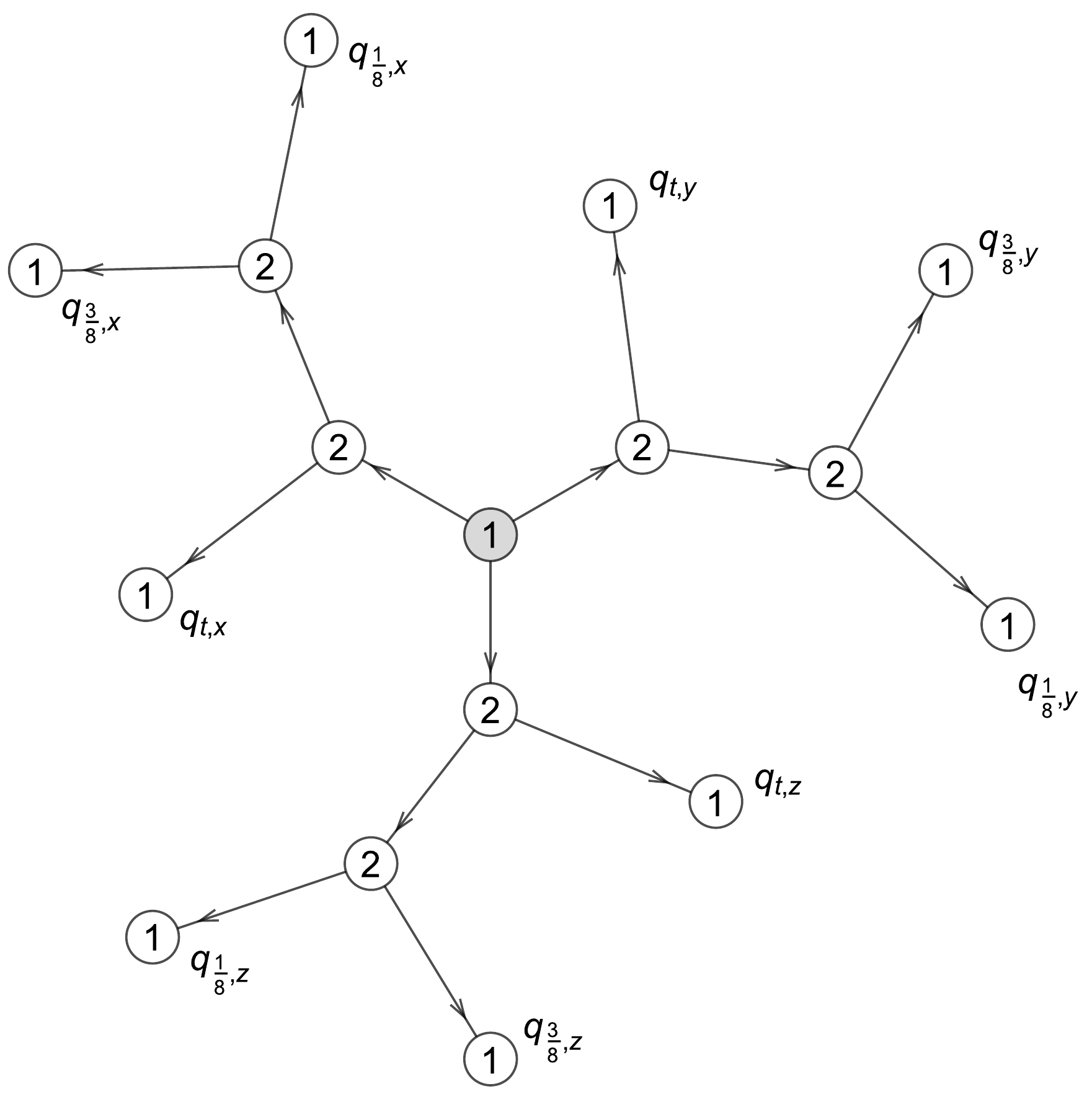}
\par\end{centering}

\rule[0.5ex]{1\columnwidth}{1pt}

\end{figure}

The next step is to compute $\CG^{\star}_{\mo,S}$ using the algorithm from \cite{KY1}. We find that quaternion algebra $Q$ has trivial two sided ideal class group and two different conjugacy classes of maximal orders of $Q$. The set $\CG^{\star}_{\mo,S} $ is equal to $\mathrm{gen}_S\at{\mo} \cup \mathrm{gen}_u\at{\mo}$. The set $\mathrm{gen}_u\at{\mo}$ consists of three generators of the finite group of units of maximal order $\mo$ modulo units of $\z$. As we have two conjugacy classes of maximal orders, we need to build an ideal principality graph (which is a tree in this case, see Fig.~\ref{fig:tree-pi16}) to find the set $\mathrm{gen}_S\at{\mo}$. The result of our computation is the following:
\begin{itemize}
\item $\mathrm{gen}_S\at{\mo}=\set{q_{t,\alpha},q_{1/8,\alpha},q_{3/8,\alpha} : \alpha \in \set{x,y,z}}$
\item $\mathrm{gen}_u\at{\mo}=\set{q_{c,\alpha} :  \alpha \in \set{x,y,z} }$
\end{itemize}

Our computation reproduces the result from \cite{SGKM} showing that all matrices over the ring $\z\of{\zeta_{16},1/2}$ can be exactly represented using gate set $\CG$. Because we have two conjugacy classes of maximal orders, the situation with the cost of generators becomes more interesting. For quaternions $q_{t,\alpha}$ we have $\nrd\at{q_{t,\alpha}} = \p^2$ and their cost vector is $(2)$. For other elements of $\mathrm{gen}_S\at{\mo}$ we have $\nrd\at{q_{1/8,\alpha}} = \p^3$ and $\nrd\at{q_{3/8,\alpha}} = \p^3$ and their cost vector is $(3)$. In the case when $S$ contains only one prime ideal, the cost of each generator from $\mathrm{gen}_u\at{\mo}$ is precisely equal to the distance from the root to corresponding node. Above cost values also reproduce results in \cite{SGKM}. Note that while approximating we only have control over the overall value of the cost vector. If we requested cost $L$ then the result can have any number $L_t$ of the $T$ gates and any number $L_{1/8,3/8}$ of $R_z(\pi/8)$ and $R_z(3\pi/8)$ rotations as soon as $L = 2 L_t + 3 L_{1/8,3/8}$. As usual, the cost of Clifford gates is assumed to be zero.

Another interesting aspect of this example is that generalized Lipschitz order is not contained in maximal order $M$ above. Orders in a totally definite quaternion algebra can be given a structure of the lattice using bilinear form $\tr_{F/\q}\at{q_1 q^{\ast}_2 }$. We find that the index of sub-lattice $L \cap M$ in $L$ is two. This means that half of the points from $L$ belongs $L \cap M$. In our approximation algorithm we test if the result is in $M$ in the end. If this is not the case, we try again. Our experiments show that we get resulting quaternion in $L \cap M$ in half of the experiments. Tables~\ref{tab:clifford-pi16-1}, \ref{tab:clifford-pi16-2} show the results of running our circuit synthesis algorithm for Clifford+$R_z\at{\pi/8}$ gate set.

\begin{table}

\caption{\label{tab:clifford-pi16-1}  Results of running the algorithm for Clifford+$R_z\at{\pi/8}$. Part 1. Approximation of rotation $R_z\at{0.1}$ with precision $\ve$ and cost vector $\at{L_1}$. All columns except $N_{tr,\min}$ and $N_{tr,\max}$ are averages over 1000 runs of the algorithm; $\at{L'_1}$ is a cost vector of a found circuit; $\rho\at{U_q,R_z\at{\phi}}$ is the obtained quality of approximation; $N_{tr,\min}, N_{tr,\max}, N_{tr,avg}$ are minimum, maximum and average of the number of the main loop iterations in the procedure APPROXIMATE over all samples; $t_{exact}$ and $t_{approx}$ is time in seconds spent on approximation and exact synthesis stages of the algorithm. }

\begin{centering}
\includegraphics{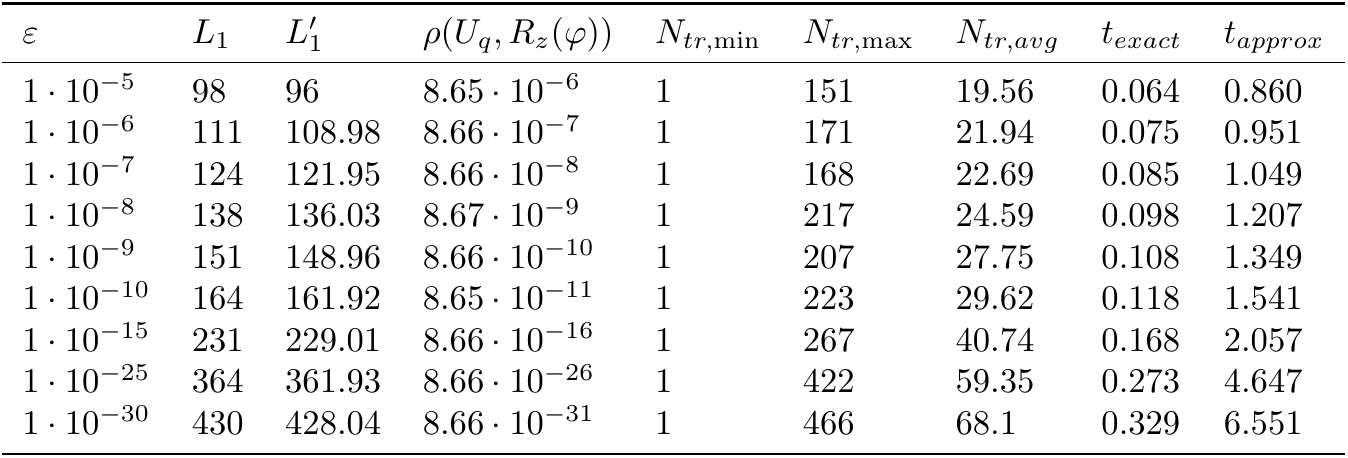}
\end{centering}

\end{table}

\begin{table}

\caption{ \label{tab:clifford-pi16-2}  Results of running the algorithm for Clifford+$R_z\at{\pi/8}$. Part 2. Approximation of rotation $R_z\at{0.1}$ with precision $\ve$ and cost vector $\at{L_1}$.  All columns except $N_{L \cap M}$, $L'_t$, $L'_{1/8,3/8}$  are averages over 1000 runs of the algorithm; $N_{L \cap M}$ is the number of outputs of the procedure APPROXIMATE that are in the maximal order $\mo$; $\at{L'_1}$ is a cost vector of a found circuit; $\rho\at{U_q,R_z\at{\phi}}$ is the obtained quality of approximation; $L'_t$ is the number of $T$ gates in the resulting circuit (averaged over outputs of the procedure APPROXIMATE that are in the maximal order $\mo$ ); $L'_{1/8,3/8}$ is the number of $R_z\at{\pi/8}$ and $R_z\at{3\pi/8}$ gates in the resulting circuit~(averaged in the same way~as~$L'_t$).  }
\begin{centering}
\includegraphics{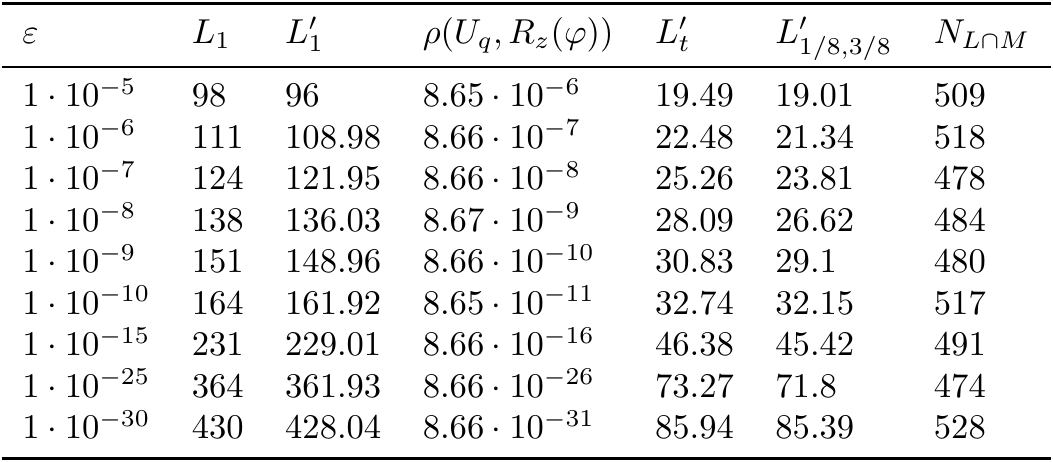}
\end{centering}

\end{table}

\subsection{Clifford+\texorpdfstring{$T$+$V$}{T+V}}
For Clifford$+T+V$ gate set we can choose set $G$ to be:
\[
\CG=\set{R_\alpha\at{\pi/4},R_\alpha\at{\pm2\atan\at{2}},R_\alpha\at{\pi/2} : \alpha = x,y,z }
\]
It is not difficult to check that $R_\alpha\at{\pm2\atan\at{2}}$ correspond to $6$ $V$ gates.

 The gate set specification is similar to Clifford+$T$ case except for the set $S$.
\begin{itemize}
\item $F=\q\at{\zeta_8+\zeta^{-1}_8}$ where $\zeta_8 = e^{2i\pi/8}$, let also $\theta$ be a primitive element of $F$ (in other words every element of $F$ can be represented as $a_0+a_1\theta$ where $a_0, a_1$ are rational numbers),
\item embedding $\sigma : F \rightarrow \r $ is defined as $\sigma\at{\theta}=\sqrt{2}$,
\item $a=-1$ and $b=-1$,
\item maximal order $\mo$ of quaternion algebra $Q=\at{\frac{-1,-1}{F}}$ is
\[
\z_F + \frac{\at{\i+1}\theta}{2}\z_F + \frac{\at{\j+1}\theta}{2}\z_F + \frac{1+\i+\j+\k}{2}\z_F,
\]
where $\z_F = \z\of{\sqrt{2}}$ is a ring of integers of $F$,
\item $S=\set{\p_1,\p_2}$ where $\p_1 = \at{2 - \theta}\z_F$ and $\p_2 = 5\z_F$.
\end{itemize}

Using notation $q_z = \i, q_y = \j, q_x = \k $ we obtain set $\CG_Q$ based on the following correspondence
\[
\begin{array}{lccclcc}
 q_{t,\alpha} & = & 1+\theta(1-q_\alpha)/2 & & U_q\at{q_{t,\alpha}} & = & R_\alpha\at{\pi/4} \\
 q_{v,\pm \alpha} & = & 1\mp 2 q_\alpha & & U_q\at{q_{v,\pm\alpha}} & = & R_\alpha\at{\pm 2\atan\at{2}} \\
 q_{c,\alpha} & = & \theta(1-q_\alpha)/2 & & U_q\at{q_{c,\alpha}} & = & R_\alpha\at{\pi/2}
\end{array}
\]
where $\alpha=x,y,z$.

The next step is to compute $\CG^{\star}_{\mo,S}$ using the algorithm from \cite{KY1}. We find that quaternion algebra $Q$ has trivial two sided ideal class group and that the number of conjugacy classes of maximal orders of $Q$ is one. The set $\CG^{\star}_{\mo,S} $ is equal to $\mathrm{gen}_S\at{\mo} \cup \mathrm{gen}_u\at{\mo}$. The set $\mathrm{gen}_S\at{\mo}$ consists of $N\at{\p_1}+1 = 3$ elements with reduced norm $2-\theta$ and $N\at{\p_2}+1 = 26$ elements with reduced norm $5$. The set $\mathrm{gen}_u\at{\mo}$ consists of three generators of the finite group of units of maximal order $\mo$ modulo units of $\z_F$ and is the same as in Clifford+$T$ case because maximal order $\mo$ is the same.

Let us now discuss the set $\mathrm{gen}_S\at{\mo}$ in more details. We first find that it contains quaternions corresponding to $R_\alpha\at{\pi/4}$ gates and all 6 $V$ gates. We are left with 20 quaternions with reduced norm 5 that we didn't have in the set $\CG_Q$. We express them in terms of elements of $\CG_Q$. Let us introduce the following equivalence relation on quaternions:
\[
 q_1 \sim q_2  \text{ if and only if } q_1 = u_1 q_2 u_2  \text{ for } u_1,u_2 \text{ -- units of M}
\]
In our case it means that corresponding unitaries are equivalent up to a Clifford and therefore will have the same cost of implementation. There are four equivalence classes in $\mathrm{gen}_S\at{\mo}$ corresponding to the relation $\sim$. Two of them are $\set{ q_{t,\alpha} : \alpha \in \set{x,y,z}}$ and $\set{ q_{v,\pm\alpha} : \alpha \in \set{x,y,z}}$. Remaining twenty quaternions with reduced norm 5 split into two classes $c_1$ and $c_2$ of size $8$ and $12$. Next we find that all quaternions from $c_2$ are equal to
\[
	u_1 q_{t,\alpha(1)} q_{v,\pm\alpha(2)} u_2 q^{-1}_{t,\alpha(3)} \text{ where } u_1,u_2 \text{ are units of }\mo, \alpha(k) \in \set{x,y,z}, k=1,2,3.
\]
The quaternions from the set $c_1$ can be expressed as
\[
	u_1 q_{t,\alpha(1)}q_{t,\alpha(2)} q_{v,\pm\alpha(3)} u_2 q^{-1}_{t,\alpha(4)}q^{-1}_{t,\alpha(5)} \text{ where } u_1,u_2 \text{ are units of }\mo, \alpha(k) \in \set{x,y,z} k=\rg{1}{5}.
\]
In practice it can be more beneficial to design circuits for all $26$ gates corresponding to quaternions with norm $5$ directly, because $T$ gates are usually expensive to implement. Table~\ref{tab:clifford-tv} shows the results of running our circuit synthesis algorithm for Clifford+$T$+$V$ gate set.

\begin{table}[t!]
{
\caption{\label{tab:clifford-tv} Results of running the algorithm for Clifford+$T$+$V$. Approximation of rotation $R_z\at{0.1}$ with precision $\ve$ and cost vector $\at{L_1,L_2}$. All columns except $N_{tr,\min}$ and $N_{tr,\max}$ are averages over 1000 runs of the algorithm; $\at{L'_1,L'_2}$ is a cost vector of a found circuit; $\rho\at{U_q,R_z\at{\phi}}$ is the obtained quality of approximation; $N_{tr,\min}, N_{tr,\max}, N_{tr,avg}$ are minimum, maximum and average of the number of the main loop iterations in the procedure APPROXIMATE over all samples; $t_{exact}$ and $t_{approx}$ is time in seconds spent on approximation and exact synthesis stages of the algorithm. }
\begin{centering}
\includegraphics{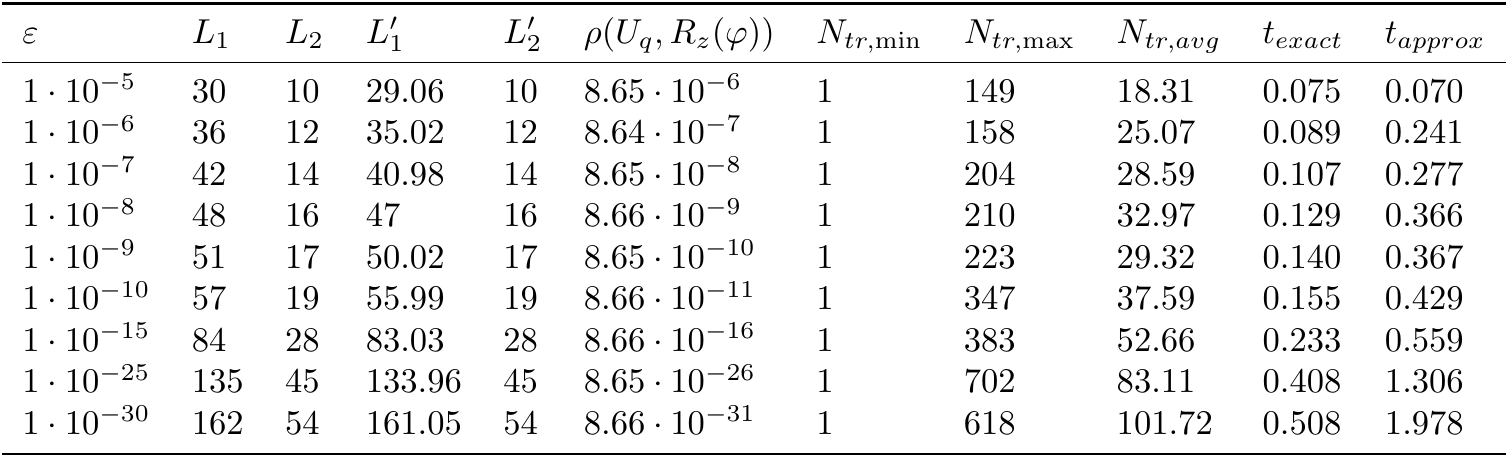}
\end{centering}
}
\end{table}



\section{Conclusions}
We introduced a framework for approximate synthesis of single qubit unitary transformations over a universal gate set. Our framework is applicable whenever the gate set is related to totally definite quaternion algebras. Our algorithm runs in time that is polynomial in $\log(1/\varepsilon)$, where $\varepsilon$ is the approximation parameter and the output factorizations produced have length $O(\log(1/\varepsilon))$. We implemented the algorithm in the computer algebra system Magma and demonstrate it by applying it to a wide range of gate sets.

The proof that our algorithm terminates and runs on average in polynomial time relates on the number theoretic conjecture, similarly to~\cite{S,BGS,BRS:2015b}. The question related to this conjectures were recently studied in~\cite{Sarnak:2015} for Clifford+T, V-basis and some other gate sets. Results of our numerical experiments provide indirect evidence that some of these results can be true for a wider range of ``Golden gates'' in the sense of Sarnak \cite{Sarnak:2015}. 

\section*{Acknowledgment} The authors would like to thank the QuArC team for many discussions on previous versions of this work. MR would like to thank Sean Hallgren for discussions on computational number theory during the Dagstuhl seminar on {\em Quantum Cryptanalysis} in September 2015, in particular for pointing out reference \cite{GGH:2013}. 


\begin{thebibliography}{10}

\bibitem{AK:2000}
Vincenzo Acciaro and J\"{u}rgen Kl\"{u}ners.
\newblock {Computing local Artin maps, and solvability of norm equations}.
\newblock {\em J. Symb. Comput.}, 30(3):239--252, 2000.

\bibitem{AKS:2004}
Manindra Agrawal, Neeraj Kayal, and Nitin Saxena.
\newblock {PRIMES is in P}.
\newblock {\em Annals of Mathematics}, 160(2):781--793, 2004.

\bibitem{BABAI:1986}
L\'{a}szl\'{o} Babai.
\newblock {On Lov\'{a}sz lattice reduction and the nearest lattice point
  problem}.
\newblock {\em Combinatiorica}, 6(1):1--13, 1986.

\bibitem{Belabas:2004}
Karim Belabas.
\newblock Topics in computational algebraic number theory.
\newblock {\em Journal de Th\'{e}orie des Nombres}, 16:19--63, 2004.

\bibitem{BF:2012}
Jean-Francois Biasse and Claus Fieker.
\newblock {A polynomial time algorithm for computing the {HNF} of a module over
  the integers of a number field}.
\newblock In {\em International Symposium on Symbolic and Algebraic Computation
  (ISSAC'12)}, pages 75--82, 2012.

\bibitem{BBG:2014}
Andreas Blass, Alex Bocharov, and Yuri Gurevich.
\newblock {Optimal ancilla-free Pauli$+V$ circuits for axial rotations}.
\newblock ArXiv.org preprint arXiv:1412.1033, December 2014.

\bibitem{CBK+:2015}
Alex Bocharov, Xingshan Cui, Vadym Kliuchnikov, and Zhenghan Wang.
\newblock Efficient topological compilation for weakly-integral anyon model.
\newblock ArXiv.org preprint arXiv:1504.03383, April 2015.

\bibitem{BGS}
Alex Bocharov, Yuri Gurevich, and Krysta~M. Svore.
\newblock {Efficient decomposition of single-qubit gates into V basis
  circuits}.
\newblock {\em Physical Review A}, 88(1):1--13, July 2013.

\bibitem{BRS:2015b}
Alex Bocharov, Martin Roetteler, and Krysta~M. Svore.
\newblock {Efficient synthesis of probabilistic quantum circuits with
  fallback}.
\newblock {\em Physical Review A}, 91:052317, 2015.
\newblock See also arXiv preprint arXiv:1409.3552.

\bibitem{BRS:2015}
Alex Bocharov, Martin Roetteler, and Krysta~M. Svore.
\newblock {Efficient synthesis of universal Repeat-Until-Success circuits}.
\newblock {\em Physical Review Letters}, 114:080502, 2015.
\newblock See also arXiv preprint arXiv:1404.5320.

\bibitem{BS:67}
Zenon~I. Borevich and Igor~R. Shafarevich.
\newblock {\em Number theory}, volume~20 of {\em Pure and Applied Mathematics}.
\newblock Academic Press, 1967.

\bibitem{MAGMA:1997}
Wieb Bosma, John Cannon, and Catherine Playoust.
\newblock The {M}agma algebra system. {I}. {T}he user language.
\newblock {\em J. Symbolic Comput.}, 24(3-4):235--265, 1997.
\newblock Computational algebra and number theory (London, 1993).

\bibitem{B1}
Jean Bourgain and Alexander Gamburd.
\newblock {On the spectral gap for finitely-generated subgroups of SU(2)}.
\newblock {\em Inventiones mathematicae}, 171(1):83--121, 2008.

\bibitem{B2}
Jean Bourgain and Alexander Gamburd.
\newblock {Spectral gaps in SU(d)}.
\newblock {\em Comptes Rendus Mathematique}, 348(11):609--611, 2010.

\bibitem{coh3}
Henri Cohen.
\newblock {\em {A Course in Computational Algebraic Number Theory}}.
\newblock Graduate Texts in Mathematics. Springer, 1993.

\bibitem{Coh2}
Henri Cohen.
\newblock {\em {Advanced Topics in Computational Number Theory}}.
\newblock Graduate Texts in Mathematics. Springer New York, 2000.

\bibitem{CW:2014}
Xingshan Cui and Zhenghan Wang.
\newblock Universal quantum computation with metaplectic anyons.
\newblock ArXiv.org preprint arXiv:1405.7778, May 2014.

\bibitem{DN:2005}
Christopher~M. Dawson and Michael~A. Nielsen.
\newblock {The Solovay-Kitaev algorithm}.
\newblock {\em Quantum Information and Computation}, 6(1):81--95, 2005.

\bibitem{DP:2015}
Guillaume Duclos-Cianci and David Poulin.
\newblock Reducing the quantum computing overhead with complex gate
  distillation.
\newblock {\em Phys. Rev. A}, 91:042315, 2015.

\bibitem{Eisentrager:2014}
Kirsten Eisentr\"{a}ger, Sean Hallgren, Alexei Kitaev, and Fang Song.
\newblock A quantum algorithm for computing the unit group of an arbitrary
  degree number field.
\newblock In {\em Proceedings of the 46th Annual ACM Symposium on Theory of
  Computing}, STOC '14, pages 293--302, New York, NY, USA, 2014. ACM.

\bibitem{FJP:97}
Claus Fieker, Andreas Jurk, and Michael~E. Pohst.
\newblock {On solving relative norm equations in algebraic number fields}.
\newblock {\em Math. Comput.}, 66(217):399--410, 1997.

\bibitem{DBLP:conf/ants/FiekerS10}
Claus Fieker and Damien Stehl\'{e}.
\newblock {Short bases of lattices over number fields}.
\newblock In {\em Algorithmic Number Theory, 9th International Symposium
  (ANTS-IX)}, pages 157--173, 2010.

\bibitem{SGKM}
Simon Forest, David Gosset, Vadym Kliuchnikov, and David McKinnon.
\newblock Exact synthesis of single-qubit unitaries over clifford-cyclotomic
  gate sets.
\newblock {\em Journal of Mathematical Physics}, 56(8):082201, 2015.

\bibitem{Fowler:2011}
Austin~G. Fowler.
\newblock {Constructing arbitrary Steane code single logical qubit
  fault-tolerant gates}.
\newblock {\em Quantum Information and Computation}, 11:867--873, 2011.

\bibitem{Fowl12f}
Austin~G. Fowler, Matteo Mariantoni, John~M. Martinis, and Andrew~N. Cleland.
\newblock Surface codes: Towards practical large-scale quantum computation.
\newblock {\em Phys. Rev. A}, 86:032324, 2012.
\newblock arXiv:1208.0928.

\bibitem{FLW:2002}
Michael Freedman, Michael Larsen, and Zhenghan Wang.
\newblock A modular functor which is universal for quantum computation.
\newblock {\em Comm. Math. Phys.}, 227(3):605--622, 2002.

\bibitem{Garbanati:80}
Dennis~A. Garbanati.
\newblock {An algorithm for finding an algebraic number whose norm is a given
  rational number}.
\newblock {\em J. Reine Angew. Math.}, 316:1--13, 1980.

\bibitem{GGH:2013}
Sanjam Garg, Craig Gentry, and Shai Halevi.
\newblock Candidate multilinear maps from ideal lattices.
\newblock In {\em International Conference on the Theory and Applications of
  Cryptographic Techniques (EUROCRYPT'13)}, pages 1--17, 2013.

\bibitem{GS:2002}
Craig Gentry and Michael Szydlo.
\newblock {Cryptanalysis of the Revised {NTRU} Signature Scheme}.
\newblock In {\em International Conference on the Theory and Applications of
  Cryptographic Techniques (EUROCRYPT'02)}, pages 299--320, 2002.

\bibitem{Gille:2006}
P.~Gille and T.~Szamuely.
\newblock {\em Central Simple Algebras and Galois Cohomology}.
\newblock Cambridge Studies in Advanced Mathematics. Cambridge University
  Press, 2006.

\bibitem{GKMR}
David Gosset, Vadym Kliuchnikov, Michele Mosca, and Vincent Russo.
\newblock {An algorithm for the T-count}.
\newblock {\em {Quantum Information \& Computation}}, 14(15,16):1261--1276,
  2014.

\bibitem{Hallgren:2005}
Sean Hallgren.
\newblock Fast quantum algorithms for computing the unit group and class group
  of a number field.
\newblock In {\em Proceedings of the Thirty-seventh Annual ACM Symposium on
  Theory of Computing}, STOC '05, pages 468--474, New York, NY, USA, 2005. ACM.

\bibitem{Hanrot:2007}
Guillaume Hanrot and Damien Stehlé.
\newblock Improved analysis of kannan’s shortest lattice vector algorithm.
\newblock In Alfred Menezes, editor, {\em Advances in Cryptology - CRYPTO
  2007}, volume 4622 of {\em Lecture Notes in Computer Science}, pages
  170--186. Springer Berlin Heidelberg, 2007.

\bibitem{H1}
Aram~W. Harrow, Benjamin Recht, and Isaac~L. Chuang.
\newblock {Efficient discrete approximations of quantum gates}.
\newblock {\em Journal of Mathematical Physics}, 43(9):4445, November 2002.

\bibitem{HWB+:2015}
Matthew~B. Hastings, Dave Wecker, Bela Bauer, and Matthias Troyer.
\newblock Improving quantum algorithms for quantum chemistry.
\newblock {\em Quantum Information and Computation}, (1\&2):1--21, 2015.

\bibitem{Kannan:1983}
Ravi Kannan.
\newblock Improved algorithms for integer programming and related lattice
  problems.
\newblock In {\em Proceedings of the Fifteenth Annual ACM Symposium on Theory
  of Computing}, STOC '83, pages 193--206, New York, NY, USA, 1983. ACM.

\bibitem{KV}
Markus Kirschmer and John Voight.
\newblock {Algorithmic enumeration of ideal classes for quaternion orders}.
\newblock {\em SIAM Journal on Computing}, 39(5):37, 2008.

\bibitem{KSV:2002}
Alexei Kitaev, Alexander Shen, and Mikhail Vyalyi.
\newblock {\em Classical and quantum computation}, volume~47 of {\em Graduate
  studies in mathematics}.
\newblock American Mathematical Society, 2002.

\bibitem{Kliuchnikov:2014}
Vadym Kliuchnikov.
\newblock {\em New methods for quantum compiling}.
\newblock PhD thesis, University of Waterloo, 2014.

\bibitem{KBS}
Vadym Kliuchnikov, Alex Bocharov, and Krysta~M. Svore.
\newblock {Asymptotically optimal topological quantum compiling}.
\newblock {\em Physical Review Letters}, 112(14), April 2014.

\bibitem{KMM2}
Vadym Kliuchnikov, Dmitri Maslov, and Michele Mosca.
\newblock {Asymptotically optimal approximation of single qubit unitaries by
  Clifford and T circuits using a constant number of ancillary qubits}.
\newblock {\em Physical Review Letters}, 110(19):1--5, December 2012.

\bibitem{KMM3}
Vadym Kliuchnikov, Dmitri Maslov, and Michele Mosca.
\newblock {Practical approximation of single-qubit unitaries by single-qubit
  quantum Clifford and T circuits}.
\newblock December 2012.

\bibitem{KMM1}
Vadym Kliuchnikov, Dmitri Maslov, and Michele Mosca.
\newblock {Fast and efficient exact synthesis of single qubit unitaries
  generated by Clifford and T gates}.
\newblock {\em {Quantum Information \& Computation}}, 13(7-8):0607--0630, June
  2013.

\bibitem{KY1}
Vadym Kliuchnikov and Jon Yard.
\newblock {A framework for exact synthesis}.
\newblock 2015.

\bibitem{Lam:2005}
T.Y. Lam.
\newblock {\em Introduction to Quadratic Forms over Fields}, volume~67 of {\em
  Graduate Studies in Mathematics}.
\newblock American Mathematical Soc., 2005.

\bibitem{LLL:1982}
Arjen~K. Lenstra, Hendrik~W. Lenstra, and L\'{a}szl\'{o} Lovasz.
\newblock {Factoring polynomials with rational coefficients}.
\newblock {\em Math. Ann.}, 261:515--534, 1982.

\bibitem{S2}
Alexander Lubotzky, Ralph Phillips, and Peter Sarnak.
\newblock {Hecke operators and distributing points on the sphere I}.
\newblock {\em Communications on Pure and Applied Mathematics},
  39(S1):S149--S186, 1986.

\bibitem{S1}
Alexander Lubotzky, Ralph Phillips, and Peter Sarnak.
\newblock {Hecke operators and distributing points on S2. II}.
\newblock {\em Communications on Pure and Applied Mathematics}, 40(4):401--420,
  July 1987.

\bibitem{Marcus:77}
Daniel~A. Marcus.
\newblock {\em {Number Theory}}.
\newblock Springer, 1977.

\bibitem{Micciancio:2012}
D.~Micciancio and S.~Goldwasser.
\newblock {\em Complexity of Lattice Problems: A Cryptographic Perspective}.
\newblock The Springer International Series in Engineering and Computer
  Science. Springer US, 2012.

\bibitem{MG:2002}
Daniele Micciancio and Shafi Goldwasser.
\newblock {\em Complexity of lattice problems: a cryptographic perspective},
  volume 671 of {\em Springer International Series in Engineering and Computer
  Science}.
\newblock Springer, 2002.

\bibitem{NSF+:2008}
Chetan Nayak, Ady Stern, Michael~H. Freedman, and Sankar {Das Sarma}.
\newblock Non-abelian anyons and topological quantum computation.
\newblock {\em Reviews of Modern Physics}, 80(3):1083--1159, 2008.

\bibitem{Neukirch:99}
J\"{u}rgen Neukirch.
\newblock {\em {Algebraic Number Theory}}.
\newblock Springer, 1999.

\bibitem{NS:2005}
Phong~Q. Nguyen and Damien Stehl\'{e}.
\newblock {Floating-point {LLL} revisited}.
\newblock In {\em 24th Annual International Conference on the Theory and
  Applications of Cryptographic Techniques (EUROCRYPT'05)}, pages 215--233,
  2005.

\bibitem{NS:2009}
Phong~Q. Nguyen and Damien Stehl\'{e}.
\newblock {An LLL algorithm with quadratic complexity}.
\newblock {\em SIAM Journal on Computing}, 39(3):874--903, 2009.

\bibitem{Nguyen:2004}
PhongQ Nguyen and Damien Stehlé.
\newblock Low-dimensional lattice basis reduction revisited.
\newblock 3076:338--357, 2004.

\bibitem{IkeAndMike2000}
Michael~A. Nielsen and Issac~L. Chuang.
\newblock {\em Quantum Computation and Quantum Information}.
\newblock Cambridge University Press, Cambridge, UK, 2000.

\bibitem{Paetznick2013}
Adam Paetznick and Krysta~M. Svore.
\newblock {Repeat-Until-Success: Non-deterministic decomposition of
  single-qubit unitaries}.
\newblock November 2013.

\bibitem{PZ:89}
Michael~E. Pohst and Hans Zassenhaus.
\newblock {\em Algorithmic Algebraic Number Theory}, volume~30 of {\em
  Encyclopedia of Mathematics and its Applications}.
\newblock Cambridge University Press, 1989.

\bibitem{Reiner:1975}
I.~Reiner.
\newblock {\em Maximal Orders}.
\newblock London Mathematical Society monographs series: London Mathematical
  Society. Clarendon Press, 1975.

\bibitem{RS}
Neil~J. Ross and Peter Selinger.
\newblock {Optimal ancilla-free Clifford+T approximation of z-rotations}.
\newblock 2014.

\bibitem{Sarnak:2015}
Peter Sarnak.
\newblock {Letter to Aaronson and Pollington on the Solvay-Kitaev Theorem and
  Golden Gates}.
\newblock Available at \url{http://publications.ias.edu/sarnak/paper/2637},
  February 2015.
\newblock Manuscript.

\bibitem{S}
Peter Selinger.
\newblock {Efficient Clifford+T approximation of single-qubit operators}.
\newblock December 2012.

\bibitem{Simon:2002}
Denis Simon.
\newblock {Norm equations in relative number fields using S-units}.
\newblock {\em Mathematics of Computation}, 71(239):1287--1305, 2002.

\bibitem{Stea96}
Andrew~M. Steane.
\newblock Multiple particle interference and quantum error correction.
\newblock {\em Proc. R. Soc. Lond. A}, 452:2551--2576, 1996.
\newblock quant-ph/9601029.

\bibitem{Stea03b}
Andrew~M. Steane.
\newblock Overhead and noise threshold of fault-tolerant quantum error
  correction.
\newblock {\em Phys. Rev. A}, 68:042322, 2003.
\newblock quant-ph/0207119.

\bibitem{Storjohann:2000}
Arne Storjohann.
\newblock {\em Algorithms for matrix canonical forms}.
\newblock PhD thesis, ETH Z\"{u}rich, 2000.

\bibitem{SL:96}
Arne Storjohann and George Labahn.
\newblock {Asymptotically fast computation of Hermite normal forms of integer
  matrices}.
\newblock In {\em Proceedings of the 1996 International Symposium on Symbolic
  and Algebraic Computation, {ISSAC} '96}, pages 259--266, 1996.

\bibitem{PARI}
{The PARI~Group}, Bordeaux.
\newblock {\em {PARI/GP version {\tt 2.7.0}}}, 2014.
\newblock available from \url{http://pari.math.u-bordeaux.fr/}.

\bibitem{Thiel:1995}
Christoph Thiel.
\newblock {\em On the complexity of some problems in algorithmic algebraic
  number theory}.
\newblock PhD thesis, Universitat des Saarlandes, 1995.

\bibitem{Vigneras:1980}
M.-F. Vigneras.
\newblock {\em Arithmetique des algebres de quaternions}.
\newblock Lecture Notes in Mathematics. Springer-Verlag Berlin Heidelberg,
  1980.

\bibitem{GG:2003}
Joachim von~zur Gathen and J\"{u}rgen Gerhard.
\newblock {\em {Modern Computer Algebra}}.
\newblock Cambridge University Press, 2nd edition, 2003.

\bibitem{vzGP:2001}
Joachim von~zur Gathen and Daniel Panario.
\newblock {Factoring polynomials over finite fields: a survey}.
\newblock {\em Journal of Symbolic Computation}, 31:3--17, 2001.

\bibitem{Washington:82}
Lawrence~C. Washington.
\newblock {\em Introduction to Cyclotomic Fields}.
\newblock Springer, 1982.

\end{thebibliography}

\clearpage


\section*{Appendix A. Other examples of using approximation algorithm } \label{sec:appendix}

In this appendix we show results of running our approximation algorithm for a series of quaternion gate sets specification. We recall that the quaternion gate set specification is : 

\qgc* 

Our family of examples if parametrized by $n$. Number field $F$ corresponds to the real subfield of cyclotomic field $\q\at{\zeta_n+\zeta^{-1}_n}$ with primitive element $\theta$. Element $a$ of $F$ is chosen such that relative extension $F\at{a}$ is a cyclotomic field, $b=-1$. Approximation part of our algorithm is independent on maximal order $\mo$, so we don't restrict ourselves to any specific choice of $\mo$. Set $S$ contains one prime ideal above $2$. If there is more than one such ideal, we choose it at random. The table below summarizes the series of examples we tried using our algorithm. 

\begin{table}[!ht]
\begin{centering}
\includegraphics{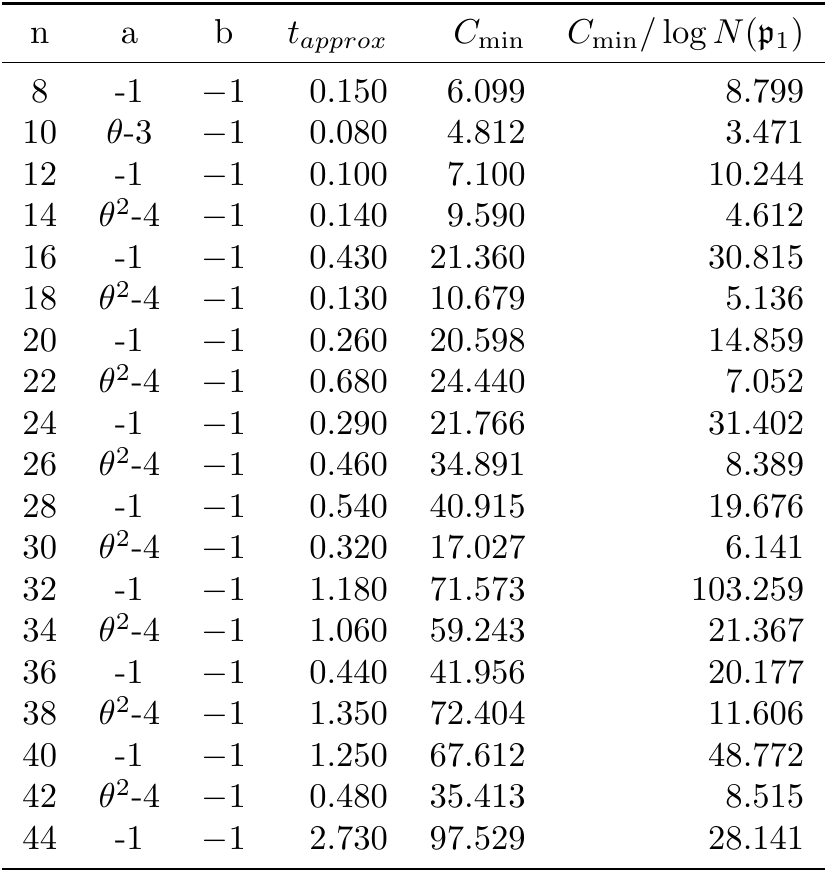}
\end{centering}
\end{table}

\label{page:table-desc}
The table summarizes the results of running offline part of our algorithm : $n$ is the number of example in the family described above; $t_{approx}$ is the time in seconds spent on the offline stage required for the approximation part; $C_{\min}$ is the additive constant appearing in Theorem~\ref{thm:main}; $C / \log N\at{\p_1}$ is the ratio between $C_{\min}$ and the $\log$ of the norm of the ideal in~$S$. Next we show tables with the averages over $100$ runs of our algorithm with different target precisions $\ve$ and target cost vector $\at{L_1}$ and target angle $\vp = 0.1$ for each example for $n=8,10,\ldots,44$.  All columns of the tables except $N_{tr,\min}$ and $N_{tr,\max}$ are averages over 100 runs of the algorithm; $\rho\at{U_q,R_z\at{\phi}}$ is the obtained quality of approximation; $N_{tr,\min}, N_{tr,\max}, N_{tr,avg}$ are minimum, maximum and average of the number of the main loop iterations in the procedure APPROXIMATE over all samples;   $t_{approx}$ is time in seconds spent on online part of the approximation stage of the algorithm.

\begin{table}[!ht]

\begin{centering}
\includegraphics{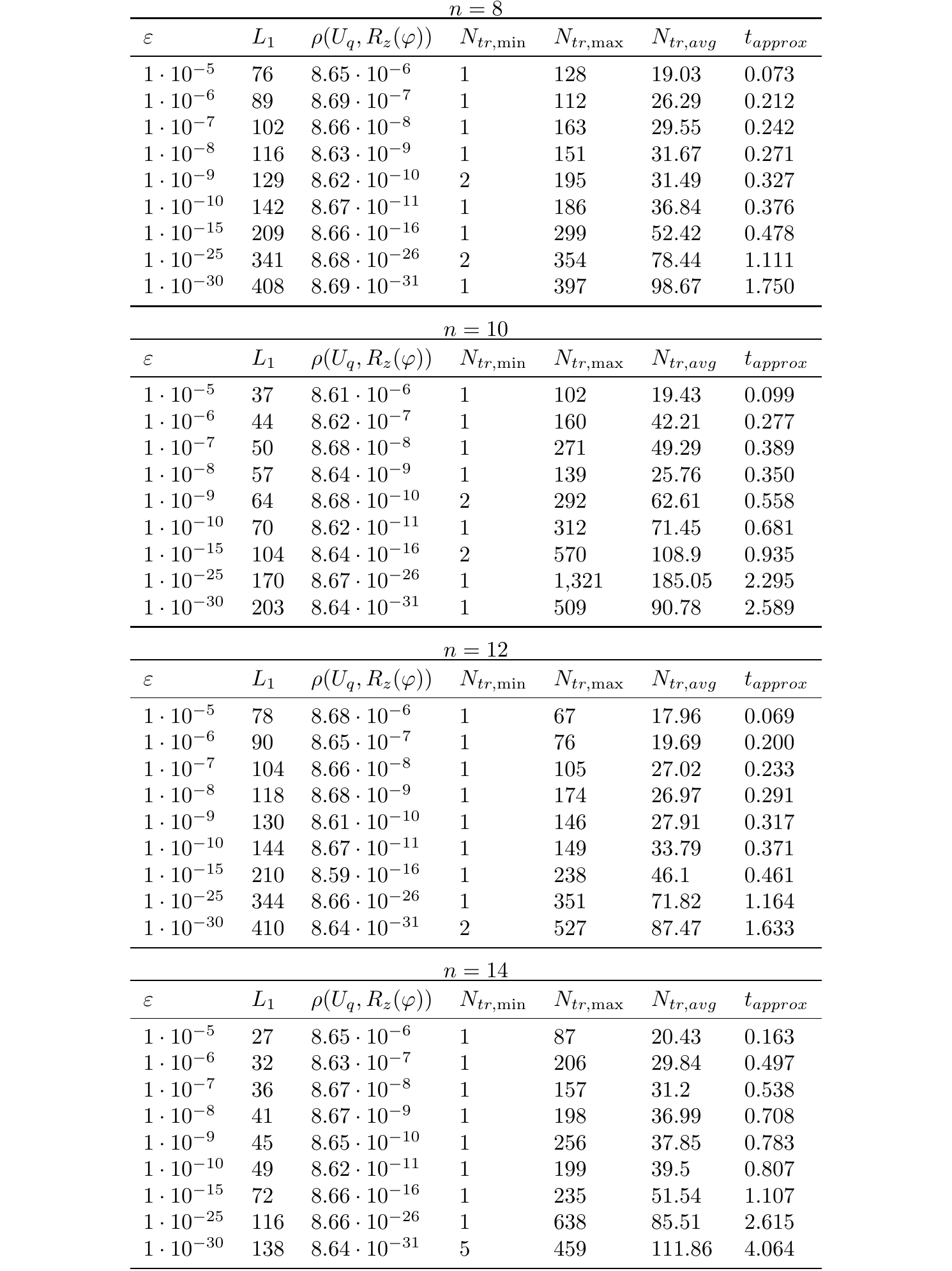}
\end{centering}
\caption{See Page \pageref{page:table-desc} for the description of the columns of the tables.}
\end{table}

\begin{table}[!ht]
\begin{centering}
\includegraphics{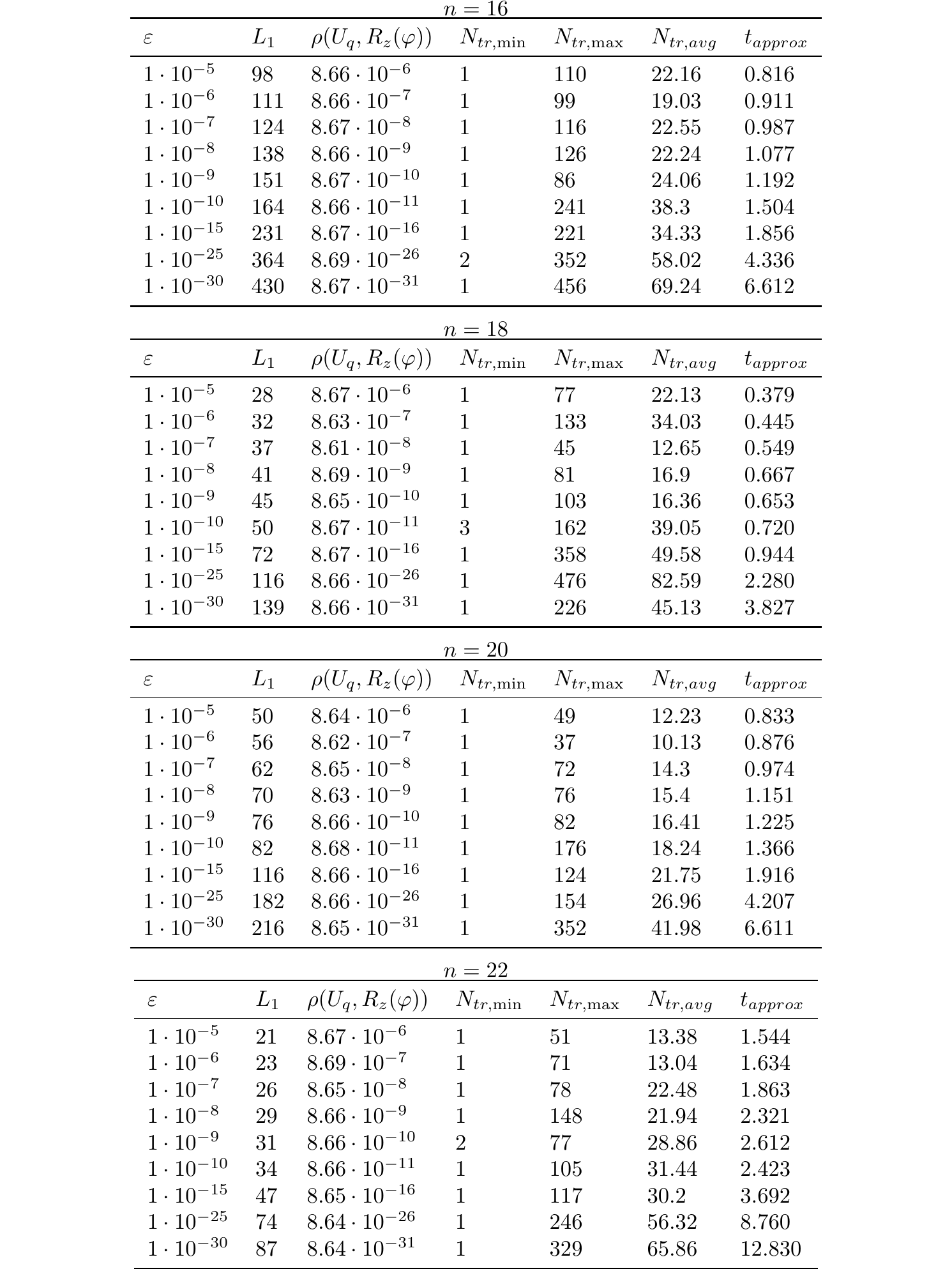}
\end{centering}
\caption{See Page \pageref{page:table-desc} for the description of the columns of the tables.}
\end{table}

\begin{table}[!ht]
\begin{centering}
\includegraphics{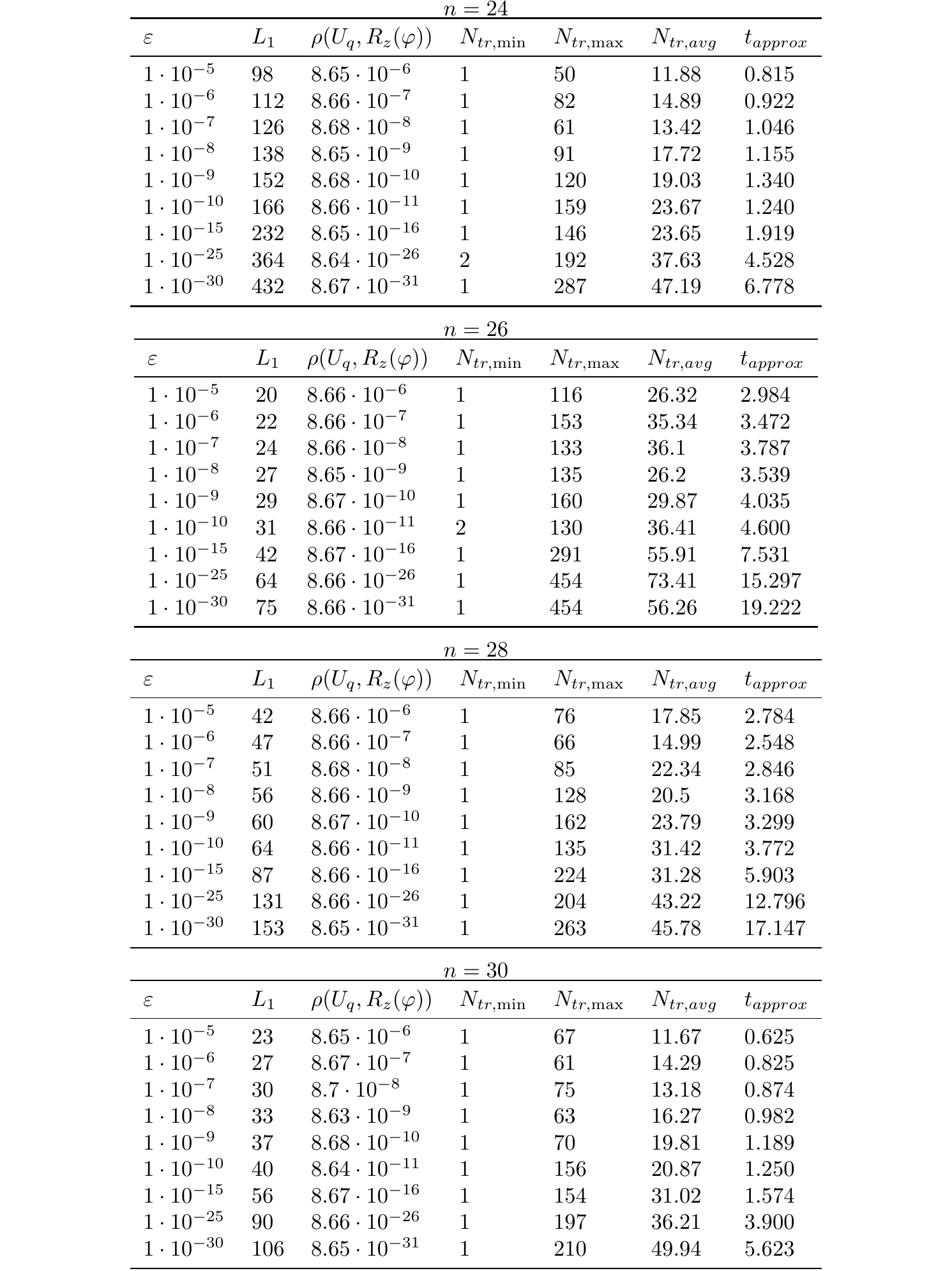}
\end{centering}
\caption{See Page \pageref{page:table-desc} for the description of the columns of the tables.}
\end{table}

\begin{table}[!ht]
\begin{centering}
\includegraphics{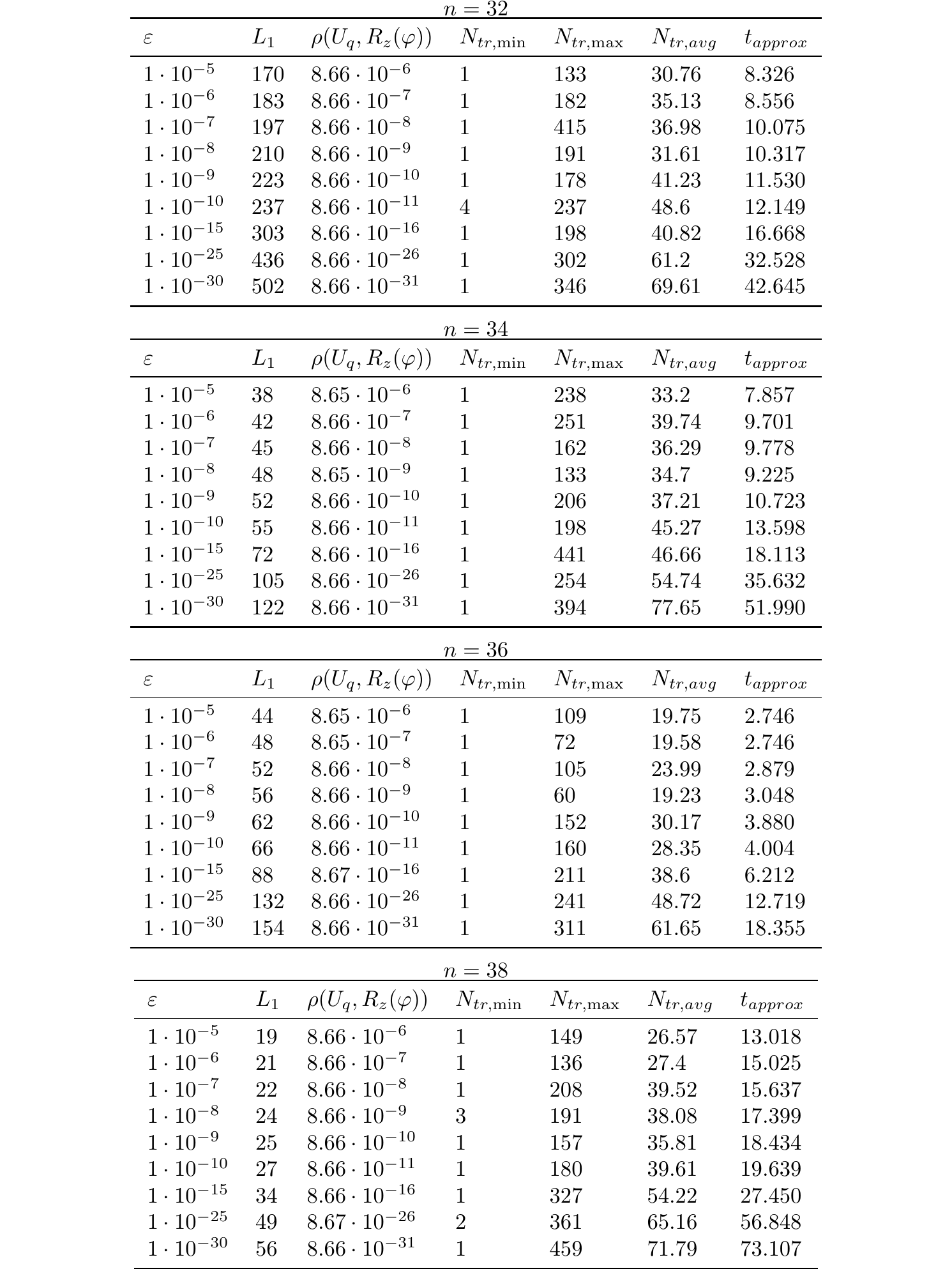}
\end{centering}
\caption{See Page \pageref{page:table-desc} for the description of the columns of the tables.}
\end{table}

\begin{table}[
!ht]
\begin{centering}
\includegraphics{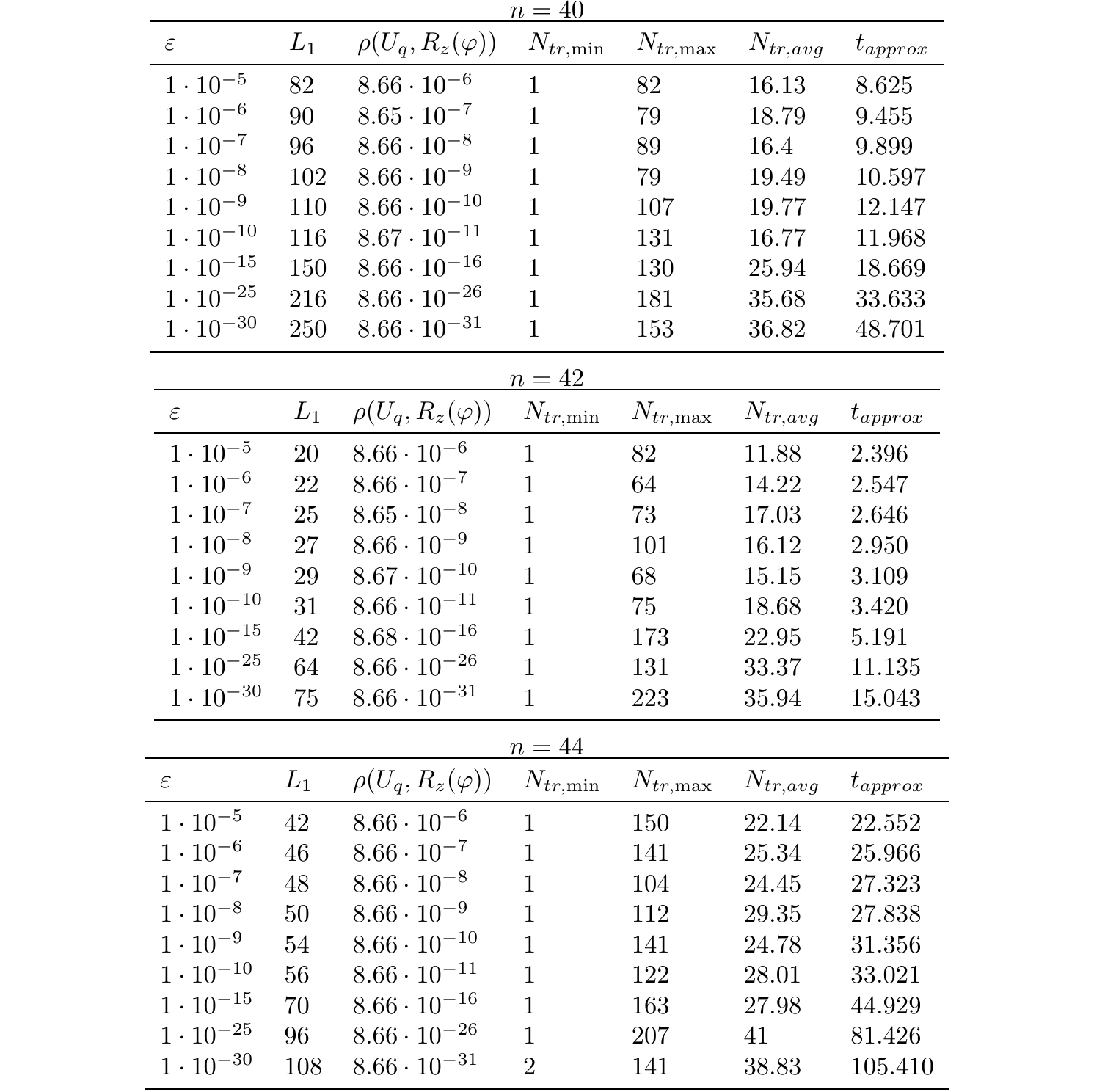}
\end{centering}
\caption{See Page \pageref{page:table-desc} for the description of the columns of the tables.}
\end{table}

\clearpage

\section*{Appendix B. Implementation details } \label{sec:appendix-b}

In this appendix we provide a pseudo-code for the versions of procedures UNIT-ANDJUST~(Fig.~\ref{fig:proc-unit-adjust-impl}) and RANDOM-INTEGER-POINT~(Fig.~\ref{fig:proc-find-point-impl}) that we use in the implementation of our algorithm. They are necessary to reproduce values of additive constant listed in Section~\ref{sec:examples} and Appendix A on Page~\pageref{sec:appendix}. They both rely on the Nearest Plane Algorithm~\cite{BABAI:1986} shown in Fig.~\ref{fig:proc-nearest-plane-impl}). For given target vector $t$ the Nearest Plane algorithm finds unique lattice vector $Bv$ inside $t + \CC\at{B^{*}}$. See Section~\ref{sec:lattice} for the definition of $\CC\at{B^{*}}$ and other related definitions. 

\begin{figure}[ht!]
\protect\caption{\label{fig:proc-nearest-plane-impl} NEAREST-PLANE\cite{BABAI:1986} procedure used in our implementation }
\rule[0.5ex]{1\columnwidth}{1pt}

\begin{centering}


\begin{algorithmic}[1]
\fxinput $B = b_1, \ldots, b_n $ -- basis of a lattice of rank $n \le d$ in $\r^d$
\inp $t$ -- vector in $\r^d$
\Procedure{NEAREST-PLANE}{}
\offline
\State $b_1^{*}, \ldots,b_{n}^{*} \gets $\Call{GRAM-SHCMIDT}{$B$} \Comment Computes GSO of B, see Section~\ref{sec:lattice} 
\spec $b_1^{*}, \ldots,b_{n}^{*}$ -- GSO of B
\online 
\State $v = \at{v_1,\ldots,v_n}$ is a vector of integers of length $n$
\State $r \gets t$
\For{ $k \in 1,\ldots,n$ }
\State $v_{n+1-k} \gets \rnd{\ip{r,b^{*}_{n+1-k}}/\nrm{b^{*}_{n+1-k}}^2}$ \Comment ensures $\ip{r,b^{*}_{n+1-k}}/ \nrm{b^{*}_{n+1-k}}^2 \le 1/2 $
\State $r \gets r - v_{n+1-k} b_{n+1-k} $
\EndFor
\State \Return $v$
\EndProcedure
\out $v$ such that $Bv \in t + \CC\at{B^{*}}$
\end{algorithmic}

\par\end{centering}

\rule[0.5ex]{1\columnwidth}{1pt}

\end{figure}


\begin{figure}[ht!]

\protect\caption{\label{fig:proc-unit-adjust-impl} UNIT-ADJUST procedure used in our implementation. We compute with $\log \delta_k$. In the pseudo-code $\log \delta_k$ should be perceived as a variable name. We also try to minimize $\sum \log \delta_k$ by applying different basis reduction techniques to $B$ and picking the reduced basis that gives the smallest $\sum \log \delta_k$. Variable $\log \delta_0$ is chosen based precision used for arithmetic operations in the algorithm. }
\rule[0.5ex]{1\columnwidth}{1pt}

\begin{centering}


\begin{algorithmic}[1]
\fxinput $F$, $u_1,\dotsc,u_{d-1}$
\Statex[1] $F$ is a totally real number field of degree $d$
\Statex[1] $u_1,\dotsc,u_{d-1}$ form a system of fundamental units of $F$
\inp $\rg{t_1}{t_d}$
\Statex[1] $t_1,\ldots,t_d$ -- real numbers of the same precision $n$
\Procedure{UNIT-ADJUST}{}
\offline
\State $u_1,\dotsc,u_{d-1} \leftarrow$ \Call{INDEPENDENT-UNITS}{$\z_F$} 
\State \Comment $\langle u_1,\ldots,u_{d-1} \rangle $ is a finite index subgroup of  $\z_F^{\times}$

\State $B \leftarrow 
\pmat{\log|\sig_1(u_1)| & \cdots & \log |\sig_1(u_{d-1})| \\ 
\vdots  & & \vdots \\ 
\log|\sig_d(u_1)| & \cdots & \log |\sig_d(u_{d-1})| }$

\State $b_1^{*}, \ldots,b_{d-1}^{*} \gets $\Call{NEAREST-PLANE}{$B$} \Comment Computes GSO of B, see Section~\ref{sec:lattice} 

\State $\log \delta_k \leftarrow \frac{1}{2} \sum_{j=1}^{d-1} \abs{\ip{b_j^{*},e_k}}$ \Comment $e_1,\ldots,e_d$ is the standard basis of $\r^d$

\spec $\rg{ \log \delta_1}{ \log \delta_d}$ -- real numbers such that $\log \delta_k > 0$ and  

\Statex[4] $\CC(B^{*})\subset [-\log \delta_1,\log\delta_1] \times \cdots \times [-\log \delta_d,\log\delta_d]$ \Comment see Section~\ref{sec:lattice}

\online
\State \assert $\abs{\prg{t_1}{t_d}} < \log \delta_0$ \Comment Make sure the point is in the span of the lattice 
\State $m \leftarrow $\Call{NEAREST-PLANE}{t} \Comment Find $m$ such that $Bm \in t + \CC(B^{*}) $
\State  $u\leftarrow \prod_{i=1}^{d-1}u_i^{m_i}$
\EndProcedure
\out unit $u \in \z_F^\times$ such that
\Statex for all $k=\rg{1}{d}$ : $\abs{\log\abs{\s_k\at{u}} - t_k } \le \log \delta_k $

\end{algorithmic}

\par\end{centering}

\rule[0.5ex]{1\columnwidth}{1pt}

\end{figure}


\begin{figure}[ht!]

\protect\caption{\label{fig:proc-find-point-impl} RANDOM-INTEGER-POINT procedure used in our implementation. In our implementation, we try to minimize $\sum \log R_k$ by applying different basis reduction techniques to $B$ and picking the reduced basis that gives the smallest $\sum \log R_k$. }
\rule[0.5ex]{1\columnwidth}{1pt}

\begin{centering}


\begin{algorithmic}[1]
\fxinput $\z_K$ -- ring of integers of a CM field of degree $2d$
\inp $\vp,\ve$ -- real numbers, $r$ -- totally positive element of $\z_F$
\Procedure{RANDOM-INTEGER-POINT}{}
\offline
\State $z_1,\ldots,z_{2d}$ is a fixed integral basis of $\z_F$
\State $B = [\boldsymbol\sigma\at{z_1},\ldots, \boldsymbol\sigma\at{z_{2d}}]$ is a basis of the lattice associated to $\z_K$,
\State $b_1^{*}, \ldots,b_{2d}^{*} \gets $\Call{NEAREST-PLANE}{$B$} \Comment Computes GSO of B, see Section~\ref{sec:lattice} 
\State $R^{\min}_k = \frac{1}{2}\sqrt{ \max_j \abss{\ip{b^{*}_j,e_{2k-1}}} + \max_j \abss{\ip{b^{*}_j,e_{2k}}} }$ for $k=1,\ldots,d$
\Statex \Comment $e_1,\ldots,e_{2d}$ is the standard basis of $\r^{2d}$
\State $C_{\min},C_{\max}, R^{\max} \gets $ \Call{SUITABLE-Q-NORM}{$\rg{\p_1}{\p_M},R^{\min}$} \Comment See Fig.~\ref{fig:proc-suitable-q-norm}
\State \Return $C_{\min},C_{\max}$
\spec $C_{\min},C_{\max}$ -- real numbers 

\online
\State $R \gets \sqrt{\s_1\at{r}}, H' \gets R\ve\sqrt{4-\ve^2},\,N_{\max}=\floor{H'/\ve^2 R} $ \Comment See Fig.\ref{fig:sampling}
\State $z_c \gets R(1-3\ve^2/4)e^{-t\vp/2},\,\Delta z \gets i e^{-i\vp/2} \ve^2 R /2  $ \Comment See Fig.\ref{fig:sampling}
\State $ \Delta Z \gets \at{\re \Delta z, \im \Delta z, 0, \ldots, 0}  \in \r^{2d}, Z_c \gets \at{\re z_c, \im  z_c, 0, \ldots, 0}  \in \r^{2d} $

\State $N \gets$ random integer from the interval $[ -N_{\max}+1 , N_{\max}-1 ]$ 
\State $ t \gets Z_c + N \Delta Z $ 
\State $ m \gets$\Call{NEAREST-PLANE}{$t$}  \Comment Find $m$ such that $Bm \in t + \CC(B^{*}) $
\State $z \gets   m_1 z_1 + \ldots + m_{2d} z_{2d}$  
\State \Return z  \Comment $\vecs\at{z} \in S_{r,\vp,\ve}$
\EndProcedure
\out $z$, the element of $\z_K$ such that 
\Statex[2] $\abs{\s_{k,+}\at{z}} \le R_k$ for  $k=2,\ldots,d$  and  $\re \at{\at{\s_{1,+}\at{z}-z_0}e^{-i\vp/2} } \ge 0, \abs{ \s_{1,+}\at{z} } \le R$  
\Statex[2] where $z_0 = R\at{1-\ve^2}e^{-i\vp/2}$ (see Fig.~\ref{fig:inequality} for the visualization of the condition on $\s_{1,+}\at{z}$)
\end{algorithmic}

\par\end{centering}

\rule[0.5ex]{1\columnwidth}{1pt}

\end{figure}

\end{document}